\def\notes{1}
\def\stoc{0}
\newcommand{\mnote}[1]{\ifnum\notes=1{{\sf\color{red} [Madhu: #1]}}\fi}
\newcommand{\enote}[1]{\ifnum\notes=1{{\sf\color{blue} [Elad: #1]}}\fi}
\newcommand{\fullnote}[1]{\ifnum\stoc=0 {{#1}}\fi}
\newcommand{\stocnote}[1]{\ifnum\stoc=1 {{#1}}\fi}
\newtheorem{thm}{Theorem}[section]
\newtheorem{theorem}[thm]{Theorem}
\newtheorem{definition}[thm]{Definition}
\newtheorem{proposition}[thm]{Proposition}
\newtheorem{lemma}[thm]{Lemma}
\newtheorem{claim}[thm]{Claim}
\mathchardef\mhyphen="2D
\newcommand{\A}{\mathcal{A}}
\newcommand{\B}{\mathcal{B}}
\newcommand{\C}{\mathcal{C}}
\newcommand{\elow}{E_{\rm low}}
\newcommand{\dlow}{D_{\rm low}}
\newcommand{\chain}{\mathrm{Chain}}
\newcommand{\lgt}{\mathrm{lgt}}
\newcommand{\sz}{\mathrm{sz}}
\newcommand{\col}{\mathrm{Col}}
\newcommand{\calf}{\mathcal{F}}
\newcommand{\Z}{\mathbb{Z}}
\newcommand{\eqdef}{\stackrel\triangle=}
\newcommand{\calp}{{\cal P}}
\newcommand{\calh}{{\cal H}}
\newcommand{\calu}{{\cal U}}
\newcommand{\cals}{{\cal S}}
\newcommand{\U}{\mathscr{C}}
\newcommand{\R}{\Re}
\newcommand{\E}{\mathbf{E}}
\newcommand{\argmax}{\mathop{\rm argmax}}
\newcommand{\from}{\leftarrow}
\begin{document}
\ifnum\stoc=0
\title{Deterministic Compression with Uncertain Priors}

\author{Elad Haramaty\thanks{Department of Computer Science, Technion, Haifa.
{\tt eladh@cs.technion.ac.il}. Work done in part when this author was visiting Microsoft Research New England.}
\and Madhu Sudan\thanks{Microsoft Research New England,
One Memorial Drive, Cambridge, MA 02139, USA.
 {\tt madhu@mit.edu}.}
}

\maketitle
\else

\title{Deterministic Compression with Uncertain Priors}
\numberofauthors{2}

\author{
\alignauthor
Elad Haramaty\titlenote{Work done in part when this author was visiting Microsoft Research New England.}\\
\affaddr{Department of Computer Science, Technion, Haifa.}\\
\email{eladh@cs.technion.ac.il}.
\alignauthor
Madhu Sudan\\
\affaddr{Microsoft Research New England,
One Memorial Drive, Cambridge, MA 02139, USA.}\\
\email{madhu@mit.edu}
}

\maketitle

\fi

\nocite{shannon}
\nocite{LZ77}
\nocite{CoverThomas}
\nocite{BravermanRao}
\nocite{Linial}
\nocite{ColeVishkin}

\begin{abstract}
We consider the task of compression of information when the
source of the information and the destination do not agree
on the prior, i.e., the distribution from which the information
is being generated.
This setting was considered previously by
Kalai et al. (ICS 2011) who suggested that this was a natural
model for human communication, and efficient schemes for compression
here could give insights into the behavior of natural languages.
Kalai et al. gave a compression scheme with
nearly optimal performance, assuming the source and destination
share some uniform randomness. In this work we explore the
need for this randomness, and give some non-trivial upper bounds
on the deterministic communication complexity for this problem.
In the process we introduce a new family of structured graphs
of constant fractional chromatic number whose (integral) chromatic number
turns out to be a key component in the analysis of the communication
complexity. We provide some non-trivial upper bounds on the chromatic
number of these graphs to get our upper bound, while using
lower bounds on variants of these graphs to prove lower bounds
for some  natural approaches to solve the communication complexity
question. Tight analysis of communication complexity of our problems
and the chromatic number of the underlying graphs remains open.
\end{abstract}

\noindent {\bf Keywords:} Source coding, communication complexity, graph coloring

\section{Introduction}

The following example illustrates the questions studied
in this paper: Suppose Alice and Bob have a ranking of a set $U$ of $N$ elements,
say,
movies.
Specifically Alice's rank function is $A: [N] \to U$ and
Bob's rank function is $B: [N] \to U$ where $[N] = \{1,\ldots,N\}$ and
$A$ and $B$ are bijections with $A(i)$ naming the $i$th ranked movie
in Alice's ranking. Suppose further that Alice and Bob know that
their rankings are ``close'', specifically for every $x \in U$,
$|A^{-1}(x) - B^{-1}(x)| \leq 2$. How many bits does Alice have to send to Bob
so that Bob knows her top-ranked movie, i.e., $A(1)$? On the one
hand Bob knows $A(1)$ is one of the three element set
$S_1 = \{B(1),B(2),B(3)\}$ and so the information-content
from his point of view is bounded by $\log_2 3$ bits. Indeed
this leads to a randomized communication scheme, with Alice and Bob
sharing common randomness with $O(1)$ bits of communication. However
the deterministic communication complexity of the question is not
as easily settled. Part of the reason is that
Alice doesn't know $S_1$ and so has to ``guess'' it to communicate
$A(1)$. Still she is not clueless: She knows it is contained in
$T_2 = \{A(1),\ldots,A(5)\}$ and perhaps this can help her
communicate $A(1)$ efficiently to Bob. The question of interest
to us in this work is: Can Alice communicate $A(1)$ to Bob with
a number of bits that is independent of $N$? (Unfortunately, we do
not answer this question, though we do give a non-trivial upper bound.
We will elaborate on this later.)

The question above is a prototypical example of ``communication amid
uncertainty'', where the communicating players have fairly good information
about each other (in the example above Alice and Bob know each others
ranking of each movie to within $\pm 2$), but are not sure of each
other's information
and do not have a common-ground to base communication on.
One way to proceed
in such settings is for the players to communicate enough information
to agree on a common prior and then to use classical compression; but
this would be excessively wasteful for, say, a one-time communication.
One could hope for a direct solution which aims to establish communication
without requiring agreement on the prior, and indeed this was the
question studied by Kalai et al.~\cite{JKKS}.
Kalai et al. argue that
this models many natural forms of communication among humans where humans
are uncertain about each other's contexts, but try to communicate
efficiently despite the lack of a perfect common basis,
or without trying
to first agree on the prior. They argue in
particular that this leads to certain phenomena in natural communication
systems (natural language) that are not seen in carefully
designed communication systems (where perfect agreement on the prior
can be assumed).

The specific problem they consider is the following. Suppose
Alice wishes to communicate a message $m \in U$ to Bob, where
Alice is operating under the belief that the message is chosen
according to the probability distribution $P$ on $U$.
Bob on the other hand operates under the belief that the messages
are chosen according to a distribution $Q$ on $U$.
Both players are aware that their distributions may not be identical
but operate under the ``knowledge'' that their distributions are
close. Specifically, we say that $P$ and $Q$ are $\Delta$-close
if for all $m \in U$, we have $\log_2 P(m)/Q(m), \log_2 Q(m)/P(m) \leq
\Delta$.
(We use this to also define our distance between distributions:
The distance between $P$ and $Q$, denoted $\delta(P,Q)$, is defined
to be the minimum $\Delta$ such that $P$ and $Q$ are $\Delta$-close.)
The question Kalai et al. investigate is: What is the expected number of
bits, under distribution $P$, that Alice has to send to Bob so that
Bob can recover the message.
(We note that similar questions, in the interactive setting, were also
studied in the works of Harsha et al.~\cite{HJMR} and 
of Braverman and Rao~\cite{BravermanRao},
though their motivations were quite
different. Both works focus on the setting when sender and receiver have
different priors and are trying to generate a random
variable that is maximally correlated under their priors. In our
case the sender gets a concrete message from its prior and wishes
to communicate it. 
The focus in both works is on randomized solutions that get the 
communication complexity down to the minimum possible amount, whereas
our thrust is to use less (or no) randomness at the expense of
slightly larger communication complexity.)

Without any knowledge of $P$ and $Q$, it is still trivial for Alice
to communicate $m$ with $\log N$ bits. On the other hand, if $\Delta = 0$
(and so Alice and Bob have $P = Q$), then standard compression can
communicate this information with $H(P) + O(1)$ bits (where $H(P) =
\sum_{m \in U} P(m) \log_2 (1/P(m))$ denotes the binary entropy of $P$)
which may be much smaller than $\log N$. Kalai et al. show that
if Alice and Bob share some common random bits, then they can communicate
with each other with $H(P) + 2 \Delta + O(1)$ bits. This gives a graceful
degradation of performance when $\Delta > 0$, and indeed in many
natural instances of communication where $\Delta$ may be large (say 50),
this gives a very efficient communication mechanism amid large amounts
of uncertainty.

The assumption that Alice and Bob share a common random string is however a
major one, and is unclear how to achieve it in ``nature''.
This assumption affects the solution both technically and conceptually.
We discuss the technical implication first.
Technically,
this assumption is not made to alleviate computational complexity
considerations, but is rather to overcome a fundamental challenge.
The randomness is independent of $P$ {\em and} $Q$ and so effectively
manages to convert a solution that works for most pairs of Alice and
Bob (or rather their beliefs $P$ and $Q$) to one that works for every
pair $P$ and $Q$, with high probability over the randomness.
Unfortunately, any attempt to fix the random string leads back to a
solution that only works for most pairs of beliefs $(P,Q)$ (over any
distribution over the beliefs), but not one that works for every pair.
Thus the technical question that remains open is: ``Is there
a single solution that will work for every choice of $P$ and $Q$ with
performance roughly that of Kalai et al.?''

We now return to the conceptual implications of the assumption
of shared perfect randomness.
In terms of the motivating phenomenon of ``natural
communication among humans'', Kalai et al. suggest the presence of a
common dictionary (of say English) as presenting such shared randomness.
They do point out, however, that the assumption that such a dictionary is
a random string is mainly a convenient technical assumption, rather than
an empirically justifiable one. In particular, the assumption that
our beliefs are independent of the dictionary is not easy to justify.
Indeed the contrary may well be true: Our dictionary may well be strongly
influenced by our beliefs.
Thus one could ask - can one weaken the assumption on the shared
randomness to some much weaker notion of shared context? Our work
explores this question and gives some partial answers, while also
highlighting some intriguing communication
complexity/graph-theoretic questions that are raised by this line of work.

\subsection{Formal definitions and main results}

We start by defining the notion of an ``uncertain compression scheme''.

We let $\{0,1\}^*$ denote the set of all finite length binary strings.
For $x \in \{0,1\}^*$, let $|x|$ denote its length.
Throughout $U$, the set of all messages, will be a finite set of
size $N$.
Let $\calp(U)$ denote the space of all probability distributions over
$U$.

\begin{definition}[(Basic) Uncertain Compression Scheme]
For positive real $\Delta$
an {\em Uncertain Compression Scheme (UCS)} for distance
$\Delta$ over
the universe $U$ is given by a pair of $E:\calp(U) \times U \to
\{0,1\}^*$ and $D:\calp(U) \times \{0,1\}^* \to U$ that satisfy the
following correctness condition:
For every pair of distributions $P,Q \in \calp(U)$
that are $\Delta$-close and for every $m \in U$,
we have $D(Q,E(P,m)) = m$.
The {\em performance} of a
UCS $(E,D)$ is
given by the function
$L:\calp(U) \to \R^+$,
where $L(P) =
\E_{m \from_P U} [|E(P,m)|]$, i.e., the expected length of
the encoding under the distribution $P$.
We refer to such a scheme as a $(\Delta,L)$-UCS.
\end{definition}

In English, the definition above explicitly provides the distribution
as input to the encoding and decoding schemes, and expect the schemes
to work correctly even if the distributions used by the encoder and
decoder are not the same, as long as they are $\Delta$-close to
each other.
While in general we would like compression schemes which work for
all possible distributions $P,Q$ that are within $\Delta$ of each other,
and with no error (as expected in the definition above), some of
our schemes are weaker and work with some error, or only for some
class of distributions. We define such general UCS's below.

\begin{definition}[(General) Uncertain Compression Scheme]
For positive real $\Delta$ (for distance),
$\epsilon \in [0,1]$ (for error),
a class of distributions $\calf \subseteq \calp$,
and performance function $L:\calf \to \R^+$
a {\em $(\Delta,\epsilon,\calf,L)$-Uncertain Compression Scheme (UCS)}
over
the universe $U$ is given by a pair of $E:\calf \times U \to
\{0,1\}^* \cup \{\bot\}$ and $D:\calf \times \{0,1\}^* \cup \{\bot\}
\to U \cup \{\bot\}$ that satisfy the
following conditions:
\begin{enumerate}
\item
For every pair of distributions $P,Q \in \calf$
that are $\Delta$-close and for every $m \in U$, it is the case that
if $E(P,m) \ne \bot$ then $D(Q,E(P,m)) = m$. Furthermore $D(\bot) = \bot$.
\item
$\Pr_{m \from_P U} [E(P,m) = \bot] \leq \epsilon.$
\item
For every $P \in \calf$, we have
$\E_{m \from_P U} [|E(P,m)|] \leq L(P)$.
\end{enumerate}
\end{definition}

Note that we do not distinguish the two definitions above by name,
but rather just by the number of parameters. So if the number of
parameters is just two, then it is assumed that there is no error,
and the performance holds for all distributions.

We note that the definitions above only covers deterministic compression
schemes. A compression scheme with shared randomness can be defined
analogously, but we don't do so here.
We also stress that the choice of $P$ and $Q$ is ``worst-case''
within the family $\calf$ (as
formalized by the universal quantifier in the correctness condition).
There are no assumptions that $\calf$ is small (has only
finitely many elements), which tends to be the setting for
universal compression.
Similarly, we do not consider a sequence of messages that
need to be transmitted: Rather, we are considering one-shot
communication
with no assumptions on the distributions $P$ and $Q$, other
than that they are from $\calf$ and $\Delta$-close.

We recall that Kalai et al. present a
$(\Delta,H(P) + 2\Delta + c)$-UCS (with shared randomness)
for some constant $c \leq 3$.
We give two {\em deterministic} schemes in this paper, both having complexity
depending on $N$, but both using substantially less than
$\log N$ bits.

\begin{theorem}
\label{thm:first}
For every $\Delta \geq 0$, there
exists a $(\Delta,O(H(P) + \Delta + \log \log N))$-UCS,
i.e., a deterministic universal compression scheme that works
for all pairs $P,Q$ that are within distance $\Delta$ of each
other, and where
the expected length of encoding is at most $O(H(P) + \Delta + \log\log N)$.
\end{theorem}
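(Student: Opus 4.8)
The plan is to build the encoder and decoder around a careful \emph{dictionary} construction that turns the uncertain-prior problem into a graph-coloring problem on a structured graph, and then to exploit the slack of $\log\log N$ extra bits in the bound.

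\medskip

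\textbf{Setup via dyadic levels.} First I would reduce to the case where the encoder only needs to communicate, roughly, the quantity $\lfloor \log_2(1/P(m))\rfloor$, the ``level'' of the message $m$. Indeed, conditioned on the level being $\ell$, there are at most $2^{\ell+1}$ messages of that level, so $\ell+1$ bits of side information would identify $m$ exactly if Bob knew the level. Because $P$ and $Q$ are $\Delta$-close, the level of $m$ under $Q$ differs from its level under $P$ by at most $\Delta+1$; so it suffices for Alice to send (i) the level $\ell$ (or enough to pin it down, which costs about $O(H(P))$ bits in expectation since $\E[\ell]\approx H(P)$, plus an $O(\log\Delta)$ or $O(\Delta)$ term), and (ii) an identifier for $m$ within the set of level-$\ell$ messages that is robust to Bob's uncertainty about exactly which messages lie at level $\ell$. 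The point of the $\log\log N$ term is that there are at most $\log_2 N$ possible levels, and Alice can afford to spend $O(\log\log N)$ bits just to name which level she is in, or to index into a small family of hash functions, one per level.

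\medskip

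\textbf{The coloring / hashing step.} The heart of the argument is step (ii): within a level, Alice sees a set of candidate messages and Bob sees a slightly different (but overlapping, by $\Delta$-closeness) set, and Alice must send a short string that uniquely identifies $m$ to Bob even though she doesn't know Bob's set exactly. I would phrase this as follows: build a graph whose vertices are messages, and put an edge between two messages whenever they could plausibly be confused (i.e., there exists an admissible pair $(P,Q)$ and a pair of messages that land in the same candidate bucket). A proper coloring of this graph gives a deterministic identifier. The new ingredient, per the abstract, is that these confusion graphs have small \emph{fractional} chromatic number (matching the $O(1)$-bit randomized bound of Kalai et al.) but possibly larger integral chromatic number; one shows the integral chromatic number is $2^{O(\Delta+\log\log N)}$ or similar, which is where the $\log\log N$ enters additively after taking logs. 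Concretely, I expect one uses a \emph{greedy / iterative} coloring argument exploiting that the relevant graphs are sparse or have bounded ``local'' structure, combined with the fact that spending $O(\log\log N)$ extra bits lets one replace a single global coloring by a small number of partial colorings indexed by level.

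\medskip

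\textbf{Putting it together and the main obstacle.} The encoder $E(P,m)$ outputs: a prefix-free encoding of the level $\ell=\ell(P,m)$ (expected length $O(H(P))+O(\Delta)$), followed by the color of $m$ in the level-$\ell$ confusion graph (length $O(\Delta+\log\log N)$). The decoder $D(Q,\cdot)$ reads $\ell$, considers all messages whose level under $Q$ is within $\Delta+1$ of $\ell$, and among those picks the unique one with the transmitted color — uniqueness being exactly the correctness guarantee of the proper coloring. Correctness then reduces to the chromatic-number bound on the structured graphs, and the performance bound follows by linearity of expectation using $\E_{m\from_P}[\ell(P,m)]\le H(P)+O(1)$. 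The main obstacle, and the technically delicate part, will be establishing the chromatic-number bound: showing that the confusion graphs — which are the ``new family of structured graphs'' the abstract advertises — can be properly colored with $2^{O(\Delta+\log\log N)}$ colors. This requires identifying the right combinatorial structure (the graphs are defined by an interval/shift condition on ranks coming from $\Delta$-closeness) and then either a clever explicit coloring or an iterative sparsification argument; a naive union bound or a direct fractional-to-integral rounding would lose too much and only give a bound polynomial in $N$, so some genuine structural insight into these graphs is needed.
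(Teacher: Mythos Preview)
Your high-level architecture (transmit a dyadic level, then a short identifier robust to Bob's uncertainty) matches the paper, but the heart of the argument is misdiagnosed. You frame step (ii) as finding a \emph{fixed proper coloring} of a confusion graph on messages, with edges whenever ``there exists an admissible pair $(P,Q)$'' making them confusable, and you flag the chromatic-number bound on the ``new structured graphs'' as the main obstacle. Two problems with this. First, that confusion graph, as you define it, is essentially complete: for any two messages $m,m'$ there exist $\Delta$-close $P,Q$ placing both near the same level, so no fixed $P$-independent coloring of size $2^{O(\Delta+\log\log N)}$ can exist. Second, the ``structured graphs of constant fractional chromatic number'' in the abstract are the uncertainty graphs $\calu_{N,\ell,k}$ used for Theorem~\ref{thm:second} (the $\log^* N$ result), not for this theorem.

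The paper's actual mechanism for Theorem~\ref{thm:first} is simpler and crucially \emph{adaptive to $P$}: for each $\ell$ one fixes an $(N,\ell)$-isolating hash family $\calh_\ell = \{h_{1,\ell},\dots,h_{M,\ell}\}$ with $M \le 2^\ell \log N$ (a one-line probabilistic argument, Lemma~\ref{lem:isol}). Alice computes $\ell = \log(1/P(m)) + 2\Delta$ and the set $S = \{m' \ne m : P(m') \ge P(m)/2^{2\Delta}\}$ (so $|S| \le 2^{\ell-1}$), picks $j$ so that $h_{j,\ell}$ isolates $m$ from $S$, and sends $(j, h_{j,\ell}(m))$; this costs $O(\ell + \log\log N)$ bits. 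The decoder does \emph{not} look for the unique matching color; it outputs the $\hat m$ of maximum $Q$-probability among preimages of $z$ under $h_{j,\ell}$. Correctness then follows because $Q(\hat m) \ge Q(m)$ forces $P(\hat m) \ge P(m)/2^{2\Delta}$, hence $\hat m \in S \cup \{m\}$, and the isolating property rules out $S$. So the ``color'' Alice sends genuinely depends on her $P$ (through the choice of $j$), and the decoder's rule genuinely uses his $Q$; there is no global proper coloring and no delicate chromatic-number bound to establish here.
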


The dependence on $N$ of this scheme is non-trivial
and thus may even be reasonable in ``natural circumstances''.
%
However it is not clear if such a dependency on $N$ is necessary.
Motivated by the quest to understand the dependence on $N$ more
closely, we explore schemes whose performance is not necessarily
linear in $H(P)$. Simultaneously we relax our schemes to allow
them to ``drop'' messages with $\epsilon$ probability.
We note
that if we don't do the latter, then the former is not really
a relaxation: Any error-free scheme with superlinear dependence on
$H(P)$ can be converted to one with linear dependence on $H(P)$
by a simple reduction (see Lemma~\ref{lem:linear}).

Our next theorem gives
a scheme
that is weaker than the one from Theorem~\ref{thm:first}
in its dependence on the entropy $H(P)$ and in that it
errs with non-zero probability.
But it does achieve significantly better dependence on $N$.

\begin{theorem}
\label{thm:second}
For every $\epsilon > 0$ and $\Delta \geq 0$ there exists
a $(\Delta,\epsilon, \calp(U),
\exp\left(H(p)/\epsilon+\Delta\log^{*}N\right))$-UCS,
i.e., the scheme has error probability at most $\epsilon$,
it works for all pairs of distributions $P,Q$ within distance $\Delta$
and the expected length of the encoding is at most
$\exp\left(H(p)/\epsilon+\Delta\log^{*}N\right)$.
\end{theorem}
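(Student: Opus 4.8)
The plan is to first spend the $\epsilon$ error budget to make the problem essentially finite, then to reduce the residual task to properly colouring a highly structured ``confusion graph'', and finally to colour that graph by an iterated, Cole--Vishkin/Linial style colour reduction that runs for $O(\log^{*}N)$ rounds. For the truncation step, since $\E_{m\from_P U}[\log_{2}(1/P(m))]=H(P)$, Markov's inequality gives that $S_{P}:=\{m:\log_{2}(1/P(m))\le 2H(P)/\epsilon\}$ satisfies $\Pr_{m\from_P U}[m\notin S_{P}]\le\epsilon$, while $|S_{P}|\le 2^{2H(P)/\epsilon}$. We set $E(P,m)=\bot$ for $m\notin S_{P}$, which already meets condition~2 of the definition, so it remains to encode $m\in S_{P}$. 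Alice first sends, in a prefix-free code for integers, the level $\ell:=\lceil\log_{2}(1/P(m))\rceil\le 2H(P)/\epsilon+1$; the $P$-expectation of this cost is only $O(\log(H(P)+2))$, a lower-order term. Because $P$ and $Q$ are $\Delta$-close, once $\ell$ is transmitted Bob knows $m$ lies in $S_{B}:=\{m':Q(m')\ge 2^{-\ell-\Delta}\}$, which has size at most $2^{\ell+\Delta}$ and contains Alice's level set $S_{A}:=\{m':\lceil\log_{2}(1/P(m'))\rceil=\ell\}$ (of size at most $2^{\ell}$). Thus we are reduced to: Alice holds $S_{A}$ and a message $m\in S_{A}$; Bob holds a superset $S_{B}\supseteq S_{A}$ with $|S_{B}|\le 2^{\ell+\Delta}$ (but not $S_{A}$ itself); Bob must recover $m$.

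For the reduction to colouring, set $K:=2^{\ell}$ and $K':=2^{\ell+\Delta}$, and take the graph $G=G_{N,K,K'}$ on the pairs $(S,m)$ with $S\subseteq[N]$, $|S|\le K$ and $m\in S$, in which $(S,m)\sim(S',m')$ exactly when $m\ne m'$ and $|S\cup S'|\le K'$. Any proper colouring $\varphi$ of $G$ solves the problem: Alice sends $\varphi(S_{A},m)$, and Bob outputs the unique $m'\in S_{B}$ for which some $S'\subseteq S_{B}$ with $|S'|\le K$ and $m'\in S'$ has $\varphi(S',m')$ equal to the received colour --- uniqueness holds because two such pairs would satisfy $|S'\cup S''|\le|S_{B}|\le K'$ and hence be adjacent. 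The level sets that actually arise are intervals in the ordering of $[N]$ by $P$-probability, so $G$ is really an interval-graph-like object, the ``structured graph'' of the abstract, and a suitably normalised version of it has constant fractional chromatic number --- which is exactly what makes the randomised scheme of Kalai et al.~\cite{JKKS} succeed essentially for free. The deterministic cost, however, is controlled by the \emph{integral} chromatic number: if $G$ admits a proper colouring with $C_{\ell}$ colours, then the encoding of a level-$\ell$ message has length (essentially) the colour index, a number in $\{1,\dots,C_{\ell}\}$, and combining over levels with the truncation above, a short Markov computation shows that it suffices to take $C_{\ell}\le 2^{\ell}\cdot 2^{O(\Delta\log^{*}N)}$ in order that the expected encoding length be $\exp\!\big(H(P)/\epsilon+\Delta\log^{*}N\big)$ (the exponential appearing precisely because a colour index is communicated, rather than its logarithm).

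The crux is therefore to colour $G$ with so few colours, and here we would run an iterated local colour reduction in the spirit of Cole--Vishkin~\cite{ColeVishkin} and Linial~\cite{Linial}. We start from the trivial colouring $(S,m)\mapsto m$ --- $N$ colours, and proper since adjacent vertices carry distinct messages --- and repeatedly apply a step that turns a proper $c$-colouring into a proper colouring with $2^{O(\Delta)}\cdot\log c$ colours: as in Cole--Vishkin, the new colour of a vertex records a coordinate in which its colour differs from that of a ``neighbour'' together with the corresponding bit, but to keep the new colour usable by a decoder who only sees the $\Delta$-shifted interval $S_{B}$, one must combine this information over a whole window of $O(\Delta)$ neighbours rather than a single one, which is the source of the $2^{O(\Delta)}$ factor and, accumulated over the $O(\log^{*}N)$ rounds that the reduction needs, of the $2^{O(\Delta\log^{*}N)}$ overhead above the unavoidable $K'=2^{\ell+\Delta}$. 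The main obstacle is exactly the design and analysis of this single reduction step: one must verify that a colouring ``proper from Alice's side'' remains decodable by a party holding only a $\Delta$-perturbed view, that the information needed to retrace all $\log^{*}N$ rounds is recoverable by the decoder from the final colour alone, and that the per-round overhead is only $2^{O(\Delta)}$ (and not, say, $2^{2^{\Delta}}$), so that the iterations compound to merely $2^{O(\Delta\log^{*}N)}$ and the two players' colourings stay compatible throughout; this is where essentially all of the work lies.
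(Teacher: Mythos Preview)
Your high-level strategy---spend the $\epsilon$ budget on a Markov truncation, transmit the level $\ell\approx\log(1/P(m))$, then resolve the residual ambiguity by an iterated Cole--Vishkin/Linial colour reduction running for $O(\log^{*}N)$ rounds---is the paper's strategy as well. The gap is in the object you reduce to. Your graph $G_{N,K,K'}$ on pairs $(S,m)$ with $|S|\le K$, $m\in S$, and adjacency ``$m\ne m'$ and $|S\cup S'|\le K'$'' contains an $N$-clique: the singletons $(\{m\},m)$ for $m\in[N]$ are pairwise adjacent as soon as $K'\ge 2$, so $\chi(G_{N,K,K'})\ge N$ and the target $C_{\ell}\le 2^{\ell}\cdot 2^{O(\Delta\log^{*}N)}$ is impossible. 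Restricting to ``intervals'' or to actual level sets does not help, since singleton level sets do arise. The underlying reason is that a single set $S_{A}$ carries too little structure for the reduction step: in Cole--Vishkin one needs, at round $k$, a view that reaches one step further than at round $k-1$, and a lone set has nowhere to extend.

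What the paper does instead is have Alice form not one set but a \emph{chain} $A_{0}\subsetneq A_{1}\subsetneq\cdots\subsetneq A_{f}$ of length $f=2\lfloor\log^{*}N\rfloor-1$, with $A_{0}=\{m\}$ and $A_{k}$ the ball of elements whose log-probability under $P$ is within $(k{+}1)\Delta+1$ of $r=\lfloor\log(1/P(m))\rfloor$; Bob builds the analogous chain from $Q$, and $\Delta$-closeness forces Bob's chain to lie within distance~$1$ of Alice's in a natural chain metric. The colour reduction (Lemma~\ref{lem: chain coloring scheme}) then consumes two levels of the chain per round, and since the number of length-$(2k{-}2)$ chains at distance $\le 2$ from a given length-$2k$ chain is bounded in terms of the chain's \emph{size} $s=|A_{f}|$ alone, one obtains $\col(s,\A)\le 2^{O(s)}\log^{(k)}N$ and hence encoding length $O(s)$ after $\log^{*}N$ rounds. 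This is also where the exponential you were unsure about comes from: $s$ itself can be as large as $2^{H(P)/\epsilon+O(\Delta\log^{*}N)}$ (because $A_{f}$ collects everything within $O(\Delta\log^{*}N)$ in log-probability of $m$), so the bit length $O(s)$ is already $\exp(H(P)/\epsilon+\Delta\log^{*}N)$---it is not that a colour index is sent in unary. Your accounting, in which $C_{\ell}\le 2^{\ell}\cdot 2^{O(\Delta\log^{*}N)}$ would deliver the theorem, is therefore off as well: were that bound true, the scheme would have expected length $H(P)+O(\Delta\log^{*}N)$, strictly better than what is claimed.
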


In the above the notation $\exp(x)$ denotes a function of
the form $c^x$ for some universal constant $c$, and $\log^* N$
denotes the minimum integer $i$ such $\log^{(i)} N \leq 1$
and $\log^{(i)}$ is the logarithm function iterated $i$ times.

An alternate way to get around the barrier of Lemma~\ref{lem:linear},
which insists that schemes must have linear dependence on $H(P)$
or make some error, is to have schemes that do not work for
all possible pairs of distributions $P$ and $Q$.
As it turns out the scheme from Theorem~\ref{thm:second} does
have this behavior for many natural distributions. In
Theorem~\ref{thm:low entropy:restricted dist} we show that our scheme from Theorem~\ref{thm:second}
works without error and with same performance as long as
$P$ (or $Q$) are close to a ``flat distribution'' (uniform over
a subset), or a geometric distribution, or a binomial distribution.
We stress that the scheme is not particularly carefully tailored to
the class of distributions (though of course the encodings and decodings
do depend on the distributions), but naturally adapts to being error-free
for the above classes.

\subsection{Techniques: Graph Coloring}

While the most natural framework for studying our problem
is as a question of communication complexity of a relational
problem (as in \cite{Kushilevitz-Nisan}), this turns out
not to be the most useful for studying the deterministic communication
complexity. Indeed, as pointed out earlier, the modern stress
in communication complexity is often on designing and understanding
the limits of protocols that are interactive and use shared randomness,
while in our case the thrust in the opposite direction.

It turns out our questions are naturally also captured as graph-coloring
questions. Furthermore such questions (or related ones) have been studied in
the literature on distributed computing in the attempt color graphs in a
local distributed manner. In particular, the work of Linial~\cite{Linial}
shows that a ``local'' algorithm for 3-coloring a cycle, due to
Cole and Vishkin~\cite{ColeVishkin}, implies that a large ``high-degree
graph'' is 3-colorable. The ideas of Cole and Vishkin~\cite{ColeVishkin} and
Linial~\cite{Linial} turn out to be quite useful in our context.
Our work abstracts some of these techniques, and extends them
to get combinatorial results, which we then convert to
efficient compression schemes.

\paragraph{Uncertainty graphs and Chromatic number}

We start by defining a class of structured combinatorial graphs
whose chromatic number turns out to be central to our problems.
Let $[N] = \{1,\ldots,N\}$.
Let $S_N$ denote the set of all permutations on $N$ elements,
i.e., the set of all bijections from $[N]$ to itself.
For $\pi,\sigma \in S_N$, let
$\delta(\pi,\sigma) = \max_{i \in [N]} | \pi^{-1}(i) - \sigma^{-1}(i)|$.

\begin{definition}[Uncertainty graphs]
For integer $N,\ell$ the uncertainty graph $\calu_{N,\ell}$
has as elements of $S_N$ as its vertices, with
$\pi \leftarrow \sigma$ if (1) $\pi(1) \ne \sigma(1)$
and $\delta(\pi,\sigma) \leq \ell$.
\end{definition}

It turns out that the chromatic number of the uncertainty graphs have
a close connection to uncertain communication schemes. Roughly these
graphs emerge from a very restricted version of the communication problem,
where the distributions $P$ and $Q$ are geometric distributions (giving
probability proportional to
$\beta^{-\pi^{-1}(i)}$ and $\beta^{-\sigma^{-1}(i)}$
to the element $i \in [N]$.
It follows that if $\delta(\pi,\sigma)$ is small, then $P$ and $Q$ are
close to each other. Furthermore, for simplicity these graphs
only consider the case that the message is the element with maximal
probability under $P$. To understand how the chromatic number plays
a role, fix a receiver with distribution $Q$ and consider two possible
senders $P$ and $P'$ that could communicate with this receiver. Consider
coloring $P$ and $P'$ by  $E(P,\argmax_m\{P(m)\})$
and $E(P',\argmax_m\{P'(m)\})$ respectively. This
would lead to distinct colors on pairs $P$ and $P'$ that are too
close to each other, provided their messages, i.e., $\argmax_m\{P(m)\}$
and $\argmax_m \{P'(m)\}$ are different. This exactly corresponds to
adjacency in our graph: the underlying permutations $\pi$ and
$\sigma$ are close, and the top ranked elements are different.

The results of Kalai et al. imply that the ``fractional chromatic
number'' of $\calu_{N,\ell}$ is bounded by $O(\ell)$.\footnote{The
fractional chromatic number of a graph $G$ is the smallest positive
real $w$ such that there exists a collection of
independent sets $I_1,\ldots,I_t$ in $G$ with weights $w_1,\ldots,w_t$
such that $\sum_{j=1}^t w_j = w$ and for every vertex $u \in V(G)$ it
is the case that
$\sum_{j : I_j \ni u} w_j \geq 1$.}
The (integral) chromatic number on the other hand does not immediately
seem to be bounded as a function of $\ell$ alone. The implication of
the low fractional chromatic number is that the chromatic number of
$\calu_{N,\ell}$ is at most $O(\ell N\log N)$, but this is worse that
the naive upper bound of $N$, which can be obtained by setting the
color of $\pi$ to be $\pi^{-1}(1)$. (By definition of adjacency this is a
valid coloring.) Our main technical contribution is in obtaining
some non-trivial upper bounds on the chromatic number of this
graph.

To derive our upper bounds, we look at ``coarsened'' versions of the
graph $\calu_{N,\ell}$. For positive integer $k$, we say that
$\pi:[k] \to [N]$ is a {\em $k$-subpermutation} if $\pi$ is injective.
We let $S_{N,k}$ denoted the set of all $k$-subpermutations on $[N]$.
For $k' \geq k$, we say subpermutation $\pi:[k] \to [N]$
{\em extends} the subpermutation $\sigma:[k'] \to [N]$ if $\sigma(i) = \pi(i)$ for
all
$i \in [k]$.
For $k$-subpermutations $\pi$ and $\sigma$, we let
$\delta(\pi,\sigma) = \min_{\pi',\sigma' {\rm~extending~} \pi,\sigma}
\{\delta(\pi',\sigma')\}$.

\begin{definition}[Restricted Uncertainty graphs]
For integers $N,\ell$ and $k$ the {\em $k$-restricted uncertainty graph}
$\calu_{N,\ell,k}$
has elements of $S_{N,k}$ as its vertices, with
$\pi \leftarrow \sigma$ if (1) $\pi(1) \ne \sigma(1)$
and $\delta(\pi,\sigma) \leq \ell$.
\end{definition}

Note that $\calu_{N,\ell,N} = \calu_{N,\ell}$.
We derive our upper bounds on the chromatic number of $\calu_{N,\ell}$
by giving non-trivial upper bounds on the chromatic number of
$\calu_{N,\ell,k}$.

\begin{lemma}
\label{lem:main-color}
\begin{enumerate}
\item
For every $k \leq k'$, $\chi(\calu_{N,\ell,k'}) \leq \chi(\calu_{N,\ell,k})$.
\item
For every $N, \ell$,
$\chi(\calu_{N,\ell,2\ell}) \leq O(\ell^2 \log N)$.
\item
For every $N$, $\ell$ and $k$ that is an integral multiple of $\ell$,
we have
$\chi(\calu_{N,\ell,k}) \leq O(2^{k} \log^{(k/\ell)} N)$.
\item
For every $N$, $\ell$ and $k$ that is an integral multiple of $\ell$,
we have
$\chi(\calu_{N,\ell,k}) \geq \log^{(2k/\ell)} (N/\ell))$.
\end{enumerate}
\end{lemma}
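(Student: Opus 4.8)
I would prove the four parts in the order listed; the telescoping in part~3 is the step I expect to need the most care. For part~1 the plan is to pull colorings back along restriction: fix an optimal proper coloring $c$ of $\calu_{N,\ell,k}$ and color each $k'$-subpermutation $\pi$ by $c(\pi|_{[k]})$. This is proper, since if $\pi\leftarrow\sigma$ in $\calu_{N,\ell,k'}$ then $\pi(1)\ne\sigma(1)$, so $\pi|_{[k]}\ne\sigma|_{[k]}$, and every full permutation extending $\pi$ (resp.\ $\sigma$) also extends $\pi|_{[k]}$ (resp.\ $\sigma|_{[k]}$), whence $\delta(\pi|_{[k]},\sigma|_{[k]})\le\delta(\pi,\sigma)\le\ell$ and so $\pi|_{[k]}\leftarrow\sigma|_{[k]}$ gets a distinct color. (The same pull-back applies verbatim to fractional colorings.)

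For part~2 I would route through the fractional chromatic number. Because the Kalai et al.~\cite{JKKS} scheme examines only the $O(\ell)$ likeliest messages, the analysis that gives $\chi_f(\calu_{N,\ell})=O(\ell)$ applies unchanged to the coarsened graph and yields $\chi_f(\calu_{N,\ell,2\ell})=O(\ell)$. Combining this with the standard bound $\chi(G)=O(\chi_f(G)\cdot\log|V(G)|)$ and the crude count $|V(\calu_{N,\ell,2\ell})|=N(N-1)\cdots(N-2\ell+1)\le N^{2\ell}$ gives $\chi(\calu_{N,\ell,2\ell})=O(\ell)\cdot O(\ell\log N)=O(\ell^2\log N)$.

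For part~3 the plan is induction on $j:=k/\ell$, with a Cole--Vishkin~\cite{ColeVishkin}/Linial~\cite{Linial}-style recoloring for the step. The structural input is mutual visibility: if $\pi\leftarrow\sigma$ then, on near-optimal extensions, $\sigma(1)=\pi(r)$ and $\pi(1)=\sigma(r')$ for some $r,r'\in\{2,\dots,\ell+1\}$, and more generally one extra block of $\ell$ coordinates of ``look-ahead'' lets a vertex reconstruct, up to $2^{O(\ell)}$ possibilities, the round-$(j-1)$ data of each neighbour. A single disagreement-bit recoloring then turns a proper coloring with $C$ colors into one with roughly $2^{O(\ell)}\log C$ colors --- the $2^{O(\ell)}$ factor paying exactly for that ambiguity, which is also what blocks a naive recoloring from being proper. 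Telescoping from the base case of part~2 drives $\log N\to\log\log N\to\cdots$, and one checks the resulting product of overheads and iterated logarithms is $O(2^{k}\log^{(k/\ell)}N)$. The main obstacle is this accounting: verifying that $k$ coordinates really do pin a neighbour's round-$(j-1)$ color down to $2^{O(\ell)}$ choices, and tracking how many iterated logarithms survive each round.

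For part~4 I would adapt Linial's~\cite{Linial} lower bound on the chromatic number of neighbourhood graphs, reversing the construction of part~3. Viewing a proper $c$-coloring of $\calu_{N,\ell,k}$ as a $(k/\ell)$-round local coloring procedure, a Ramsey-type condensation argument as in Linial's proof shows that such a procedure induces a proper $c$-coloring of $\calu_{N',\ell,k-\ell}$ with $N'$ roughly a double logarithm of $N$ --- this ``double'' is the source of the exponent $2k/\ell$ rather than $k/\ell$. Iterating $k/\ell-1$ times and bottoming out at the observation that $\calu_{M,\ell,\ell}$ contains the clique $\{(a,a+1,\dots,a+\ell-1):1\le a\le M-\ell+1\}$ and so needs $\Omega(M)$ colors gives $c\ge\log^{(2k/\ell)}(N/\ell)$. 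The delicate point is making each reduction rigorous while simultaneously controlling the round count and the universe size; the factor-two gap between the exponents here and in part~3 is exactly what the paper leaves open.
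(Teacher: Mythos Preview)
Your Parts 1 and 2 are the paper's arguments (restriction homomorphism; fractional chromatic number plus the vertex count).

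For Part 3 your inductive framework matches the paper's, but the key quantitative input is wrong. You claim that an extra block of $\ell$ coordinates pins down each neighbour's round-$(j-1)$ data to $2^{O(\ell)}$ possibilities. In fact the number of possible $(k-\ell)$-prefixes $\phi_k(\sigma)$ over neighbours $\sigma$ of a fixed $\pi\in S_{N,k}$ is exponential in $k$, not in $\ell$: already for $\ell=1$ and $\pi=(1,2,\dots,k)$, every $\sigma$ obtained by swapping some disjoint set of adjacent pairs has $\delta(\pi,\sigma)\le 1$, giving $2^{\Omega(k)}$ distinct neighbour-prefixes. The paper's Claim~\ref{clm:dk} records the correct bound $d_k\le(2\ell+1)^k$, and Lemma~\ref{lem:log-color} converts this into the recursion $\chi(\calu_{N,\ell,k})\le O(d_k^2)\log\chi(\calu_{N,\ell,k-\ell})$; it is the $(2\ell+1)^{O(k)}$ overhead at the top level that produces the $2^{O(k)}$ prefactor. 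Your recursion $c_j=2^{O(\ell)}\log c_{j-1}$ would instead stabilise at $2^{O(\ell)}$ independent of $k$, inconsistent even with your own stated endpoint $O(2^k\log^{(k/\ell)}N)$.

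For Part 4 the paper takes a different, more modular route than your direct condensation: it reduces to $\ell=2$ by freezing coordinates not divisible by $\ell/2$, embeds Linial's shift graph $\cals_{N,k}$ into $\calu_{N,2,k}$ via the explicit fold $\sigma(2i)=\pi(t+i)$, $\sigma(2i+1)=\pi(t-i)$ with $t=\lfloor k/2\rfloor$, and then quotes $\chi(\cals_{N,k})\ge\log^{(k-1)}N$ as a black box (Theorem~\ref{thm:linial}). Your inductive plan might be completable, but as written it does not say why each condensation step should cost two logarithms rather than one.
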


As an immediate application we get the following theorem.

\begin{theorem}
For every $N$ and $\ell$, we have
$\chi(\calu_{N,\ell}) \leq O\left(\min\{\ell^2 \log N, 2^{\ell \log^* N}\}\right)$.
\end{theorem}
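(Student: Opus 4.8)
The plan is to obtain the theorem as a direct corollary of Lemma~\ref{lem:main-color}, using the identity $\calu_{N,\ell}=\calu_{N,\ell,N}$ together with the monotonicity in part~1, which yields $\chi(\calu_{N,\ell})\le\chi(\calu_{N,\ell,k})$ for every $k\le N$. Since the assertion $\chi\le O(\min\{A,B\})$ for $A=\ell^2\log N$ and $B=2^{\ell\log^* N}$ is equivalent to proving $\chi\le O(A)$ and $\chi\le O(B)$ separately, I would establish the two estimates one at a time, choosing a suitable restriction parameter $k$ in each case; throughout, the naive coloring $\pi\mapsto\pi^{-1}(1)$ gives $\chi(\calu_{N,\ell})\le N$, which disposes of the degenerate range where the claimed bound already exceeds $N$.

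For $\chi(\calu_{N,\ell})\le O(\ell^2\log N)$: if $2\ell\le N$, take $k=2\ell$ and combine parts~1 and~2 to get $\chi(\calu_{N,\ell})\le\chi(\calu_{N,\ell,2\ell})\le O(\ell^2\log N)$; and if $2\ell>N$ then $\ell^2\log N\ge N\ge\chi(\calu_{N,\ell})$. For $\chi(\calu_{N,\ell})\le O(2^{\ell\log^* N})$: if $\ell\log^* N\le N$, take $k=\ell\log^* N$, which is an integral multiple of $\ell$ with $k/\ell=\log^* N$ and is at most $N$, so parts~1 and~3 give
\[
\chi(\calu_{N,\ell})\le\chi(\calu_{N,\ell,k})\le O\!\left(2^{k}\log^{(k/\ell)}N\right)=O\!\left(2^{\ell\log^* N}\cdot\log^{(\log^* N)}N\right)\le O\!\left(2^{\ell\log^* N}\right),
\]
the final step holding because $\log^{(\log^* N)}N\le 1$ by the definition of $\log^*$; and if $\ell\log^* N>N$ then $2^{\ell\log^* N}>2^{N}\ge\chi(\calu_{N,\ell})$. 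Combining the two estimates gives the stated bound $O(\min\{\ell^2\log N,\,2^{\ell\log^* N}\})$.

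I do not anticipate a genuine obstacle here: the theorem is essentially a repackaging of Lemma~\ref{lem:main-color}, with all the substance --- in particular the iterated-logarithm, Cole--Vishkin/Linial-style colorings behind parts~2 and~3 --- already contained in that lemma. The only point demanding a little care is checking that the parameter $k$ selected in each regime is admissible for the relevant part of the lemma (an integer multiple of $\ell$, and no larger than $N$), which is exactly what the case split on whether the target bound exceeds $N$ handles.
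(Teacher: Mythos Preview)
Your proposal is correct and is exactly the approach the paper intends: the theorem is stated there as ``an immediate application'' of Lemma~\ref{lem:main-color}, with no written proof, and you have supplied precisely the intended derivation --- monotonicity (part~1) to reduce to $\calu_{N,\ell,k}$, then part~2 with $k=2\ell$ for the $\ell^2\log N$ bound and part~3 with $k=\ell\log^*N$ for the $2^{\ell\log^*N}$ bound. If anything you are more careful than the paper, which does not bother to discuss the degenerate regimes $2\ell>N$ or $\ell\log^*N>N$.
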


Unfortunately, the lower bound from Part (4) of Lemma~\ref{lem:main-color}
goes to $0$ as $k \to N$ and so we don't get a growing function of
$N$ as a lower bound. However, it does rule out most natural strategies
for coloring $\calu$, and shows limitations of the intuition
that suggests $\calu$ may be colorable with $f(\ell)$ colors independent
of $N$. This is so since the intuition as well most natural strategies only
use the top $O(\ell)$ ranking elements of a permutation $\pi$ to determine
its color; and such strategies are inherently limited.
In particular, it shows that there is no hope to extend the methods of Kalai et al.
in a simple way to get a deterministic UCS.

\paragraph{Organization of this paper.}
We start with the analysis of the chromatic number in
Section~\ref{sec:graph}. We then use the methods to build
uncertain compression schemes in Section~\ref{sec:ucs}.

\section{Uncertainty Graphs}
\label{sec:graph}

We start with some elementary material in Section~\ref{ssec:elem}
that already allows
us to prove Parts (1) and (2) of Lemma~\ref{lem:main-color}.
The lower bound mentioned in Part (4) of Lemma~\ref{lem:main-color}
follows also relatively easily from a result of
Linial~\cite{Linial} and we show this in
Section~\ref{ssec:graph-lb}.
Our main contribution, in Section~\ref{ssec:graph-ub}, gives the
upper bound from Part (3) of Lemma~\ref{lem:main-color}.

\subsection{Preliminaries}
\label{ssec:elem}

We recall the concept of a homomorphism of graphs:
For graph $G = (V,E)$ and $G' = (V',E')$, we say that
$\phi:V \to V'$ is a homomorphism from $G$ to $G'$
if $(u,v) \in E \Rightarrow (\phi(u),\phi(v)) \in E'$.
We say $G$ is homomorphic to $G'$ if there exists a
homomorphism from $G$ to $G'$.

\begin{proposition}
\label{prop:coarse1}
For every $N$, $\ell\geq 1$ and $k' \leq k \leq N$,
the $k$-restricted uncertainty graph $\calu_{N,\ell,k}$
is homomorphic to
the $k'$-restricted uncertainty graph $\calu_{N,\ell,k'}$.
\end{proposition}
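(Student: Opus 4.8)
The plan is to write down an explicit homomorphism, namely the natural ``truncation'' map that forgets the values of a $k$-subpermutation on the coordinates $k'+1,\dots,k$. Concretely, I would define $\phi:S_{N,k}\to S_{N,k'}$ by $\phi(\pi)=\pi|_{[k']}$, the restriction of $\pi:[k]\to[N]$ to the domain $[k']\subseteq[k]$. Since a restriction of an injective function is injective, $\phi(\pi)$ is again a $k'$-subpermutation, so $\phi$ is a well-defined map on vertex sets; it remains only to check that $\phi$ maps edges to edges.

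So suppose $\pi\leftarrow\sigma$ in $\calu_{N,\ell,k}$, i.e.\ $\pi(1)\ne\sigma(1)$ and $\delta(\pi,\sigma)\le\ell$. Since $k'\ge 1$ we have $1\in[k']$, hence $\phi(\pi)(1)=\pi(1)\ne\sigma(1)=\phi(\sigma)(1)$, which is the first adjacency condition in $\calu_{N,\ell,k'}$. For the second condition I would use the following monotonicity property of $\delta$ on subpermutations: any pair $(\pi',\sigma')$ of extensions of $(\pi,\sigma)$ (to longer subpermutations, or all the way to full permutations of $[N]$) is in particular a pair of extensions of $(\phi(\pi),\phi(\sigma))$, because agreeing with $\pi$ on all of $[k]$ forces agreement with $\pi|_{[k']}$ on $[k']$. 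Therefore the set of completions over which the minimum defining $\delta(\phi(\pi),\phi(\sigma))$ is taken \emph{contains} the set of completions defining $\delta(\pi,\sigma)$, so $\delta(\phi(\pi),\phi(\sigma))\le\delta(\pi,\sigma)\le\ell$. Hence $\phi(\pi)\leftarrow\phi(\sigma)$ in $\calu_{N,\ell,k'}$, and $\phi$ is a homomorphism.

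There is no real obstacle here; the statement is essentially an unwinding of the definition of $\delta$ for subpermutations together with the observation that restricting a pair of subpermutations can only \emph{decrease} their distance. The one place to be careful is the bookkeeping around the definition of ``extends'' and the $\min$ in the definition of $\delta(\pi,\sigma)$ for subpermutations: one should note that the family of admissible completions of $(\pi,\sigma)$ is nonempty (it always contains, for instance, a completion to a pair of full permutations of $[N]$, which exists since $k\le N$) and is monotone under restricting the base subpermutations, which is exactly what makes the inequality $\delta(\phi(\pi),\phi(\sigma))\le\delta(\pi,\sigma)$ valid rather than vacuous. Finally, I would remark that Proposition~\ref{prop:coarse1} immediately implies Part~(1) of Lemma~\ref{lem:main-color}, since whenever a graph is homomorphic to another, its chromatic number is at most that of the target (a proper coloring of the target pulls back along the homomorphism to a proper coloring of the source).
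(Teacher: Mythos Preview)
Your proof is correct and takes exactly the same approach as the paper: both define the homomorphism $\phi$ to be the truncation map $\pi\mapsto\pi|_{[k']}$, and the paper then simply asserts that ``from the definitions it follows that this is a homomorphism,'' whereas you spell out the two verifications (preservation of $\pi(1)\ne\sigma(1)$ and the monotonicity $\delta(\phi(\pi),\phi(\sigma))\le\delta(\pi,\sigma)$) explicitly. Your added remark deriving Part~(1) of Lemma~\ref{lem:main-color} is also how the paper proceeds immediately afterwards.
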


\begin{proof}
We construct the homomorphism $\phi$ from $\calu_{N,\ell,k}$
to $\calu_{N,\ell,k'}$ as follows:
For $\pi = \langle \pi(1),\ldots,\pi(k)\rangle \in S_{N,k}$ let
$\phi(\pi) = \langle \pi(1),\ldots,\pi(k') \rangle \in S_{N,k'}$.
From the definitions it follows that this is a homomorphism.
\end{proof}

\begin{proposition}
\label{prop:coarse2}.
For every $G$ and $G'$ such that $G$ is homomorphic to
$G'$, we have $\chi(G) \leq \chi(G')$.
\end{proposition}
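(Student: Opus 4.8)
The plan is to pull back an optimal coloring of $G'$ along the homomorphism. Let $\phi : V(G) \to V(G')$ witness that $G$ is homomorphic to $G'$, and let $c' : V(G') \to \{1,\ldots,t\}$ be a proper coloring of $G'$ with $t = \chi(G')$ colors. I would simply set $c = c' \circ \phi$, i.e.\ $c(v) = c'(\phi(v))$ for every $v \in V(G)$, and argue that $c$ is a proper coloring of $G$.

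The only thing to check is properness. Fix an edge $(u,v) \in E(G)$. By the definition of a homomorphism, $(\phi(u),\phi(v)) \in E(G')$, and since $c'$ is a proper coloring of $G'$ this gives $c'(\phi(u)) \ne c'(\phi(v))$, that is, $c(u) \ne c(v)$. (Implicitly this uses that $G'$ is loopless, which is automatic here: the very existence of a finite proper coloring of $G'$ already forces it, and in any case the uncertainty graphs $\calu_{N,\ell,k}$ have no self-loops since adjacency requires $\pi(1) \ne \sigma(1)$.) Hence $c$ is a proper coloring of $G$ using at most $t$ colors, and therefore $\chi(G) \le t = \chi(G')$.

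There is really no obstacle in this argument — it is the standard folklore fact relating graph homomorphisms to chromatic numbers, and the only subtlety worth a sentence is the loopless convention noted above. Combined with Proposition~\ref{prop:coarse1}, it immediately yields Part~(1) of Lemma~\ref{lem:main-color}, so I would state it in this clean form precisely to isolate that reusable step.
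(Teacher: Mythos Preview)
Your argument is correct and is essentially the same as the paper's: the paper phrases it as ``compose the homomorphism $G\to G'$ with a homomorphism $G'\to K_t$ witnessing $\chi(G')=t$,'' and your pull-back $c=c'\circ\phi$ is exactly that composition written out explicitly. No gap, and nothing further is needed.
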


\begin{proof}
Follows from the composability of homomorphisms and the
fact that $G$ is $k$-colorable if and only if it is
homomorphic to $K_k$, the complete graph on $k$ vertices.
\end{proof}

Part (1) of Lemma~\ref{lem:main-color} follows immediately from
Propositions~\ref{prop:coarse1}~and~\ref{prop:coarse2}.

\begin{proposition}
\label{prop:frac-unc}
For every $N,\ell$, and $k \geq \ell+1$
the fractional chromatic number of
the restricted uncertainty graph $\calu_{N,\ell,k}$
is at most $4 \ell$.
\end{proposition}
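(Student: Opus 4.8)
The plan is to exhibit an explicit fractional coloring of $\calu_{N,\ell,k}$ by a weighted family of independent sets. Recall that a fractional coloring of total weight $w$ is a collection of independent sets $I_1,\dots,I_t$ with nonnegative weights $w_1,\dots,w_t$ summing to $w$ such that every vertex is covered to total weight at least $1$. The natural independent sets here come from the observation that adjacency in $\calu_{N,\ell,k}$ requires two things simultaneously: the top elements differ, \emph{and} the subpermutations are $\delta$-close (within $\ell$). So any set of subpermutations that all agree on their value at position $1$ is automatically independent. This gives $N$ independent sets $I_v = \{\pi \in S_{N,k} : \pi(1) = v\}$, but a uniform fractional coloring over these has weight $N$, which is useless — the point is that we must exploit the $\delta$-closeness constraint to do better, essentially mimicking the Kalai et al. argument.

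The key step is to build a \emph{randomized} rounding scheme: pick a uniformly random injection $h : [N] \to \{1,\dots,M\}$ for a suitable $M = \Theta(\ell)$ (or a random "hash" assigning each universe element a real in $[0,1]$, which is cleaner), and for each subpermutation $\pi$ define its color to be the element $\pi(j^*)$ where $j^*$ is the smallest index $j \in [k]$ minimizing $h(\pi(j))$ — i.e., the highest-ranked element (among the top $k$) that hashes to the smallest value. First I would check that the color class "color $= v$" is an independent set: if $\pi$ and $\sigma$ both receive color $v$ then $v$ appears among the top $k$ elements of both and, crucially, $v$ is the minimizer of $h$ among those top-$k$ elements of each; since $\delta(\pi,\sigma) \le \ell$ forces the top-$k$ lists of $\pi$ and $\sigma$ to overlap heavily (each top element of one is within $\ell$ positions in the other, hence within the top $k+\ell$), one argues that if $\pi(1) \ne \sigma(1)$ then one of $\pi(1), \sigma(1)$ would hash smaller than $v$ in the other's window — so they cannot both be colored $v$ and differ at position $1$. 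This is exactly the Kalai–et–al.-style argument, and the condition $k \ge \ell + 1$ is what makes the windows overlap enough for it to go through. Then I would bound the expected number of colors actually used, or rather directly set up the fractional coloring: for each $v$ let $w_v = \Pr_h[\text{some vertex gets color } v]$ — no, cleaner is to let the independent sets be indexed by $h$ and $v$ together, with weight proportional to $\Pr_h[\cdot]$, and observe each vertex $\pi$ is covered with total weight exactly $1$ (it always gets exactly one color), while the total weight is $\E_h[\#\text{colors used}] = \sum_v \Pr_h[v \text{ used}]$, which I must show is $O(\ell)$.

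The main obstacle will be this last expectation bound. The heuristic is that element $v$ gets used as a color only if $v$ is among the top $k$ of some subpermutation \emph{and} ranked highest there under $h$; the probability that $v$ beats (under $h$) the element that $\pi$ actually ranks first should be roughly $\ell / k$ summed appropriately, but we are quantifying over all subpermutations, so a union-bound over the worst case is needed. The clean way: observe that a subpermutation $\pi$ uses color $v \ne \pi(1)$ only if $h(v) < h(\pi(1))$ and $v$ is within the top $k$ of $\pi$, which (since $v \ne \pi(1)$) means the position of $v$ is in $\{2,\dots,k\}$; but more to the point, for $v$ to be used \emph{at all} as a color by some vertex with first element $\pi(1) = u$, we need $v$ and $u$ to co-occur in the top $k$ of a common subpermutation with $\delta \le \ell$ — wait, that is automatic. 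The actual bound I would aim for: $\Pr_h[v \text{ used}] \le \Pr_h[h(v) \text{ is among the } \ell+1 \text{ smallest values of } h \text{ on some relevant window}]$, and with $M = \Theta(\ell)$ a careful accounting gives $\sum_v \Pr_h[v\text{ used}] \le 4\ell$. I expect to spend most of the effort making this union bound over all windows/subpermutations rigorous while keeping the constant at $4$ — possibly by choosing the hash range $M$ and the "top-$(\ell+1)$" threshold to trade off cleanly against each other. If the direct approach is fussy, an alternative is to simply quote the Kalai et al. randomized protocol, extract from its $O(\ell)$ expected communication on geometric/flat priors a fractional coloring of $\calu_{N,\ell,k}$ of weight $O(\ell)$, and then note the constant $4$ follows by optimizing their parameters; but I would prefer the self-contained hashing argument above.
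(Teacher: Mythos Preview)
Your proposal has two genuine gaps.

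First, the color classes are \emph{not} independent sets. Take $\ell = 2$, $k = 3$, $\pi = \langle 1,2,3\rangle$, $\sigma = \langle 2,1,3\rangle$. These are adjacent in $\calu_{N,2,3}$ (they differ at position $1$ and $\delta(\pi,\sigma)=1$), yet if $h(3) < h(1),h(2)$ then both receive color $3$. Your heuristic ``one of $\pi(1),\sigma(1)$ would hash smaller than $v$ in the other's window'' fails precisely when the common min-hash element $v$ is neither $\pi(1)$ nor $\sigma(1)$.

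Second, even granting independence, the expected number of colors used is $\Theta(N)$, not $O(\ell)$: for every $v\in[N]$ there is some $\pi\in S_{N,k}$ with $\pi(1)=v$ and $\pi(2),\dots,\pi(k)$ chosen to have larger hash values than $v$, so $v$ is used as a color with probability $1$. You are quantifying over all subpermutations, and there is no ``relevant window'' restriction to save the union bound.

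The fix is a small twist on your idea, and it is exactly what the paper does. Instead of tracking which color a vertex receives, look only at the event that $\pi$ receives its \emph{own} top element as its color, restricted to the first $\ell+1$ positions. Concretely, for a random $f:[N]\to[2\ell]$ let
\[
I_f = \{\pi\in S_{N,k} : f(\pi(1))=1 \text{ and } f(\pi(j))\ne 1 \text{ for } j=2,\dots,\ell+1\}.
\]
Now $I_f$ \emph{is} independent: if $(\pi,\sigma)$ is an edge then $\sigma(1)\in\{\pi(2),\dots,\pi(\ell+1)\}$ (this is where $k\ge\ell+1$ is used), so $f(\sigma(1))\ne 1$ and $\sigma\notin I_f$. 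And $\Pr_f[\pi\in I_f] = \frac{1}{2\ell}(1-\frac{1}{2\ell})^\ell \ge \frac{1}{4\ell}$, giving $\chi_f(\calu_{N,\ell,k})\le 4\ell$ directly. (Equivalently, with a random total order $h$ on $[N]$, take $I_h=\{\pi: h(\pi(1))<h(\pi(j))\text{ for }j=2,\dots,\ell+1\}$; then $\Pr[\pi\in I_h]=1/(\ell+1)$, which even improves the constant.) The point is that you should weight a single random independent set and lower-bound the coverage probability of each vertex, rather than try to upper-bound the number of classes in a full random coloring.
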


\begin{proof}
For every function $f : [N] \to [2\ell]$ we associate
the set $I_f = \{\pi \in S_{N,k} | f(\pi(1)) = 1
{\rm~and~} f(\pi(j)) \ne 1 ~\forall ~j \in \{2,\ldots,\ell+1\}\}$.

We claim that $I_f$ is an independent set of $\calu_{N,\ell,k}$
for every $f$.
To see this consider an edge $(\pi,\sigma)$ and suppose
$\pi \in I_f$. Then $\sigma(1) \in \{\pi(2),\ldots,\pi(\ell+1)\}$
and so $f(\sigma(1)) \ne 1$ and so $\sigma \not\in I_f$.

Next we note that for every $\pi$, the probability that
$\pi \in I_f$ for $f$ chosen uniformly at random is
$1/(2\ell) \cdot (1 - 1/(2\ell))^\ell \geq 1/(4\ell)$.

Thus if we give each $I_f$ a weight of $4\ell/(2\ell)^N$,
then we have that the weight of independent sets containing
any given vertex $\pi$ is at least one, while the sum of all
weights is $4\ell$, thus yielding the claimed bound on the
fractional chromatic number.
\end{proof}

The following is a well-known connection between fractional
chromatic number and chromatic number.

\begin{proposition}
\label{prop:fractional}
For every graph $G$, $\chi(G) \leq \chi_f(G) \cdot \ln |V(G)|$.
\end{proposition}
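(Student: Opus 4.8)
The plan is to run the standard probabilistic ``greedy cover'' argument and then derandomize it into an explicit coloring. Write $w = \chi_f(G)$ and $n = |V(G)|$, and fix a fractional cover witnessing $\chi_f(G) = w$: independent sets $I_1,\ldots,I_t$ with nonnegative weights $w_1,\ldots,w_t$ summing to $w$ such that $\sum_{j : I_j \ni v} w_j \ge 1$ for every vertex $v$. Normalizing by $w$, the numbers $p_j := w_j / w$ form a probability distribution on $\{I_1,\ldots,I_t\}$, and the covering condition says exactly that for every vertex $v$ we have $\Pr_{j \sim p}[\, v \in I_j \,] \ge 1/w$.

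Next I would sample independent sets $J_1,\ldots,J_r$ independently from the distribution $p$, for a value $r$ to be fixed in a moment. For a fixed vertex $v$, the probability that $v$ lies in none of $J_1,\ldots,J_r$ is at most $(1-1/w)^r \le e^{-r/w}$. A union bound over the $n$ vertices shows that the probability that some vertex is left uncovered is at most $n\, e^{-r/w}$, which is strictly less than $1$ once $r \ge \lceil w \ln n \rceil$ (the boundary case being handled by the strict inequality $(1-1/w)^w < e^{-1}$ for $w>1$; the case $w=1$ forces $G$ to be edgeless, where the claim is trivial in the regime of interest). Hence there exists a choice of $r$ independent sets whose union is all of $V(G)$.

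Finally, given such a cover $J_1,\ldots,J_r$, color each vertex $v$ by the index $\min\{\, i : v \in J_i \,\}$. Because each $J_i$ is an independent set, no edge of $G$ can have both endpoints receiving color $i$, so this is a proper coloring with at most $r$ colors. Therefore $\chi(G) \le r = \lceil \chi_f(G)\ln|V(G)|\rceil$, which is the claimed bound up to the standard additive rounding term (negligible for every application of this proposition in the paper). I do not expect any genuine obstacle here: the only points requiring a moment's care are the integrality rounding in the choice of $r$ and the degenerate edgeless case, and everything else is a single union bound.
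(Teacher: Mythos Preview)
Your argument is correct and is exactly the standard random-covering proof of this well-known fact. Note that the paper does not actually prove Proposition~\ref{prop:fractional}; it merely states it as a known connection between fractional and integral chromatic number, so there is no ``paper's own proof'' to compare against. Your handling of the rounding (yielding $\lceil \chi_f(G)\ln|V(G)|\rceil$ rather than the unrounded quantity) and of the degenerate $w=1$ case is fine, and as you note the additive $+1$ is immaterial for how the proposition is used downstream.
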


We are now ready to prove part (2) of Lemma~\ref{lem:main-color}.

\begin{lemma}
$\chi(\calu_{N,\ell}) \leq \chi(\calu_{N,\ell,\ell+1})) \leq
4\ell(\ell+1)\ln N$
\end{lemma}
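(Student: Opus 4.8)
The plan is to prove the final statement by combining Part~(1) of Lemma~\ref{lem:main-color} (already established via Propositions~\ref{prop:coarse1} and~\ref{prop:coarse2}) with Proposition~\ref{prop:frac-unc} and Proposition~\ref{prop:fractional}. The first inequality, $\chi(\calu_{N,\ell}) \leq \chi(\calu_{N,\ell,\ell+1})$, is immediate: since $\calu_{N,\ell} = \calu_{N,\ell,N}$ and $\ell+1 \leq N$ (we may assume $N$ is large enough, as otherwise the bound is vacuous), Part~(1) of Lemma~\ref{lem:main-color} gives $\chi(\calu_{N,\ell,N}) \leq \chi(\calu_{N,\ell,\ell+1})$.

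For the second inequality, I would apply Proposition~\ref{prop:frac-unc} with $k = \ell+1$ (which satisfies the hypothesis $k \geq \ell+1$) to conclude that the fractional chromatic number $\chi_f(\calu_{N,\ell,\ell+1})$ is at most $4\ell$. Then Proposition~\ref{prop:fractional} yields $\chi(\calu_{N,\ell,\ell+1}) \leq \chi_f(\calu_{N,\ell,\ell+1}) \cdot \ln |V(\calu_{N,\ell,\ell+1})|$. The only remaining point is to bound the number of vertices: $\calu_{N,\ell,\ell+1}$ has vertex set $S_{N,\ell+1}$, the set of injective maps $[\ell+1] \to [N]$, so $|V(\calu_{N,\ell,\ell+1})| = N(N-1)\cdots(N-\ell) \leq N^{\ell+1}$, whence $\ln |V(\calu_{N,\ell,\ell+1})| \leq (\ell+1)\ln N$. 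Combining, $\chi(\calu_{N,\ell,\ell+1}) \leq 4\ell \cdot (\ell+1)\ln N$, which is exactly the claimed bound.

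There is no real obstacle here; the statement is essentially a bookkeeping assembly of the preceding propositions. The one subtlety worth flagging is the vertex-count estimate: one must use $S_{N,\ell+1}$ (subpermutations of length $\ell+1$), not $S_N$, which is precisely the payoff of having passed to the coarsened graph $\calu_{N,\ell,\ell+1}$ — applying Proposition~\ref{prop:fractional} directly to $\calu_{N,\ell}$ would incur a factor of $\ln(N!) \approx N \ln N$ instead of $(\ell+1)\ln N$, which would be useless. So the proof should explicitly note that the homomorphism/monotonicity step is what lets us replace $N!$ by $N^{\ell+1}$ in the logarithm.

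\begin{proof}
Since $\calu_{N,\ell} = \calu_{N,\ell,N}$, Part~(1) of Lemma~\ref{lem:main-color} gives $\chi(\calu_{N,\ell}) = \chi(\calu_{N,\ell,N}) \leq \chi(\calu_{N,\ell,\ell+1})$, establishing the first inequality (the bound is trivial unless $N \geq \ell+1$, which we may thus assume). For the second inequality, Proposition~\ref{prop:frac-unc} applied with $k = \ell+1$ shows $\chi_f(\calu_{N,\ell,\ell+1}) \leq 4\ell$. The vertex set of $\calu_{N,\ell,\ell+1}$ is $S_{N,\ell+1}$, the set of injections $[\ell+1] \to [N]$, so $|V(\calu_{N,\ell,\ell+1})| = N(N-1)\cdots(N-\ell) \leq N^{\ell+1}$ and hence $\ln|V(\calu_{N,\ell,\ell+1})| \leq (\ell+1)\ln N$. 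Proposition~\ref{prop:fractional} now yields
\[
\chi(\calu_{N,\ell,\ell+1}) \leq \chi_f(\calu_{N,\ell,\ell+1}) \cdot \ln|V(\calu_{N,\ell,\ell+1})| \leq 4\ell (\ell+1)\ln N,
\]
as claimed.
\end{proof}
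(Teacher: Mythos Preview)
Your proof is correct and follows exactly the same approach as the paper: the first inequality via Propositions~\ref{prop:coarse1} and~\ref{prop:coarse2} (equivalently Part~(1) of Lemma~\ref{lem:main-color}), and the second via Propositions~\ref{prop:frac-unc} and~\ref{prop:fractional} together with the bound $|V(\calu_{N,\ell,\ell+1})| \leq N^{\ell+1}$. Your write-up simply spells out the vertex-count computation more explicitly than the paper does.
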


\begin{proof}
The first inequality follows from
Propositions~\ref{prop:coarse1}~and~\ref{prop:coarse2}.
The second one
follows from Proposition~\ref{prop:fractional}~and~\ref{prop:frac-unc}
and the fact that $\calu_{N,\ell,\ell+1}$
has at most $N^{\ell+1}$ vertices.
\end{proof}

\subsection{Lower Bound on Chromatic Number}
\label{ssec:graph-lb}

We now prove Part (4) of Lemma~\ref{lem:main-color} giving a lower bound
on $\chi(\calu_{N,\ell,k})$. We use a lower bound on a somewhat
related family of graphs due to Linial~\cite{Linial}.

\begin{definition}[Shift graphs]
For integers $N$ and $k < N$, we say that
$\pi \in S_{N,k}$ is a {\em left shift} of $\sigma \in
S_{N,k}$ if $\pi(i) = \sigma(i+1)$ for $i \in [k-1]$ and
$\pi(k) \ne \sigma(1)$. We say $\pi$ is a {\em right shift}
of $\sigma$ if $\sigma$ is a left shift of $\pi$, and
we say $\pi$ is a shift of $\sigma$ if $\pi$ is a left shift
or a right shift of $\sigma$.
For integers $N$ and $k$, the shift graph $\cals_{N,k}$ is
given by $V(\cals_{N,k}) = S_{N,k}$ with $(\pi,\sigma) \in
E(\cals_{N,k})$ if $\pi$ is a shift of $\sigma$.
\end{definition}

\begin{theorem}[{\protect Linial~\cite[Proof of Theorem 2.1]{Linial}}]
\label{thm:linial}
For every odd $k$, $\chi(\cals_{N,k}) \geq \log^{(k-1)} N$.
\end{theorem}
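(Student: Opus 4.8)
The plan is to reduce the colorability of the shift graph $\cals_{N,k}$ to the colorability of $\cals_{N',k-2}$ for a suitably reduced universe size $N' = \log N$, and then iterate. Suppose we are given a proper coloring $c$ of $\cals_{N,k}$ with $t$ colors. The key observation is that for a $k$-subpermutation $\pi = \langle a_1,\ldots,a_k\rangle$, the color $c(\pi)$ restricts the possible colors of any $k$-subpermutation of the form $\langle a_2,\ldots,a_k, b\rangle$ with $b \notin \{a_1,\ldots,a_k\}$ — these are the left shifts of $\pi$, hence adjacent, hence differently colored. So I would fix the suffix $\langle a_2,\ldots,a_k\rangle$ and look at the function sending the ``head'' element $a_1$ (ranging over all values not in $\{a_2,\ldots,a_k\}$) to the set of colors \emph{forbidden} at the suffix-extended positions; more cleanly, I would define a new coloring on $(k-1)$-subpermutations whose color is the \emph{set} of colors that $c$ assigns to all one-step left-extensions, and check this set-valued coloring is proper on the shift graph $\cals_{N,k-1}$ with $2^t$ colors. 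Iterating this ``power-set lifting'' twice reduces $k$ by $2$ while exponentiating the number of colors, which inverts to the iterated-logarithm bound on $N$.

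More precisely, the cleanest route (and the one I would actually carry out) is the classical two-step argument. First reduce $k$ to $1$: I claim that if $\cals_{N,k}$ is $t$-colorable then $\cals_{M,1}$ is $t$-colorable where $M$ is roughly the $(k-1)$-fold iterated logarithm of $N$ — here $\cals_{M,1}$ is essentially the complete graph $K_M$ (two distinct singletons $\langle a\rangle$, $\langle b\rangle$ are shifts of each other whenever $a \neq b$), so $t$-colorability of $K_M$ forces $M \leq t$. The reduction from step $j$ to step $j{+}1$: given a proper $t$-coloring of $\cals_{N,j}$, for each $(j{-}1)$-subpermutation $\tau$ on a large enough ground set, look at all its extensions $\langle \tau, b\rangle$; the coloring $c$ partitions the admissible $b$'s into color classes. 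By a Ramsey/pigeonhole-type argument, on a ground set of size $N$ one finds a ``monochromatic-pattern'' subset of size $\sim 2^N$-ish... — actually the right statement is the reverse: one uses the coloring to \emph{define} a coloring of $\cals_{\log N, j-1}$ with $2^t$ colors, by assigning to $\tau$ the set $\{ c(\langle \tau, b \rangle) : b \text{ admissible}\}$, after first passing to a sub-universe on which this set is well-defined independent of the ordering — and then verifying that two shift-adjacent $\tau$'s get distinct sets because a witness $b$ that extends both gives adjacent extensions.

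Let me state the induction I would set up explicitly. \textbf{Claim:} for odd $k$, if $\cals_{N,k}$ is $t$-colorable then $2^{(\uparrow)} t \geq$ (something), equivalently $\chi(\cals_{N,k}) \geq \log^{(k-1)} N$. The base case $k=1$: $\cals_{N,1} \cong K_N$, so $\chi = N = \log^{(0)} N$. Inductive step from $k-2$ to $k$: I would show $\chi(\cals_{N,k}) \geq \log\big( \chi(\cals_{\cdot,k-2}) \text{ on an appropriate universe}\big)$, picking up one $\log$ for each of the two decrements in $k$ (the parity is why $k$ must be odd — each two-step descent costs two logs and two units of $k$, landing exactly at $k=1$). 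The hard part — and the step I expect to be the real obstacle — is verifying that the set-valued (power-set) coloring is genuinely proper on the shift graph of one lower order, i.e. handling the boundary condition $\pi(k) \neq \sigma(1)$ in the definition of shift and making sure there always exists an extending element $b$ witnessing adjacency, which requires $N$ to be sufficiently larger than $k$ at each level; this is exactly where the iterated logarithm (rather than a single logarithm) enters. Since Theorem~\ref{thm:linial} is cited from Linial as a known result, I would ultimately just invoke \cite[Proof of Theorem 2.1]{Linial} for the detailed bookkeeping, but the power-set-lifting reduction above is the conceptual skeleton.
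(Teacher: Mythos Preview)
The paper does not prove Theorem~\ref{thm:linial}; it is quoted from Linial~\cite{Linial} as a black box, with only a remark on notation. So there is no ``paper's own proof'' to compare against --- the paper's approach is precisely to cite the result and move on, which is also what you ultimately do in your final sentence.

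That said, the conceptual skeleton you outline --- the power-set lifting, where a $t$-coloring of $\cals_{N,k}$ induces a $2^t$-coloring of $\cals_{N,k-1}$ by assigning to each $(k{-}1)$-subpermutation the \emph{set} of colors appearing on its one-step extensions --- is indeed Linial's argument. Your base case $\cals_{N,1}\cong K_N$ is correct, and iterating the inequality $\chi(\cals_{N,k-1})\le 2^{\chi(\cals_{N,k})}$ down to $k=1$ yields $\chi(\cals_{N,k})\ge \log^{(k-1)}N$. Two small points: (i) the reduction actually works one step at a time, not two, so your parenthetical about ``each two-step descent costs two logs'' is unnecessary --- the oddness of $k$ is an artifact of Linial's parametrization $k=2t+1$, not of the proof mechanics; (ii) the verification that the set-valued coloring is proper hinges on exactly the clause $\pi(k)\ne\sigma(1)$ in the shift definition, which guarantees that the ``cyclic'' extension is excluded and the witnessing extension exists --- you flag this as the hard part, and it is the one place requiring care, but it goes through without any Ramsey argument or shrinking of the ground set. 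Your earlier detour through ``monochromatic-pattern subsets'' is a wrong turn you correctly abandon.
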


(We note that the notation in \cite{Linial} is somewhat
different: The graph $\cals_{N,k}$ is denoted $B_{N,t}$
for $t = (k-1)/2$ in \cite{Linial}.)

We show that the uncertainty graphs contain a subgraph
isomorphic to the shift graph. This gives us our lower
bound on the chromatic number of uncertainty graphs.

\begin{lemma}
For every $N$, $\ell$ and $k$ that is an integral multiple of $\ell$,
we have
$\chi(\calu_{N,\ell,k}) \geq (\log^{(\lceil 2k/\ell \rceil)} (N/\ell))$.
\end{lemma}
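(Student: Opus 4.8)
The plan is to realise an appropriate shift graph as a homomorphic preimage of $\calu_{N,\ell,k}$ and then apply Linial's lower bound, Theorem~\ref{thm:linial}. Since $\ell\mid k$, write $k=s\ell$ and set $t=2s+1$; crucially $t$ is odd, which is the hypothesis of Theorem~\ref{thm:linial}. I will construct a map $\Phi$ from the vertex set $S_{M,t}$ of the shift graph $\cals_{M,t}$ (where $M$ is of order $N/\ell$) to the vertex set $S_{N,k}$ of $\calu_{N,\ell,k}$ that carries every shift-edge of $\cals_{M,t}$ to an edge of $\calu_{N,\ell,k}$. Granting this, any proper colouring of $\calu_{N,\ell,k}$ pulls back along $\Phi$ to a proper colouring of $\cals_{M,t}$, so $\chi(\calu_{N,\ell,k})\ge\chi(\cals_{M,t})$; and Theorem~\ref{thm:linial}, applied with its parameter equal to $t=2k/\ell+1$, gives $\chi(\cals_{M,t})\ge\log^{(t-1)}M=\log^{(2k/\ell)}M$. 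With $\ell\mid k$ one has $\lceil 2k/\ell\rceil=2k/\ell$, and choosing $M\ge N/\ell$ (a constant factor in $M$, and the usual integer rounding, are harmless once an iterated logarithm is applied) this yields $\chi(\calu_{N,\ell,k})\ge\log^{(\lceil 2k/\ell\rceil)}(N/\ell)$.

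To define $\Phi$, I would partition (most of) $[N]$ into $M$ consecutive ``blocks'', each of size about $\ell$ and equipped with a fixed internal order. A $k$-subpermutation has room for exactly $s$ whole blocks in positions $1,\dots,k$ (this uses $\ell\mid k$), and any full-permutation extension has room for all $M$ blocks. The subpermutation $\Phi(\langle c_1,\dots,c_t\rangle)$ would encode the indices $c_1,\dots,c_t$ by placing the corresponding blocks (or equal halves of them) along a fixed, well-spaced ``folded'' schedule of slots inside the first $k$ positions --- the slots spaced with gaps at most $\ell$, and the schedule folded about the middle index $c_{s+1}$ --- with every other position occupied by a canonical default pattern. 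The design goal is that a left- or right-shift of the tuple translates into a rearrangement in which (i) the leading element changes, so that $\Phi(\vec c)(1)\neq\Phi(\vec c')(1)$, and (ii) every element moves by at most $\ell$ positions, so that the natural pair of extensions of $\Phi(\vec c)$ and $\Phi(\vec c')$ witnesses $\delta(\Phi(\vec c),\Phi(\vec c'))\le\ell$; the two conditions together make $\Phi(\vec c)$ and $\Phi(\vec c')$ adjacent in $\calu_{N,\ell,k}$.

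The hard part is exactly the verification of (ii). A single vertex of $\cals_{M,t}$ has $\Omega(M)$ shift-neighbours, and the index that a shift removes from the tuple (and the one it inserts) can be essentially arbitrary; yet for each such neighbour the removed/inserted block must end up within $\ell$ positions of where the other copy keeps it. Making one canonical default pattern simultaneously absorb all of these --- while keeping $\Phi$ injective enough to be a homomorphism --- is the delicate, genuinely combinatorial step, and it is where the exact slot spacing, the fold, and the divisibility $\ell\mid k$ (so that whole blocks tile the first $k$ positions) are needed; it is also the source of the factor $2$ that makes the shift-graph parameter $2k/\ell+1$ rather than $k/\ell+1$, and hence makes the exponent in the bound $\lceil 2k/\ell\rceil$.
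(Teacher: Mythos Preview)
Your strategy matches the paper's: construct a homomorphism from a shift graph into $\calu_{N,\ell,k}$ via a fold and apply Theorem~\ref{thm:linial}. The paper, however, sidesteps nearly all of your ``delicate, genuinely combinatorial'' work by first reducing to $\ell=2$: restrict to permutations with $\pi(i)=i$ whenever $\ell/2\nmid i$, leaving $N'\approx 2N/\ell$ free elements and $k'=2k/\ell$ free positions, so that a distance-$2$ move on the free coordinates becomes a distance-$\ell$ move in the original graph. It then suffices to exhibit a homomorphism $\cals_{N',k'}\to\calu_{N',2,k'}$, and for $\ell=2$ the fold is explicit: send $\pi$ to $\sigma$ with $\sigma(2i)=\pi(t+i)$ and $\sigma(2i+1)=\pi(t-i)$, where $t=\lfloor k'/2\rfloor$. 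A left or right shift of $\pi$ moves every entry of $\sigma$ by at most two positions and changes $\sigma(1)$. No blocks, no half-blocks, and the shift-graph parameter is $k'=2k/\ell$, not $2k/\ell+1$; your proposed $t$ does not fit into $k$ positions under the scheme you describe.

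Your main anxiety---that a single canonical default pattern must ``simultaneously absorb'' every possible removed or inserted index---rests on a misreading of the subpermutation distance. Recall that $\delta(\pi,\sigma)=\min_{\pi',\sigma'}\delta(\pi',\sigma')$, the minimum taken over \emph{pairs} of full-permutation extensions; so for each shift-edge you may choose a fresh pair of extensions tailored to that edge, and nothing need work universally. Concretely, in the $\ell=2$ fold the endpoints $\pi(1)$ and $\pi(k')$ of the tuple land at positions $k'-1$ and $k'$ of $\sigma$; when a shift drops one of them, place it at position $k'+1$ in the partner's extension, a displacement of at most $2$. The newly inserted index likewise lands at position $k'$ in one image and can be put at position $k'+1$ in the other's extension. (A graph homomorphism also need not be injective, so ``injective enough'' is not a constraint you have to meet.)
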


\begin{proof}
First without loss of generality we only consider the case
of even $\ell$. Then we reduce to the case $\ell = 2$,
by considering only those permutations $\pi$ which fix
$\pi(i) = i$ if $\ell/2$ does not divide $i$. This still
leaves us with $2N/\ell$ unfixed elements and subpermutations
from $S_{2N/\ell,2k/ell}$ that are within distance $2$ of each other
are within distance $\ell$ when mapped back to $S_{N,k}$.

So we assume $\ell = 2$ and show that $\calu_{N,2,k}$ contains
a subgraph isomorphic to the shift graph $\cals_{N,k}$.
Consider the map $\phi$ from
$V(\cals_{N,k})$ to
$V(\calu_{N,2,k})$
which send $\pi = \langle \pi(1),\ldots,\pi(k) \rangle$
to $\phi(\pi) = \sigma = \langle \sigma(1),\ldots,\sigma(k) \rangle$
as follows: Let $t = \lfloor k/2 \rfloor$. Then
$\sigma(2i) = \pi(t+i)$ and $\sigma(2i+1) = \pi(t-i)$
$\sigma(t + i) = \pi(2i)$
and $\sigma(t - i) = \pi(2i + 1)$. It is easy to verify that the map
is a bijection and if
$\pi$ and $\pi'$ are shifts of each other, then $\phi(\pi)$
and $\phi(\pi')$ are within distance $2$ of each other. It follows
that $\calu_{N,2,k}$ contains a copy of $\cals_{N,k}$ and
so $\chi(\calu_{N,2,k}) \geq \chi(\cals_{N,k}) \geq log^{(k-1)} N$.
\end{proof}

\subsection{Upper Bound on Chromatic Number}
\label{ssec:graph-ub}

In this section we give an upper bound on the chromatic number
of the uncertainty graphs. We first describe our strategy.
Fix $N$ and $\ell$.
Now for every $k$, we know that there is a homomorphism
from $\calu_{N,\ell,k}$ to $\calu_{N,\ell,k-1}$.
However we note that if we jump from $\calu_{N,\ell,k}$ to
$\calu_{N,\ell,k-\ell}$ then the homomorphism has an even
nicer property. To describe this property, we introduce
a new parameter associated with the homomorphism from
$\calu_{N,\ell,k}$ to $\calu_{N,\ell,k-\ell}$. Let us
denote this homomorphism $\phi_k$. For $\pi \in S_{N,k}$
let $d_k(\pi) = |\{\phi_k(\sigma) \mid (\pi,\sigma) \in
E(\calu_{N,\ell,k})\}|$. Note that $d_k(\pi)$ is independent
of $\pi$ and so we just denote it $d_k$. We note first that
$d_k$ is small.

Recall that $\phi_k:S_{N,k}
\to S_{N,k-\ell}$ and maps $\pi:[k] \to [N]$ to $\pi':[k-\ell]
\to [N]$ by setting $\pi'(i) = \pi(i)$.

\begin{claim}
\label{clm:dk}
For every $k$, $d_k \leq (2\ell+1)^k$.
\end{claim}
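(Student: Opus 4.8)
The plan is to fix an arbitrary $\pi \in S_{N,k}$ and bound the number of distinct $(k-\ell)$-subpermutations of the form $\phi_k(\sigma)$, as $\sigma$ ranges over the neighbours of $\pi$ in $\calu_{N,\ell,k}$. I will in fact show this number is at most $(2\ell+1)^{k-\ell}$, which is stronger than the claimed bound $d_k \le (2\ell+1)^k$. The mechanism is that each of the $k-\ell$ coordinates of the truncated tuple $\phi_k(\sigma) = \langle \sigma(1),\ldots,\sigma(k-\ell)\rangle$ is constrained to lie in a set of at most $2\ell+1$ elements that is determined by $\pi$ alone.

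First I would unwind adjacency and the definition of $\delta$ on subpermutations. If $(\pi,\sigma) \in E(\calu_{N,\ell,k})$ then $\delta(\pi,\sigma) \le \ell$, so there exist permutations $\hat\pi,\hat\sigma \in S_N$ with $\hat\pi(i) = \pi(i)$ and $\hat\sigma(i) = \sigma(i)$ for all $i \in [k]$, and $|\hat\pi^{-1}(x) - \hat\sigma^{-1}(x)| \le \ell$ for every $x \in [N]$. Now fix a coordinate $j \in [k-\ell]$ and put $x = \sigma(j)$. Since $j \le k$ and $\hat\sigma$ agrees with $\sigma$ on $[k]$, we have $\hat\sigma^{-1}(x) = j$, and hence $\hat\pi^{-1}(x) \in \{j-\ell,\ldots,j+\ell\}$. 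The key point is that $\hat\pi^{-1}(x) \le j+\ell \le (k-\ell)+\ell = k$ while $\hat\pi^{-1}(x) \ge 1$, so the index $i = \hat\pi^{-1}(x)$ lies in $[k]$, where $\hat\pi$ is pinned to $\pi$; thus $x = \hat\pi(i) = \pi(i)$. In other words $\sigma(j) \in \{\pi(i) \mid i \in \{j-\ell,\ldots,j+\ell\} \cap [k]\}$, a set of size at most $2\ell+1$ that depends only on $\pi$ and $j$.

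To conclude, I would note that $\phi_k(\sigma)$ is completely determined by the tuple $(\sigma(1),\ldots,\sigma(k-\ell))$, and by the previous step each entry ranges over at most $2\ell+1$ values (independent of $\sigma$). Hence there are at most $(2\ell+1)^{k-\ell}$ possibilities for $\phi_k(\sigma)$, so $d_k = d_k(\pi) \le (2\ell+1)^{k-\ell} \le (2\ell+1)^k$. There is essentially no obstacle here; the only thing requiring a moment's care is the passage through the extension-based definition of $\delta$ for subpermutations — namely that the extensions $\hat\pi,\hat\sigma$ may be taken to agree with $\pi,\sigma$ on all of $[k]$ — together with the bookkeeping that it is precisely the restriction $j \le k-\ell$ which guarantees the relevant index stays in the range $[k]$ on which $\hat\pi$ is determined by $\pi$.
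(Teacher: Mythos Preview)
Your proof is correct and follows essentially the same approach as the paper: both arguments observe that for each $j \in [k-\ell]$ there is an offset in $\{-\ell,\ldots,\ell\}$ with $\sigma(j) = \pi(j+\text{offset})$, so the sequence of offsets determines $\phi_k(\sigma)$ and there are at most $(2\ell+1)^{k-\ell}$ such sequences. Your write-up is simply more careful about unwinding the extension-based definition of $\delta$ on subpermutations, which the paper's proof leaves implicit.
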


\begin{proof}
Let  $(\sigma,\pi)\in E(\calu_{N,\ell,k})$ then
$\delta(\sigma,\pi) \leq \ell$. In particular for every
$i \in [k -\ell]$, we have there exists $j(i) \in \{-\ell,\ldots,\ell\}$
such that $\sigma(i) = \pi(i + j(i))$. Thus the sequence
$j(1),\ldots,j(k-\ell)$ completely specifies $\phi_k(\sigma)$.
Since the number of such sequences is at most $(2\ell+1)^{k-\ell}$,
we get our claim.
\end{proof}

The next lemma shows that a homomorphism with a small $d$-value
yields especially good colorings.

\begin{lemma}
\label{lem:log-color}
Let $\phi$ be a homomorphism from $G$ to $H$ and let
$c = \chi(H)$ and
$d = \max_{v \in V(G)} | \{ \phi(w) \mid (v,w) \in E(G) \}|$.
Then $\chi(G) \leq 2d(d+1)\log c = O(d^2 \log c)$.
\end{lemma}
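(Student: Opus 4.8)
The plan is to combine a coloring pulled back from $H$ with a secondary coloring that distinguishes a vertex from the (few) vertices that, under $\phi$, collide with its neighbors' images. First I would fix a proper coloring $c_H : V(H) \to [c]$ of $H$, and let $c_0 = c_H \circ \phi$ be the induced (not necessarily proper) coloring of $G$. For each color class $V_i = c_0^{-1}(i)$, the key observation is that the induced subgraph $G[V_i]$ has bounded degree: if $u,v \in V_i$ are adjacent in $G$, then $\phi(u)$ and $\phi(v)$ are adjacent in $H$, hence get different $c_H$-colors --- wait, that forces $u$ and $v$ into different classes, so in fact $G[V_i]$ would be edgeless. That is too strong; the actual structure I want is different. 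Let me instead partition by the \emph{pair} of colors. The real content is: if $(u,v) \in E(G)$ then $\phi(u) \neq \phi(v)$ (since $\phi(u)\phi(v)$ is an edge of $H$, which has no loops), and there are at most $d$ distinct values $\phi(w)$ over neighbors $w$ of any fixed $u$. So the natural move is to build a palette of size roughly $d$ that, \emph{within each pullback class}, separates $u$ from all its neighbors.

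The cleaner route, which I would actually carry out, is via the standard ``defective coloring / degeneracy'' argument. Consider the auxiliary graph $G'$ on $V(G)$ where we keep only edges $(u,v)$ with $c_H(\phi(u)) \neq c_H(\phi(v))$; this is all of $E(G)$, so $G' = G$, no gain. Instead, for a vertex $u$, group its neighbors by the value of $\phi$: the neighborhood $N(u)$ is partitioned into at most $d$ groups $N(u) = \bigsqcup_{y} N_y(u)$ where $N_y(u) = \{w \in N(u) : \phi(w) = y\}$. Now here is the plan: I claim $G$ has an orientation of its edges in which every vertex has out-degree at most $d$ in a suitable ``quotient'' sense, leading to a coloring with $O(d^2)$ colors times $\log c$. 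Concretely, I would proceed as follows. Define a new graph $\widehat{G}$ on vertex set $V(H) \times [q]$ for $q = O(d^2)$ to be chosen, build a homomorphism $G \to \widehat G$, and show $\chi(\widehat G)$ is controlled. For each $v \in V(G)$, we must choose a ``local color'' $g(v) \in [q]$ so that whenever $(u,v) \in E(G)$ and $c_H(\phi(u)) = c_H(\phi(v))$... but that case is empty. So any assignment works and $\chi(G) \le c$, contradicting the lemma's more refined bound being necessary --- meaning I've misidentified where the $d$ enters.

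Reconsidering: the point of $d$ must be to improve on $c$ when $c$ is large, i.e.\ the bound $O(d^2 \log c)$ beats $c$. The mechanism: take a proper $c$-coloring $\psi$ of $H$. For $v \in V(G)$ with $\phi(v) = y$, the neighbors $w$ of $v$ have $\phi(w)$ ranging over a set $T(v)$ of size $\le d$, all adjacent to $y$ in $H$, hence $\psi$ takes $\le d$ distinct values on $\phi(N(v))$, all different from $\psi(y)$. So $v$ ``cares about'' at most $d$ forbidden colors. This is exactly a list-coloring / iterated-degree-reduction setup: color $G$ by assigning to $v$ a value in $[\lceil \log c\rceil]^{O(d^2)}$? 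The honest plan is the \textbf{Cole--Vishkin / Linial trick}: the pullback coloring $c_0 = \psi\circ\phi$ uses $c$ colors but is ``$d$-locally-almost-proper'' in the sense that each vertex conflicts (same color) with at most $d$ of its neighbors' \emph{images}. Wait --- $c_0$ \emph{is} proper since $\psi\circ\phi$ separates the endpoints of every edge. So $G$ is genuinely $c$-colorable and the lemma is an \emph{additional} bound. The proof must be: iteratively apply one Cole--Vishkin step reducing $c \mapsto O(d^2 \log c)$ colors in a single round. I would set this up by: (i) fix proper $c_0$-coloring; (ii) for each $v$, since only $\le d$ neighbor-images exist and $c_0$ restricted to them is proper, there is a bit position where $c_0(v)$ differs from each, so assign $v$ the pair (position, bit), giving $2\lceil\log c\rceil$ colors but this may not be proper because two adjacent $v,v'$ could pick the same (position,bit); (iii) repair using the $d$ bound: at most $O(d^2)$ such conflicts can be packed, so refine with an extra $O(d^2)$ palette. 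The \textbf{main obstacle} I anticipate is step (iii): controlling conflicts after one Cole--Vishkin round. The fix I would use: after the round, the conflict graph on any color class has max degree $\le d(d+1)$ (each $v$ conflicts only with neighbors $v'$ whose image lies in a size-$\le d$ set and which picked the same tag; bounding the number of such $v'$ by $d\cdot$(palette size in the repair)$\cdot\ldots$), so greedily recolor with $d(d+1)+1$ further colors, and multiply: $\chi(G) \le (2\lceil\log c\rceil)\cdot(d(d+1)+1) = O(d^2\log c)$. I'd present this as: combine the $\le 2\log c$ Cole--Vishkin tags with a $(d(d+1)+1)$-coloring of the bounded-degree conflict graph, verifying properness by checking that an edge of $G$ surviving with equal tags is an edge of the conflict graph and hence gets separated by the second coordinate.
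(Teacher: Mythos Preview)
Your diagnosis is right: pulling back a proper $c$-coloring $\psi$ of $H$ gives a proper coloring $c_0 = \psi\circ\phi$ of $G$, and for each $v$ the set $S_v = \{\psi(\phi(w)) : w\sim v\}$ of forbidden colors has size at most $d$. The task is to compress $c_0$ using only this local information. But your compression mechanism is broken in two places.

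First, step (ii) does not go through. The claim ``there is a bit position where $c_0(v)$ differs from each'' (i.e.\ one position separating $c_0(v)$ from all of $S_v$ simultaneously) is false once $d>1$: if $c_0(v)=000$ and $S_v=\{001,010,100\}$, every bit position collides with some element of $S_v$. A single Cole--Vishkin tag $(\text{position},\text{bit})$ only separates $v$ from \emph{one} designated neighbor, which is why the classical argument needs an orientation with out-degree $1$. With $d>1$ you do not have such an orientation, and your rule for choosing the tag is not well-defined.

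Second, even granting some tag assignment, the degree bound $d(d+1)$ on the conflict graph is unjustified. The hypothesis bounds the number of distinct \emph{images} $\phi(w)$ of neighbors, not the number of neighbors; arbitrarily many $w\sim v$ can share the same $\phi(w)$, hence the same $c_0(w)$, hence potentially the same tag. Your parenthetical ``bounding the number of such $v'$ by $d\cdot(\text{palette size in the repair})\cdot\ldots$'' trails off exactly where the argument would have to control this multiplicity, and there is no control available.

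The paper implements the compression with a different gadget that sidesteps both issues. By a probabilistic argument it builds a family $\mathcal{H}=\{h_1,\dots,h_M\}$ of hash functions $h_j:[c]\to[2d]$, with $M\le (d+1)\log c$, such that for every $i\in[c]$ and every $S\subseteq[c]\setminus\{i\}$ of size at most $d$ some $h_j$ isolates $i$ from $S$. Each vertex $v$ then picks $j=j(v)$ isolating $\psi(\phi(v))$ from $S_v$ and is colored $(j(v),\,h_{j(v)}(\psi(\phi(v))))$. Properness is immediate: if $(v,w)\in E(G)$ and $j(v)=j(w)=j$, then $\psi(\phi(w))\in S_v$, so $h_j$ separates them. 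The palette size is $M\cdot 2d \le 2d(d+1)\log c$. This is morally the ``one round'' you were aiming for, but with an isolating hash family in place of a single bit position; the hash-family index $j$ plays the role of your ``position'' while the hash value plays the role of your ``bit'', and the existence argument for $\mathcal{H}$ is what absorbs the factor $d$.
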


\begin{proof}
For integers $t$ and $M$,
we start by building a small family of hash functions $\calh =
\{h_1,\ldots,h_M\} \subseteq
\{ h:[c] \to [t]\}$ with the property that for
every subset $S \subseteq [c]$, with $|S| \leq d$, and for every
$i \in [c] - S$, there exists $j \in [M]$ such that
$h_j(i) \not\in \{h_j(i') | i' \in S\}$.

Given such a hash family, we claim there is a coloring of
$G$ with $t \cdot M$ colors. To get such a coloring, let
$\chi'$ be a coloring of $H$ with colors $[c]$.
Now, consider
$v \in V(G)$ and let $S_v = \{\chi'(\phi(w)) \mid (w,v) \in
E(H)\}$. By the definition of $d$, we have $|S_v| \leq d$.
Also since $\chi'$ is a coloring of $H$ and $\phi$ is
a homomorphism, we have $\chi'(\phi(v)) \not\in S_v$.
Thus by the property of $\calh$, we have that there exists a
$j = j(v)$ such that $h_j(\chi'(\phi(v)) \not\in \{h_j(i') | i' \in S_v\}$.
We let the coloring $\chi$ of $G$ be $\chi(v) = (j(v),
h_{j(v)}(\chi'(\phi(v))\}$.
Syntactically it is clear that this is a $t \cdot M$ coloring of
$G$. To see it is valid, consider $(v,w) \in E(G)$. If
$j(v) \ne j(w)$ then we are done. Else, suppose $j(v) = j(w) = j$.
Then by definition of $S_v$ we have $\chi'(\phi(w)) \in S_v$ and
so $h_j(\chi'(v)) \ne h_j(\chi'(w)) \in \{h_j(i)| i \in S_v\}$,
and thus $\chi(v) \ne \chi(w)$ as desired.

To conclude we need to give an upper bound on $t$ and $M$.

\begin{claim}
There exists such a hash family with $t \leq 2d$ and $M \leq \log (c^{d+1})$.
\end{claim}

\begin{proof}
The proof is an elementary probabilistic method argument.
Let $t = 2d$.
We pick members of $\calh$ at uniformly at random from $\{h:[c] \to [t]\}$.
Fix a set $S$ with $|S| \leq d$ and $i \in [c] - S$. Say that
$h$ separates $i$ from $S$ if $h(i) \not\in \{h(i') | i' \in S\}$.
The probability that a random $h$ separates $i$ from $S$ is at least
$1/2$ and the probability that there does not exist $h \in \calh$ separating
$i$ from $S$ is at most $2^{-M}$. The probability that there exists
$S$ and $i \in [c] - S$ such that there does not exist $h \in \calh$
separating $i$ from $S$ is strictly less than $c^{d+1} \cdot 2^{-M}$.
It follows that if $M = \log c^{d+1}$ then such a family $\calh$
exists.
\end{proof}

The lemma follows.
\end{proof}
We are now ready to prove Part (3) of Lemma~\ref{lem:main-color}, restated below.

\begin{lemma}
There exists a constant $c$ such that
for every $N,\ell,k$, we have $\chi(\calu_{N,\ell,k})
\leq 2^{c k \log \ell} \log^{(\lfloor (k-1)/\ell \rfloor-1)} N$.
\end{lemma}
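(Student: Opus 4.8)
The plan is a downward induction on $k$ in steps of $\ell$, driven at each step by the ``forgetting'' homomorphism $\phi_k\colon\calu_{N,\ell,k}\to\calu_{N,\ell,k-\ell}$ (Proposition~\ref{prop:coarse1}), by the bound $d_k\le(2\ell+1)^k$ on its image-degree (Claim~\ref{clm:dk}), and by the coloring bound $\chi(G)=O(d^2\log c)$ of Lemma~\ref{lem:log-color}. Write $m(k)=\lfloor(k-1)/\ell\rfloor-1$ for the index of the iterated logarithm appearing in the claim, and note the identity $m(k-\ell)=m(k)-1$: one $\ell$-step of the recursion should lower the number of nested logarithms by exactly one. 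I may assume $\ell\ge 2$ (for $\ell=1$ one reads $\log\ell$ as $\log(\ell+1)$, or argues separately) and $k\ge\ell+1$ (for $k\le\ell$ the coloring $\pi\mapsto\pi(1)$ already gives $\chi\le N$, and then $m(k)\le-1$ makes the asserted bound vacuous under the convention that $\log^{(i)}N\ge N$ for nonpositive $i$); note $k\ge\ell+1$ forces $m(k)\ge 0$.

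\textbf{Base case ($k\le 3\ell$, so $m(k)\in\{0,1\}$).} By Part~(1) of Lemma~\ref{lem:main-color} (Propositions~\ref{prop:coarse1} and~\ref{prop:coarse2}) and the bound $\chi(\calu_{N,\ell,\ell+1})\le 4\ell(\ell+1)\ln N$ from Section~\ref{ssec:elem}, we get $\chi(\calu_{N,\ell,k})\le\chi(\calu_{N,\ell,\ell+1})\le 4\ell(\ell+1)\ln N\le 2^{O(\log\ell)}\log^{(1)}N$. Since $\log^{(1)}N\le\log^{(m(k))}N$ when $m(k)\le 1$ (as $\log^{(0)}N=N\ge\log N$) and $k\ge\ell+1$, this is at most $2^{ck\log\ell}\log^{(m(k))}N$ once $c$ is a large enough absolute constant.

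\textbf{Inductive step ($k\ge 3\ell+1$, so $m(k)\ge 2$).} Lemma~\ref{lem:log-color} applied to $\phi_k$, together with $d_k\le(2\ell+1)^k$, gives
\[
  \chi(\calu_{N,\ell,k})\ \le\ 2d_k(d_k+1)\,\log\chi(\calu_{N,\ell,k-\ell})\ \le\ 4(2\ell+1)^{2k}\,\log\chi(\calu_{N,\ell,k-\ell}).
\]
By the inductive hypothesis $\chi(\calu_{N,\ell,k-\ell})\le 2^{c(k-\ell)\log\ell}\log^{(m(k)-1)}N$, hence
\[
  \log\chi(\calu_{N,\ell,k-\ell})\ \le\ c(k-\ell)\log\ell\ +\ \log^{(m(k))}N .
\]
The crucial effect of the logarithm is that it has demoted the \emph{multiplicative} factor $2^{c(k-\ell)\log\ell}$ of the hypothesis to the \emph{additive} term $c(k-\ell)\log\ell$, while advancing $\log^{(m(k)-1)}N$ to $\log^{(m(k))}N$. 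Bounding $4(2\ell+1)^{2k}\le 2^{Ck\log\ell}$ for a universal constant $C$ (valid for $\ell\ge 2$), we obtain $\chi(\calu_{N,\ell,k})\le 2^{Ck\log\ell}\bigl(c(k-\ell)\log\ell+\log^{(m(k))}N\bigr)$. It now suffices to show that each summand, after multiplication by $2^{Ck\log\ell}$, is at most $\tfrac12\,2^{ck\log\ell}\log^{(m(k))}N$: for the second this is $2^{Ck\log\ell}\le\tfrac12\,2^{ck\log\ell}$, and for the first (using $\log^{(m(k))}N\ge 1$) it is $Ck\log\ell+\log\bigl(c(k-\ell)\log\ell\bigr)+1\le ck\log\ell$. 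Both hold for all $k\ge\ell+1\ge 3$ once $c$ is a sufficiently large absolute constant, since the slack $(c-C)k\log\ell$ is linear in $k$ whereas $\log(ck\log\ell)$ is only logarithmic. This closes the induction and proves the lemma.

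The structural content is entirely inherited from Lemma~\ref{lem:log-color} and Claim~\ref{clm:dk}; the point — and the reason the exponent does not blow up over the $\approx k/\ell$ iterations — is that a single $\ell$-step $\chi_k\leftarrow d_k^2\log\chi_{k-\ell}$ is \emph{self-sustaining}: the logarithm turns the hypothesis's exponential factor into an $O(k\log\ell)$ additive term, which the $\ell$ units of headroom between $k$ and $k-\ell$ absorb back into the exponent $ck\log\ell$, whereas the count of nested logarithms applied to $N$ genuinely rises by one per step. I expect the only real work to be (i) lining up the small-$k$ base case with the floor in $m(k)$, and (ii) the routine but fiddly verification that the constants in the inductive step close — in particular that $\log^{(m(k))}N\ge 1$ (and that $\log^{(m(k)-1)}N$ is meaningful at all), which is the one place the iterated logarithm might bottom out; this is handled by the standard convention of replacing $\log^{(i)}N$ with $\max\{\log^{(i)}N,2\}$ throughout, equivalently by noting that once the iterated logarithm drops below $2$ the factor $2^{ck\log\ell}$ together with $\chi\ge 1$ already suffices.
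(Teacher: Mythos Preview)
Your proof is correct and follows essentially the same approach as the paper: induction on $k$ in steps of $\ell$, using the homomorphism $\phi_k$, the degree bound $d_k\le(2\ell+1)^k$ from Claim~\ref{clm:dk}, and Lemma~\ref{lem:log-color} to get the recursion $\chi_k\le O(d_k^2)\log\chi_{k-\ell}$. You are considerably more careful than the paper about the base case, the index arithmetic for $m(k)$, the $\ell=1$ edge case, and verifying that the constants actually close in the inductive step; the paper's own proof leaves most of this implicit.
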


\begin{proof}
We prove the lemma by induction on $k$. For notational
simplicity assume $k-1$ is a multiple of $\ell$.
For $k \leq \ell$
the lemma is immediate from the fact that
$\chi(\calu_{N,\ell,1}) \leq N$.
Assume the lemma is true for $k - \ell$.
Then,
by Lemma~\ref{lem:log-color} we have that
for $\chi(\calu_{N,\ell,k}) \leq 2 d_k (d_k+1) \cdot \log
(\chi(\calu_{N,\ell,k-\ell}))
\leq 4 d_k^2 \log \chi (\calu_{N,\ell,k-\ell})$.
By Claim~\ref{clm:dk}, $d_k \leq (2\ell+1)^k \leq (4\ell)^k$ and so
for $\chi(\calu_{N,\ell,k}) \leq 4 (4\ell)^{2k} \log (2^{c (k-\ell)\log \ell}\log^{(k-\ell-1)/ell} N \leq 2^{ck\log \ell} \log^{(k-1)/\ell)}
N$ for a suitably large $c$.
\end{proof}

\section{Uncertain Communication}

\label{sec:ucs}

We now convert some of the methods from the previous section
into schemes for uncertain compression.
In Section~\ref{ssec:simple-ucs} we derive a simple compression scheme based
on the relationship between fractional chromatic number and
chromatic number from Section~\ref{ssec:elem}.
We then use the ``nested series of homomorphisms'' from Section~\ref{ssec:graph-ub}
to derive a second compression scheme in Section~\ref{ssec:logstar-ucs}.
The compression scheme
of Section~\ref{ssec:logstar-ucs} can make errors with positive
probability and has a non-linear dependence on entropy.
In Section~\ref{ssec:natural} we show that for some
natural distributions, this scheme is error-free.
In Section~\ref{ssec:linear} we show how an error-free scheme
working for all distributions would automatically have
linear dependence on the entropy, suggesting some of the
weaknesses in Section~\ref{ssec:logstar-ucs} are necessary.

\subsection{A simple, zero-error compression scheme}
\label{ssec:simple-ucs}

Our first construction uses the notion of an isolating
hash family.
For positive integers $\ell$, $N$ and $m \in [N]$
and $S \subseteq [N] - \{m\}$, we say that a
function $h: [N] \to \{0,1\}^\ell$ isolates $m$ from
$S$ if $h(m) \not\in \{h(m') | m' \in S\}$.
We say that a hash
family $\calh_\ell =
\{h_{1,\ell},\ldots,h_{M,\ell}\}$ is {\em $(N,\ell)$-isolating}
if
for every $S \subseteq [N]$ with $|S| \leq 2^{\ell-1}$, and for
every $m \in [N] - S$, there exists $j = j(m,S)$ such
that $h_{j,\ell}(m) \not\in h_{j,\ell}(S) \eqdef \{h_{j,\ell}(m')
| m' \in S\}$.

We note first that small isolating families exist and then give a
compression scheme based on small isolating families.

\begin{lemma}
\label{lem:isol}
For every $\ell$ and $N$, there exists an
$(N,\ell)$-isolating family of size at most $2^\ell \cdot \log N$.
\end{lemma}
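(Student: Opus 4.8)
The plan is to prove Lemma~\ref{lem:isol} by the standard probabilistic method, mirroring the argument used inside the proof of Lemma~\ref{lem:log-color}. First I would set $M = \lceil (2^{\ell-1}+1)\log_2 N \rceil$ and observe that, since $\ell \geq 1$ implies $2^{\ell-1}+1 \leq 2^\ell$, we have $M \leq 2^\ell \log N$, as required by the statement. I would then choose $h_{1,\ell},\ldots,h_{M,\ell}$ independently and uniformly at random from the set of all functions $h:[N]\to\{0,1\}^\ell$, and argue that with positive probability this family is $(N,\ell)$-isolating.

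The core estimate is as follows. Fix a set $S \subseteq [N]$ with $|S| \leq 2^{\ell-1}$ and an element $m \in [N]-S$. For a single random $h$, the value $h(m)$ is uniform over the $2^\ell$-element set $\{0,1\}^\ell$, while $h(S)$ has size at most $|S| \leq 2^{\ell-1}$; hence $\Pr[h(m)\in h(S)] \leq 2^{\ell-1}/2^\ell = 1/2$, so $h$ isolates $m$ from $S$ with probability at least $1/2$. By independence of the $h_{j,\ell}$, the probability that \emph{no} $h_{j,\ell}$ in the family isolates $m$ from $S$ is at most $2^{-M}$.

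Next I would apply a union bound over all relevant pairs $(S,m)$. The number of subsets $S\subseteq[N]$ with $|S|\leq 2^{\ell-1}$ is at most $N^{2^{\ell-1}}$, and there are at most $N$ choices of $m$, so the number of pairs is at most $N^{2^{\ell-1}+1}$. Therefore the probability that the random family fails to be $(N,\ell)$-isolating is at most $N^{2^{\ell-1}+1}\cdot 2^{-M}$, which is strictly less than $1$ by our choice of $M$. Consequently some family of size $M \leq 2^\ell\log N$ is $(N,\ell)$-isolating, proving the lemma. (For the degenerate small cases, e.g.\ $N=1$, the empty or singleton family works trivially, so we may assume $N\geq 2$ throughout.)

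There is no real obstacle here: the argument is a routine first-moment computation. The only point requiring a little care is the bookkeeping in the union bound — specifically bounding the number of small subsets of $[N]$ by $N^{2^{\ell-1}}$ so that the resulting $M$ comes out as $(2^{\ell-1}+1)\log N \leq 2^\ell\log N$ rather than something larger, and checking that the inequality $2^{\ell-1}+1 \leq 2^\ell$ used to absorb the ``$+1$'' holds for all $\ell \geq 1$.
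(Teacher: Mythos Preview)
Your proof is correct and follows essentially the same approach as the paper: a probabilistic-method argument choosing the $h_j$ uniformly at random and applying a union bound over all pairs $(m,S)$. The paper uses the slightly cruder count $N^{2^\ell}$ for the number of pairs (yielding $M \leq 2^\ell \log N$ directly), whereas you use $N^{2^{\ell-1}+1}$ and then invoke $2^{\ell-1}+1 \leq 2^\ell$; both arrive at the same bound.
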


\begin{proof}
The proof is straightforward application of the probabilistic method.
We pick $\calh = \{h_1,\ldots,h_M\}$ by picking $h_i$ uniformly and
independently from the set of all functions from $[N]$ to
$\{0,1\}^\ell$.
Fix $m \not\in S \subseteq [N]$.
The probability that a randomly chosen $h$ isolates $m$ from $S$
is at least $1/2$.
Thus the probability that some $h_i$ in $\calh$ does not isolate
$m$ from $S$ is at most $2^{-M}$.
Taking the union bound over all $m, S$ we find that the probability
that $\calh$ does not isolate some $m$ from $S$ is at most
$N^{2^{\ell}}/2^{M}$. We conclude that
$M \leq 2^{\ell} \cdot \log N$ suffices for the existence of
such a $\calh$.
\end{proof}

We are now ready to describe
our encoding and decoding schemes.

\noindent {\bf Encoding:} Given $m,P$
let $S = \{m' \in [N] \setminus \{m\} \mid P(m') \geq P(m)/2^{2\Delta}\}$
and let $\ell = \log_2 1/P(m) + 2\Delta$.
Let $\calh$ be an $(N,\ell)$-isolating family of size $M$
and let $\calh = \{h_{1,\ell},\ldots,h_{M,\ell}\}$.
Now let $j \in [M]$ be such
that $h_{j,\ell}(m) \not\in \{h_{j,\ell}(m') \mid m' \in S\}$.
The encoding $E(P,m)$ is defined to be $(j,h_{j,\ell}(m))$.

\medskip

\noindent {\bf Decoding:} Given $Q$ and $y = (j,z) \in \Z^+ \times
\{0,1\}^*$,
let $\ell = |z|$ and let $\hat{m} = \argmax_{m \in [N] : h_{j,\ell}(m) = z}
\{Q(m)\}$. The decoding of the pair $Q,y)$ is given by $D(Q,y) = \hat{m}$.

Our next proposition verifies the correctness of the
compression scheme.

\begin{proposition}
\label{prop:simple-correct}
For every pair of distributions $P$, $Q$ such that
$\delta(P,Q) \leq \Delta$, and for every message $m \in [N]$,
it is the case that $D(Q,E(P,m)) = m$.
\end{proposition}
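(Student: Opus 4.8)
The plan is a direct correctness check, with three ingredients: (i) the encoder always has a legal output; (ii) the decoder reconstructs the exact hash function the encoder used; and (iii) the element the decoder returns is forced to equal $m$.

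First I would handle (i). Fix $P$ and $m$ and let $S$, $\ell$ and $\calh_\ell=\{h_{1,\ell},\dots,h_{M,\ell}\}$ be as in the encoding (with $\ell$ rounded up to an integer where needed). A direct count bounds $|S|$: every element of $S$ carries $P$-mass at least $P(m)/2^{2\Delta}$, so these masses, together with $P(m)$ itself (which is $\ge P(m)/2^{2\Delta}$ since $\Delta\ge 0$), sum to at most $1$, giving $|S|+1\le 2^{2\Delta}/P(m)=2^{\log_2(1/P(m))+2\Delta}=2^{\ell}$, i.e.\ $|S|<2^{\ell}$. So (after rounding or adjusting the additive constant in $\ell$ if necessary) the $(N,\ell)$-isolating property of $\calh_\ell$ (Lemma~\ref{lem:isol}) applies to $S$ and the point $m\in[N]\setminus S$, supplying an index $j$ with $h_{j,\ell}(m)\notin h_{j,\ell}(S)$; then $E(P,m)=(j,z)$ with $z=h_{j,\ell}(m)\in\{0,1\}^{\ell}$ is well defined. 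For (ii) I would simply note that on input $(j,z)$ the decoder sets $\ell=|z|$; since $z\in\{0,1\}^{\ell}$ this recovers exactly the encoder's $\ell$, and since the $(N,\ell)$-isolating family $\calh_\ell$ is a fixed function of $N$ and $\ell$ (both known to the decoder), the decoder thereby knows $h_{j,\ell}$ itself.

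The heart of the argument is (iii). Write $\hat m=D(Q,(j,z))=\argmax_{m''\in[N]:\,h_{j,\ell}(m'')=z} Q(m'')$. Because $h_{j,\ell}(m)=z$, the index $m$ is one of the candidates in this argmax, so $Q(\hat m)\ge Q(m)$. Suppose, for contradiction, that $\hat m\ne m$. The choice of $j$ means $h_{j,\ell}(m')\ne z$ for every $m'\in S$, so $\hat m\notin S$; since $\hat m\ne m$ as well, the definition of $S$ forces $P(\hat m)<P(m)/2^{2\Delta}$. Now I would invoke $\delta(P,Q)\le\Delta$ twice, in opposite directions: it gives $Q(m)\ge P(m)/2^{\Delta}$ and $Q(\hat m)\le 2^{\Delta}P(\hat m)$, and hence
\[
Q(\hat m)\le 2^{\Delta}P(\hat m)<2^{\Delta}\cdot\frac{P(m)}{2^{2\Delta}}=\frac{P(m)}{2^{\Delta}}\le Q(m),
\]
contradicting $Q(\hat m)\ge Q(m)$. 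Therefore $\hat m=m$, as required.

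I do not expect a genuinely hard step here; this is a verification rather than a theorem carrying a new idea. The two points that need care are the counting bound on $|S|$ used to invoke the isolating family in (i) (matching the exponent against the size threshold $2^{\ell-1}$ in the definition of isolating), and applying $\Delta$-closeness in the correct direction twice in (iii), so that the two-sided factor-$2^{\Delta}$ slack between $P$ and $Q$ is exactly absorbed by the factor-$2^{2\Delta}$ gap the encoder deliberately left between $P(m)$ and the $P$-mass of everything outside $S$, leaving the final inequality strict.
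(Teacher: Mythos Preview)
Your proof is correct and follows essentially the same approach as the paper: both arguments use $Q(\hat m)\ge Q(m)$ together with two applications of $\Delta$-closeness to force $\hat m\in S\cup\{m\}$, and then the isolating property of $h_{j,\ell}$ to eliminate the possibility $\hat m\in S$. Your version is slightly more thorough in that you also verify well-definedness of the encoding (your step~(i), bounding $|S|$ against the threshold in the definition of an isolating family), a point the paper's proof leaves implicit.
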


\begin{proof}
Fix
$P$, $Q$ and $m$ such that $\delta(P,Q) \leq
\Delta$.
Let $E(m,P) = (j,z)$ with $\ell = |z|$ and let
$D((j,z),Q) = \hat{m}$. We will show that $\hat{m} = m$.
By definition of $E$, we have $h_{j,\ell}(m) = z$
and by definition of $D$ we have $h_{j,\ell}(\hat{m}) = z$.
Thus, by the condition that $\hat{m}$ maximizes probability under
$Q$ of messages satisfying $h_{j,\ell}(m') = z$, we have
$Q(\hat{m}) \geq Q(m)$. Since the distance of $P$ and
$Q$ is at most $\Delta$, we have $P(m) \leq Q(m) 2^{\Delta}$
and $P(\hat{m}) \geq Q(\hat{m})/2^{\Delta}$. Combining the
inequalities we get $P(\hat{m}) \geq P(m)/2^{2\Delta}$.
Now let $S = \{m' \in [N] - \{m\} \mid
P(m') \geq P(m)/2^{2\Delta}\}$.
We have $\hat{m} \in S \cup \{m\}$. But by definition of
$j$, we have $h_{j,\ell}(m) \not\in \{h_{j,\ell}(m') | m' \in S\}$
and since $h_{j,\ell}(m) = h_{j,\ell}(\hat{m})$, we must
have $m = \hat{m}$.
\end{proof}

Finally we analyze the performance of our scheme.

\begin{lemma}
\label{lem:simple-anal}
The expected length of the encoding $E$ is
$O(H(P) + \Delta + \log\log N)$.
\end{lemma}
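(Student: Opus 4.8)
The plan is to bound $\E_{m \from_P U}[|E(P,m)|]$ by bounding the two components of the encoding $(j, h_{j,\ell}(m))$ separately, where recall $\ell = \log_2(1/P(m)) + 2\Delta$. The second component $h_{j,\ell}(m)$ has length exactly $\ell = \log_2(1/P(m)) + 2\Delta$ bits, so its expected length under $P$ is $\E_{m}[\log_2(1/P(m))] + 2\Delta = H(P) + 2\Delta$. The first component $j$ is an index into an $(N,\ell)$-isolating family; by Lemma~\ref{lem:isol} such a family has size at most $2^\ell \cdot \log N$, so $j$ can be written in $\log_2(2^\ell \log N) = \ell + \log_2\log N$ bits. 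Hence the total length for message $m$ is at most $2\ell + \log_2 \log N = 2\log_2(1/P(m)) + 4\Delta + \log_2\log N$, and taking expectations gives $2H(P) + 4\Delta + \log_2\log N = O(H(P) + \Delta + \log\log N)$, as claimed.

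The steps, in order: (i) observe $|z| = \ell$ deterministically given $m$ and $P$; (ii) invoke Lemma~\ref{lem:isol} to bound $M \le 2^\ell \log N$, so $\lceil \log_2 M \rceil \le \ell + \log_2\log N + O(1)$ bits suffice to encode $j$ (using a prefix-free encoding of the pair $(j,z)$, e.g. write $j$ in a self-delimiting way or note that $\ell$ is recoverable from $|z|$ so only $j$ needs explicit length-prefixing, costing another $O(\log \ell)$ which is absorbed); (iii) sum the two contributions to get $|E(P,m)| \le 2\ell + \log_2\log N + O(\log\ell)$; (iv) substitute $\ell = \log_2(1/P(m)) + 2\Delta$ and take $\E_{m\from_P U}$, using $\E_m[\log_2(1/P(m))] = H(P)$ and $\E_m[\log \ell] = O(\log(H(P)+\Delta)) = O(H(P)+\Delta)$ by concavity of $\log$ and Jensen, to conclude the bound is $O(H(P) + \Delta + \log\log N)$.

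The one technical wrinkle — and the main thing to get right — is the encoding of the \emph{pair} $(j, z)$ into a single binary string in a way that the decoder can parse, and making sure the bookkeeping for this parsing does not blow up the bound. Since the decoder reads $y = (j,z)$ and sets $\ell = |z|$, it must be able to separate $j$ from $z$; the clean way is to encode $j$ in a self-delimiting form (costing $\log_2 j + 2\log_2\log_2 j + O(1)$ bits) followed by the raw $\ell$ bits of $z$. Because $j \le 2^\ell \log N$, this self-delimiting encoding of $j$ costs at most $\ell + \log_2\log N + O(\log(\ell + \log\log N))$ bits, and the lower-order $O(\log(\ell+\log\log N))$ term has expectation $O(\log(H(P)+\Delta+\log\log N))$ under $P$, which is dominated by $O(H(P)+\Delta+\log\log N)$. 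So the only real content is this routine prefix-encoding accounting; everything else is immediate from Lemma~\ref{lem:isol} and the definition of $\ell$.
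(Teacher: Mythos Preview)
Your proposal is correct and follows essentially the same approach as the paper: bound $|z| = \ell$, bound $|j|$ via $M \leq 2^\ell \log N$ from Lemma~\ref{lem:isol}, sum to get $|E(P,m)| \leq 2\ell + \log\log N$, and take expectation. If anything, you are more careful than the paper about the prefix-free parsing of $(j,z)$, which the paper's proof simply elides.
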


\begin{proof}
Fix $m \in S$. Then we have $\ell \leq  1 + \log 1/P(m) + 2\Delta$ and
$M \leq (2^\ell \log N)$.
Thus, the length of $E(P,m)$ is at most $2\ell + \log \log N
= O(\log 1/P(m) + \Delta + \log \log N)$.
Taking expectation over $m$ drawn from $P$, we have
the expected length of the encoding is at most $O(H(P) + \Delta + \log \log
N)$.
\end{proof}

Theorem~\ref{thm:first} follows immediately from Proposition~\ref{prop:simple-correct} and Lemma~\ref{lem:simple-anal}.

\subsection{Compression with error in the low entropy setting}
\label{ssec:logstar-ucs}
Our compression for the low entropy setting (with better dependence
on $N$) relies on an extension of our coloring scheme for the
uncertainty graphs. We describe this extension in the next
section and then use that to present our compression scheme
afterwards.

\subsubsection{Compression for chains}

We start with some terminology.
We say that a finite sequence of sets
$A_0,\ldots,A_k$ with $A_i \subseteq
N$ is a {\em chain} in $[N]$ if $|A_0| = 1$ and $A_i \subseteq A_{i+1}$
for every $i$. We say that $w$ is the {\em leader} of the
chain if $A_0 = \{w\}$. We use $\chain(N)$ to denote the
set of all chains in $[N]$.

In this section we will show how to compress the leader of a chain
so that it is unambiguous relative to ``nearby'' chains.
This is in the spirit of the coloring of uncertainty graphs. Indeed
vertices of the uncertainty graph $\calu_{N,\ell,k}$
correspond to chains with the vertex $\langle \pi(1),\ldots,\pi(k)\rangle$
corresponding to the chain $\A$ with $A_0 = \{\pi(1)\}$ and
$A_i = \{\pi(1),\ldots,\pi(\ell \cdot  i)\}$ for $i \geq 1$.
The compressing scheme will thus be similar to the coloring
scheme, however there are two distinguishing factors: We will
want to compress some chains more than others - a notion that would
correspond to asking some vertices to use small colors while
allowing others to use larger ones. Furthermore our chains will
now grow arbitrarily fast (and not just in steps of $1$ or more generally
$\ell$).
We now describe the precise problem.

For a chain $\A = \langle A_0,\ldots,A_k \rangle$ we say
the length of the chain, denoted $\lgt(\A)$, is the parameter $k$.
We use $\sz(\A)$ denote the size of the final set $|A_k|$.
For a chain $\A$ of length at least $i$, we let $\A_i$ denote
its prefix of length $i$, i.e., $\A_i = \langle A_0,\ldots,A_i\rangle$.

For chain $\A = \langle A_0,\ldots,A_k \rangle$ and
chain $\B=\langle B_0,\ldots,B_{k-d} \rangle$, we say
$\B$ is within distance $d$ from $\A$ if for all $i\in\{0,...,k-d\}$,
$A_{i-d}\subseteq B_{i}\subseteq A_{i+d}$
(where we consider sets with negative index to be the empty set).
We denote the
set of all chains that are within $d$ distance from $\A$ by $S^{d}(\A)$.
Our goal next is to compress the leader of
chains so that the length of the
compression is small as a function of $\sz(\A)$, while it remains
unambiguous to chains that are nearby.

\begin{lemma} \label{lem: chain coloring scheme}
There exists a coloring scheme $\col:\Z^+ \times \chain(N)  \to \Z^+$ with
the following properties:
\begin{enumerate}
\item If $\lgt(\A) \geq 2k$, then for every $s\geq \sz(\A_{2k})$,
$\col(s,\A_{2k}) \leq 2^{6(s+1)}\log^{(k)}N$.
\item Let $\A$ and $\A'$ be chains of the same
length, with $\lgt(\A) \geq 2k$ and of size at most $s$. Then, if $S^{1}(\A) \cap
S^{1}(\A') \ne \emptyset$ and $A_0 \ne A'_0$, then
$\col(s,\A_{2k}) \ne \col(s,\A'_{2k})$.
\end{enumerate}
\end{lemma}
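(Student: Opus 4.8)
The plan is to mimic the inductive, nested-homomorphism coloring strategy of Section~\ref{ssec:graph-ub}, but with two modifications dictated by the statement: the chains now grow by arbitrary amounts (not in fixed steps of $\ell$), and we want the color length to scale with $\sz(\A)$ rather than uniformly. I would define $\col(s,\A)$ by induction on $k = \lfloor \lgt(\A)/2 \rfloor$ (or, equivalently, peel off the last two sets $A_{2k-1}, A_{2k}$ at each step, analogous to stepping from $\calu_{N,\ell,k}$ down to $\calu_{N,\ell,k-\ell}$). The base case is $k=0$: a chain of length $0$ is just a singleton $\{w\}$, and we color it by $w \in [N]$, which costs $\log N \le 2^{6(s+1)}\log^{(0)} N$ once we interpret $\log^{(0)} N = N$ appropriately (or, more cleanly, by handling $k=0$ with the trivial $N$-coloring and noting the claimed bound dominates it).

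For the inductive step, fix a chain $\A$ of length $\ge 2k$ and size $\le s$. Let $\B = \A_{2(k-1)}$ be the prefix obtained by dropping $A_{2k-1}$ and $A_{2k}$; this is the analogue of the homomorphism $\phi$. By induction we have a coloring $\col(s, \B_{\cdot})$ using at most $c' := 2^{6(s+1)}\log^{(k-1)} N$ colors. The key combinatorial point — replacing Claim~\ref{clm:dk} — is to bound the number of distinct ``parent colors'' $\col(s, \B')$ arising from chains $\A'$ with $A'_0 \ne A_0$, $S^1(\A) \cap S^1(\A') \ne \emptyset$, size $\le s$, and $\B' = \A'_{2(k-1)}$. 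Each such $\A'$ has, for every $i$, $A'_i \subseteq A_{i+1}$ and $A'_i \supseteq A_{i-1}$; so the prefix $\B' = \A'_{2(k-1)}$ is sandwiched between $\A_{2k-3}$ and $\A_{2k-1}$ setwise at each level, and since $|A_{2k-1}| \le s$, the number of possibilities for $\B'$ is at most $2^{s \cdot O(k)}$... which is too large. So instead I would only count the number of distinct \emph{colors} $\col(s,\B')$ rather than the number of chains $\B'$: we only need that $d := |\{\col(s,\B') : \A' \text{ a valid neighbor}\}|$ is bounded by something like $2^{O(s)}$, independent of $k$ and $N$. Concretely, I would argue that $\B'$ is determined up to $S^1$-equivalence by which of the $\le s$ elements of $A_{2k-1}$ lie in each $A'_i$, but since $\A'$ has the same size bound $s$, and nearby chains have nearly-equal sets, the relevant data is an interleaving of at most $s$ elements — giving $d \le s! \cdot 2^{O(s)}$ or similar, hence $\log d = O(s \log s)$, and in fact I believe the intended bound is $d \le 2^{O(s)}$ after a more careful count exploiting that $|A'_{2(k-1)}| \le s$ and the nesting. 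Then apply Lemma~\ref{lem:log-color}-style hashing: color $\A$ by a pair $(j, h_j(\col(s,\B)))$ where $\calh$ is a small family of hash functions $[c'] \to [2d]$ separating any $d$-subset from any outside point, of size $\log((c')^{d+1}) = O(d \log c')$. This yields $\col$-range at most $2d \cdot O(d \log c') = O(d^2 \log c') \le 2^{O(s)} \cdot \log\!\big(2^{6(s+1)}\log^{(k-1)}N\big) \le 2^{6(s+1)} \log^{(k)} N$, choosing the constant $6$ (and the $(s+1)$) large enough to absorb the $O(d^2)$ factor and the $+6(s+1)$ additive term from $\log(2^{6(s+1)} \cdot \ldots) = 6(s+1) + \log^{(k)} N$.

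Validity of the coloring (Part~2) follows exactly as in Lemma~\ref{lem:log-color}: if $\A$ and $\A'$ are same-length chains of size $\le s$ with $S^1(\A) \cap S^1(\A') \ne \emptyset$ and $A_0 \ne A'_0$, then their length-$2(k-1)$ prefixes $\B, \B'$ satisfy $\col(s,\B') \in S_{\A} := \{\col(s,\B'') : \ldots\}$, the set whose size is $d$; and by induction $\col(s,\B) \notin S_\A$ because... wait, here is the subtlety — we need $\col(s,\B) \ne \col(s,\B')$, which by induction holds provided $\B$ and $\B'$ themselves satisfy the hypotheses, i.e. $B_0 \ne B'_0$ and $S^1(\B) \cap S^1(\B') \ne \emptyset$. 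The first holds since $B_0 = A_0 \ne A'_0 = B'_0$. The second requires checking that a common chain in $S^1(\A) \cap S^1(\A')$ restricts to a common chain in $S^1(\B) \cap S^1(\B')$, which is immediate from the definition of $S^1$ on prefixes. So either $\col(s,\B) \ne \col(s,\B')$ and hence the hash-value differs after separating, or we fall back on the index $j$ — exactly the two-case argument of Lemma~\ref{lem:log-color}. I expect the main obstacle to be pinning down the correct bound on $d$ (the number of distinct parent colors), i.e. proving it is $2^{O(s)}$ independent of $k$ and $N$; this is where all the care about ``arbitrarily fast growing chains'' and the size-$s$ bound must be spent, and it is the analogue of Claim~\ref{clm:dk} but genuinely more delicate because the chain can jump by a lot in one step. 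A secondary nuisance is bookkeeping the exact constant in the exponent $6(s+1)$ so that it survives the $\log(2^{6(s+1)} \cdots)$ and the $O(d^2)$ multiplicative slack; I would leave that as a routine ``choose the constant large enough'' verification at the end.
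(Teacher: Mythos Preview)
Your plan is essentially the paper's proof: same induction on $k$, same base case (color by the leader $w\in[N]$), same ``peel off two levels and hash the previous color'' step via an isolating family, and the same validity argument. Your check that a common chain in $S^1(\A)\cap S^1(\A')$ restricts to one in $S^1(\A_{2k-2})\cap S^1(\A'_{2k-2})$ is correct and is what lets the induction hypothesis apply; the paper phrases this equivalently as $\A'_{2k-2}\in S^2(\A_{2k})$.

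The one genuine gap is your bound on $d$. You first estimate the number of candidate prefixes $\B'$ as $2^{s\cdot O(k)}$, see that this is too large, retreat to counting colors rather than chains, and land at $d\le s!\cdot 2^{O(s)}=2^{O(s\log s)}$ while hoping for $2^{O(s)}$. In fact the \emph{chain} count is already $2^{O(s)}$, with no dependence on $k$ at all: every $\B\in S^2(\A_{2k})$ is a chain contained in $A_{2k}$ and hence is determined by the entry level in $\B$ of each element $e\in A_{2k}$; if $e$ first appears in $\A$ at level $j$ (so $e\in A_j\setminus A_{j-1}$), then the sandwich $A_{i-2}\subseteq B_i\subseteq A_{i+2}$ forces $e$'s entry level in $\B$ into the window $\{j-2,\ldots,j+2\}$. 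That is at most five choices per element, so $|S^2(\A_{2k})|\le 5^{|A_{2k}|}\le 5^s$, regardless of $k$ and regardless of how fast the chain grows --- this is the exact analogue of Claim~\ref{clm:dk}, not more delicate. With $d\le 5^s$ the hashing step uses at most $2^{O(s)}\log c_{k-1,s}$ colors, and the constant $6$ in the exponent $6(s+1)$ is precisely what absorbs both the $2^{O(s)}$ multiplicative factor and the additive term $6(s+1)$ coming from $\log\bigl(2^{6(s+1)}\log^{(k-1)}N\bigr)=6(s+1)+\log^{(k)}N$. Your weaker $2^{O(s\log s)}$ bound would not reach the stated $2^{6(s+1)}\log^{(k)}N$.
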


\begin{proof}
Let $c_{k,s} = 2^{6(s+1)}\log^{(k)}N$.
Fix $s \geq \sz(\A)$.
We now describe a coloring scheme of a chain $\A_{2k}$
with $c_{k,s}$ colors, using induction on $k$.

For the base case $k=0$, 
Let $w$ be the leader of $\A$. Then $\A_0$ gets
the color $\col(s,\A_0) = w$, so clearly $\col(s,\A_0)
\leq N = \log^{(0)} N$.

For $k\geq 1$, let
$\ell=2.5s$ and let
$\calh$ be an $(\ell,c_{k-1,s})$-isolating family
(where isolating families were defined as in
Section~\ref{ssec:simple-ucs}).
By Lemma~\ref{lem:isol} such a family of size $M = 2^{\ell}\log c_{k-1,s}$
exists, so let $\calh = \left\{ h_{i}\right\} _{i=1}^{M}$.
Let $T = \{\B | \lgt(\B) = 2k-2, \B \in S^{2}(\A_{2k}), \col(s,\B)
\ne \col(s,\A_{2k-2})\}$.
Let $j\in[2^{\ell}\log c_{k-1,s}]$
be such that $h_{j}\left(\col(s,\A_{2k-2})\right)
\neq h_{j}\left(\col(s,\B)\right)$
for all $\B \in T$.
With these definitions in place, we define $\col(s,\A_{2k})$ to be $\left(j,h_{j}\left(\col(s,\A_{2k-2})\right)\right)$.
We verify below that this is a ``small'' coloring and a valid one.

Let us identify the set $[2^{\ell}\log c_{k-1,s}]\times\{0,1\}^{\ell}$ with
$\left[2^{2\ell}\log c_{k-1,s}\right]$.
The bound on $c_{k,s}$ follows from the fact that
\begin{eqnarray*}
\lefteqn{2^{2\ell}\log c_{k-1,s}}\\
& \leq & 2^{5 s}\log\left(2^{6(s+1)}\log^{(k-1)}N\right)\\
 & \leq & 2^{5s}\left(6(s+1)+\log^{(k)}N\right)\\
 & \leq & 2^{6(s+1)}\log^{(k)}N,
\end{eqnarray*}
where the final inequality follows from the fact that $2^s\cdot 2^6 \geq 6(s+1)$
which is true for every $s \geq 0$.

We now verify that the coloring satisfies the requirement in Part (2) of the lemma
statement, i.e., that for chains $\A$ and $\A'$ of the
same length and size at most $s$, if their prefixes have the same colors,
then they
have the same leader. Again we proceed by induction on $k$.
Assume $\col(s,\A_{2k}) = \col(s,\A'_{2k})$.

For $k=0$, by assumption we have $\col(s,\A_{0})=\col(s,\A'_{0})$.
But by definition $\col(s,\A_0) = w$ where $w$ is the leader
of $\A_0$. It follows thus that $w$ is also the leader of $\A'_0$
as claimed.

Now consider $k\geq 1$.
Let $\col(s,\A_{2k}) = (j,h_j(\col(s,\A_{2k-2})))$
and
$\col(s,\A'_{2k}) = (j',h_{j'}(\col(s,\A'_{2k-2})))$.
Since $\col(s,\A'_{2k})=\col(s,\A_{2k})$,
we have $j = j'$. Moreover, $h_j(\col(s,\A_{2k-2})) = h_j(\col(s,\A'_{2k-2}))$.

We now show that  $\A'_{2k-2} \in S^2 (\A_{2k})$. Let $\B \in S^1(\A) \cap S^1(\A')$ and consider its prefix $\langle B_0,\ldots,B_{2k-1}\rangle$. So, for every $i\in\{0,...,2k-1\}$
$$A_{i-1}\subseteq B_{i}\subseteq A_{i+1} \; \mbox{and}\; A'_{i-1}\subseteq B_{i}\subseteq A'_{i+1}\;.$$
In other words, for all $i\in\{0,...,2k-2\}$
$$A_{i-2}\subseteq B_{i-1}\subseteq A'_{i}\subseteq B_{i+1}\subseteq A_{i+2}\;,$$
Hence $\A_{2k-2}'\in S^{2}(\A_{2k})$.

From our choice of $j$, $h_j(\col(s,\A_{2k-2})) = h_j(\col(s,\A'_{2k-2}))$ for $\A'_{2k-2} \in S^2 (\A_{2k})$
only if $\col(s,\A'_{2k-2}) = \col(s,\A_{2k-2})$.
For conclusion, $\A_{2k-2}$ and $\A'_{2k-2}$ are both chains of size at most $s$ of the same length, and have the same color. From the induction hypothesis  they have the same leader.
\end{proof}

\subsubsection{The Compression Scheme}

We are now ready to define our final compression scheme.

\textbf{Encoding}:
Given $m,P$ define
$r=\left\lfloor - \log P(m)\right\rfloor $
and $f=2\left\lfloor \log^{*}N\right\rfloor -1$.
Further define the chain $\A$ of length $f$
as follows. $A_{0}=\{m\}$
and $A_{k}
=\left\{ m' \in [N] \mid
|\log 1/P(m') - r | \leq \Delta+1
\right\}$
(so that $A_k$ is the set of messages of probability roughly $P(m)$
with the difference in logarithms being at most $(k+1)\Delta+1$).
Let $s = \sz(\A)$.
The encoding $\elow(P,m) = E(P,m)$ is
\[
E(P,m)=\begin{cases}
\left(s,r,\col\left(s,\A\right)\right) & \mathrm{if~}s \leq2^{\frac{H(P)}{\epsilon}+2\Delta\log^{*}N+1}\\
\bot & \mathrm{otherwise.}
\end{cases}
\]
(We assume that $s$ and $r$ above are encoded in some prefix-free
encoding, so that the receiver can separate the three parts.)

\textbf{Decoding}: The decoding function $\dlow(Q,y) = D(Q,y)$
works as follows: If $y = \bot$ then the decoder
outputs $\bot$. Else let $y=(s,r,c)$ and let
$f = 2\lfloor \log^* N \rfloor - 1$.
Let $\B=\langle B_0,\ldots,B_{f-1} \rangle$ be as follows:
$B_{0}=\{w\}$ such
that $| \log 1/Q(w) - r | \leq \Delta + 1$.
For $k \geq 1$,
$B_{k}=\left\{ m'\mid | \log 1/Q(m/) - r | \leq (k+1)\Delta + 1
\right\}$.
Find a chain $\A'$ with the following properties: $ \B \in S^{1}(\A')$, $\lgt(\A') = f$, $\sz(\A') \leq s$  and
$\col(s,\A')=c$. Let $\hat{m}$ be the leader of $\A'$. The decoding $D(Q,y)$ is set
to be $\hat{m}$.

We first analyze the correctness of the decoder.

\begin{lemma}\label{low entropy: correctness}
For every pair of distributions $P$, $Q$ such that
$\delta(P,Q)\leq\Delta$ and for every message $m\in[N]$
such that $\elow(P,m)\neq\bot$, it
holds that $\dlow(Q, \elow(P,m))=m$.
\end{lemma}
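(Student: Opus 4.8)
The plan is to reduce correctness entirely to Part~(2) of Lemma~\ref{lem: chain coloring scheme}; Part~(1) is only needed for the performance analysis and for bounding the probability of outputting $\bot$, not here. I would fix $P,Q$ with $\delta(P,Q)\le\Delta$ and a message $m$ with $\elow(P,m)\ne\bot$, so that $E(P,m)=(s,r,c)$ where $r=\lfloor-\log P(m)\rfloor$, $\A$ is the encoder's chain of length $f=2\lfloor\log^{*}N\rfloor-1$, $s=\sz(\A)$, and $c=\col(s,\A)$. Let $\B$ be the chain the decoder builds on input $(Q,(s,r,c))$ (with some leader $w$ it selects), let $\A'$ be the chain the decoder then finds, and let $\hat m$ be its output, i.e.\ the leader of $\A'$. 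Since $\dlow(Q,\elow(P,m))=\hat m$, the goal is to show $A'_0=\{m\}=A_0$.

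First I would show that the decoder's chain is sandwiched between consecutive levels of the encoder's chain, i.e.\ $\B\in S^{1}(\A)$. Since $P,Q$ are $\Delta$-close, $|\log(1/P(m'))-\log(1/Q(m'))|\le\Delta$ for every $m'\in[N]$; comparing $A_k=\{m':|\log(1/P(m'))-r|\le(k+1)\Delta+1\}$ with $B_k=\{m':|\log(1/Q(m'))-r|\le(k+1)\Delta+1\}$, the triangle inequality yields $A_{k-1}\subseteq B_k\subseteq A_{k+1}$ for every $k\ge1$ (using for $k=1$ that $A_0=\{m\}$ with $|\log(1/P(m))-r|<1\le\Delta+1$), and for the index-$0$ constraint $A_{-1}=\emptyset\subseteq B_0$ while $B_0=\{w\}$ satisfies $|\log(1/Q(w))-r|\le\Delta+1$, hence $|\log(1/P(w))-r|\le2\Delta+1$, so $B_0\subseteq A_1$. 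Crucially this holds for \emph{any} leader $w$ the decoder could have picked (and $m$ is one such valid leader, but the argument does not use this).

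Next I would note that $\A$ is itself a legal value for $\A'$: it has length $f$, size $\sz(\A)=s$, color $\col(s,\A)=c$, and $\B\in S^{1}(\A)$; hence the decoder's search is nonempty and returns some $\A'$ with $\lgt(\A')=f$, $\sz(\A')\le s$, $\col(s,\A')=c$, and $\B\in S^{1}(\A')$. Then $\A$ and $\A'$ have the same length, both have size at most $s$, they share the color $c=\col(s,\A)=\col(s,\A')$, and $S^{1}(\A)\cap S^{1}(\A')\supseteq\{\B\}\ne\emptyset$. Part~(2) of Lemma~\ref{lem: chain coloring scheme} then forbids $A_0\ne A'_0$, so $A'_0=A_0=\{m\}$ and $\hat m=m$.

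The step requiring the most care is lining up the parameters with Lemma~\ref{lem: chain coloring scheme}, which is phrased for even-length prefixes $\A_{2k}$ whereas $f=2\lfloor\log^{*}N\rfloor-1$ is odd: this is exactly why the decoder builds $\B$ of length $f-1$ (a distance-$1$ chain of a length-$f$ chain), and one applies Part~(2) with $2k=f-1$, i.e.\ $k=\lfloor\log^{*}N\rfloor-1$, using that $\col(s,\A)$ is determined by $\A_{f-1}$. Beyond that, one must check that the inclusions in the sandwiching step hold at the extreme indices (treating negatively indexed sets as empty) and, most importantly, that they are independent of the decoder's choice of $w$, so that Lemma~\ref{lem: chain coloring scheme} can be invoked even though the decoder never has direct access to $m$; everything else is routine bookkeeping.
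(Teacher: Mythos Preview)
Your proposal is correct and follows essentially the same route as the paper: establish $\B\in S^{1}(\A)$ via the $\Delta$-closeness of $P$ and $Q$, observe that $\A$ itself witnesses the nonemptiness of the decoder's search, and then invoke Part~(2) of Lemma~\ref{lem: chain coloring scheme} to force $A'_0=A_0$. Your treatment is in fact slightly more careful than the paper's in two respects: you explicitly note that the sandwiching argument is independent of which $w$ the decoder selects, and you flag the parity mismatch between the odd length $f$ and the even-length prefixes appearing in Lemma~\ref{lem: chain coloring scheme}, a point the paper's proof does not address.
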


\begin{proof}
Fix $P\in\calp([N])$ and a message $m\in[N]$ such
that $\elow(P,m)\neq\bot$. The following claims will show
that the decoding process is well defined (and then correctness
will be essentially be immediate).

\begin{claim} \label{cla:m in B_1}
There exists $w\in[N]$ such that $| \log 1/Q(w) - r | \leq
\Delta + 1$.
\end{claim}

\begin{proof}
By our choice of $r$, we have $|\log 1/P(m) - r | \leq 1$.
Now using
$\delta(P,Q)\leq\Delta$, we have $|\log 1/P(m) - \log 1/Q(m) | \leq
\Delta$, and so $|\log 1/Q(m) - r| \leq \Delta+1$.
So $w=m$ gives an element in $[N]$ with the
desired property.
\end{proof}

Thus the chain $\B$ is now well-defined. It remains to show that
there exists a chain $\A'$ satisfying the required properties.
The next claim shows that $\B \in S^1(\A)$, therefore $\A$ is
a candidate for the role of $\A'$.
\begin{claim}
\label{A in S(B)}
$\B \in S^1(\A)$ .
\end{claim}
\begin{proof}
The proof follows easily from our choice of $\A$, $\B$ and the fact that $P$ and $Q$ are $\Delta$-close.
Let $k\in\{0,...,f-1\}$. We need to show that $B_{k}$ is sandwiched
between $A_{k-1}$ and $A_{k+1}$.

First, We will show that $B_{k}\subseteq A_{k+1}$.
When $k=0$, we need to show that $w\in A_{1}$. Indeed,
\begin{eqnarray*}
& & |\log 1/Q(w)  - r| \leq \Delta + 1 \\
& \Rightarrow & | \log 1/P(w) - r | \leq 2\Delta + 1\\
 & \Rightarrow & w\in A_{1}\;.
\end{eqnarray*}

Now consider $1\leq k\leq f-1$. We have,
\begin{eqnarray*}
B_{k} & = & \left\{ m'\in[N]\mid
| \log 1/Q(m') - r | \leq (k+1)\Delta + 1
\right\} \\
& \subseteq & \left\{ m'\in[N]\mid
|\log 1/P(m') - r | \leq (k+2)\Delta + 1
\right\} \\
 &=&A_{k+1}\;.
\end{eqnarray*}
This shows that $B_{k}\subseteq A_{k+1}$.
Next we show that $A_{k-1}\subseteq B_{k}$,
for $2\leq k\leq f-1$. We have
\begin{eqnarray*}
A_{k-1} & = & \left\{ m'\in[N]\mid
|\log 1/P(m') - r| \leq k\Delta+1
\right\} \\
& \subseteq & \left\{ m'\in[N]\mid
|\log 1/Q(m') - r| \leq (k+1)\Delta + 1
\right\}\\
 & =& B_{k}\;.
\end{eqnarray*}
The case where $k=1$ and $w\in B_{1}$ was proved in Claim~\ref{cla:m in B_1}. So we are done.
\end{proof}

To conclude, the decoder  can find a chain $\A'$ such that
$\sz(A') \leq s$, $\lgt(\A') = \lgt(\A)$, $\col(s,\A')=\col(s,\A)$
and there exists a chain $\B \in S^1(\A') \cap S^1(\A)$.
From Lemma~\ref{lem: chain coloring scheme} the leader
of $A'$ is $m$ as required.
\end{proof}

We are now ready to prove Theorem~\ref{thm:second}.

\begin{proof}
We now estimate the probability that the encoder will fail. Fix some
probability $P$ and a message $m$ such that $E(P,m)=\bot$. We will first show that $P(m) \leq 2^{-\frac{H(P)}{\epsilon}}$. Later,  we will bound the probability that ``$m$ has such small probability'' by $\epsilon$.

Consider the chain $\A = \langle A_0,\ldots,A_{f} \rangle$ as defined by the encoder.
In this case, the size of the largest set, $|A_{f}|$, is more then the threshold $T=2^{\frac{H(P)}{\epsilon}+2\Delta\log^{*}N+1}$.
So, there is some element $m'\in A_{f}$ such that $P(m')\leq\frac{1}{T}$.
By our choice of $A_{f}$, $P(m')\geq 2^{-\left\lfloor -\log P(m)\right\rfloor -(f+1)\Delta-1}\geq P(m) 2^{-2  \Delta \log^{*}N-1}$.
Calculating,
\[
\frac{1}{T}\geq P(m) 2^{-2  \Delta \log^{*}N-1}\Rightarrow P(m)\leq\frac{2^{2  \Delta \log^{*}N+1}}{T}=2^{-\frac{H(P)}{\epsilon}}
\]

Therefore, we can bound the failure probability by the probability that
$P(m)\leq2^{-\frac{H(P)}{\epsilon}}$. Using the fact that
$\E_{m\leftarrow_{P}[N]}\left[\log\frac{1}{P(m)}\right] = H(P)$,
we deduce the following by Markov's inequality,
\[
\Pr_{m\leftarrow_{P}[N]}\left[P(m)\leq2^{-\frac{H(P)}{\epsilon}}\right]=\Pr_{m\leftarrow_{P}[N]}\left[\log\frac{1}{P(m)}\geq\frac{H(P)}{\epsilon}\right]\leq\epsilon
\]

We will finish the proof by bounding the performance of the scheme.
To this end
consider a distribution $P$ and a message $m\in[N]$ such that $E(P,m)\neq\bot$
(i.e $ \sz(\A) \leq T)$. The encoder sends $r=\left\lfloor -\log P(m)\right\rfloor $,  $s=\sz(\A)$
and $\col(s,\A)$. 
We first analyze the contribution of sending $r$ to the performance.
Because $\log|r| = O \left( \log(\frac{1}{P(m)} ) \right)$, the accepted
length of sending $r$ in a prefix-free encoding is at most $O\left( \E_{m\leftarrow_{P}[N]}\log(\frac{1}{P(m)})\right) = O \left(H(P)\right)$.

Now we analyze the length of $\left(s, \col(s,\A) \right)$ . By Lemma~\ref{lem: chain coloring scheme}:
\[
C(s,\A)\leq2^{6(s+1)}\log^{(f)}N=2^{O\left(s\right)}
\]
Hence, the length of $\left(s, \col(s,\A) \right)$ is at most
 $$O(\log s) + \log C(s,\A)=O(s)=2^{\frac{H(P)}{\epsilon}+2\Delta\log^{*}n+O(1)}\;.$$
Thus,
from the  linearity of expectations,
it follow that the total performance is at most
$2^{\frac{H(P)}{\epsilon}+2\Delta\log^{*}n+O(1)}$.
\end{proof}

\subsection{Error-free Compression for Natural Distributions}
\label{ssec:natural}

In this section we will show that for a large class of natural distributions, the
above scheme is error free. We start by describing the natural
distributions we can capture.

We say that a distribution $P\in \calp([N])$ is {\it flat} it there exists a set $S\subseteq [N]$ such that $P$ is uniform on $S$.
The distribution is  called {\it geometric} if there exists parameter $\alpha \in (0,1)$ and a permutation $\pi$ on $[N]$ such that for all 
$k \in [N-1]$ it holds that  $P(\pi(k+1)) = \alpha P(\pi(k))$.
We  call $P$ {\it binomial} if there exists a parameter $p \in (0,1)$ and
a permutation $\pi$ on $[N]$ such that $\forall k \in [N]$, 
$P(\pi(k)) = {{N}\choose{k}} p^{k} {(1-p)}^{n-k}$.
The sets of all flat, geometric and binomial distributions over $[N]$ are denoted by $\mathrm{Flat}_N$,  $\mathrm{Geo}_N$ and  $\mathrm{Bin}_N$ respectively.

The following theorem shows that the scheme $(\elow,\dlow)$ performs well {\em without error} on all of the above natural distributions.
Moreover, this theorem is stable in the sense that the guarantee on 
the performance holds even if a distribution is only close to one of 
the above-mentioned natural distributions.

\begin{thm}
\label{thm:low entropy:restricted dist}
Let $\calf \triangleq \mathrm{Flat}_N \cup \mathrm{Geo}_N \cup \mathrm{Bin}_N$
and $L(P) \triangleq 2^{H(P)}\left\lceil \Delta\log^{*}N\right\rceil$. Then
the scheme $(\elow,\dlow)$ (with $\epsilon$ set to $0$) is
a $\left(\Delta,0, \calf,O\left(L(P)\right)\right)$-UCS.
Moreover, if $P\in \calp([N])$ is $\Delta \log^{*} N$-close to a distribution $\tilde{P} \in \calf$
then the performance of the scheme on $P$ is
$\E_{m \from_P U} [|E(P,m)|] = O\left(L(\tilde{P})\right)$.
\end{thm}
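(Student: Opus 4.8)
The plan is to show that for each of the three families of natural distributions, the encoder never outputs $\bot$, i.e., the chain $\A$ constructed by the encoder always has $\sz(\A) \leq T := 2^{H(P)/\epsilon + 2\Delta\log^* N + 1}$; since there is no error and the performance bound from the proof of Theorem~\ref{thm:second} only used $\sz(\A) \leq T$, this immediately gives the claimed $(\Delta, 0, \calf, O(L(P)))$-UCS. The key observation is that $\sz(\A) = |A_f|$ where $A_f = \{m' : |\log 1/P(m') - r| \leq (f+1)\Delta + 1\}$ with $r = \lfloor -\log P(m)\rfloor$; so I need to bound, for each natural distribution, the number of elements whose probability is within a multiplicative factor $2^{O(\Delta\log^* N)}$ of $P(m)$, and show this is at most $2^{H(P)}\lceil \Delta \log^* N\rceil$ (up to constants), which is $\leq T$ whenever $H(P)/\epsilon \geq H(P)$, i.e., always (since $\epsilon \leq 1$), with room to spare. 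Actually I should be a little careful: I want the bound $O(L(P)) = O(2^{H(P)}\lceil\Delta\log^* N\rceil)$ on the expected encoding length, which by the analysis in Theorem~\ref{thm:second} is $2^{O(\sz(\A))}$ in the worst case — wait, that's exponential in $s$, so I actually need $\sz(\A) = O(H(P) + \log(\Delta\log^* N))$, i.e., $|A_f|$ itself should be polynomially bounded, roughly $2^{H(P)}\cdot \mathrm{poly}(\Delta\log^* N)$, so that $2^{O(s)}$... no. Let me reconsider: the encoding length is $O(s) + \log C(s,\A) = O(s)$ where $s = \sz(\A)$, so the expected length is $O(\E_m[\sz(\A)])$, and I need this to be $O(2^{H(P)}\lceil\Delta\log^* N\rceil)$. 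So the real target is $\E_{m\from_P}[\,|A_f|\,] = O(2^{H(P)}\lceil\Delta\log^* N\rceil)$, and also the pointwise bound $|A_f| \leq T$ so that nothing is dropped.

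The main technical work is the case analysis. For \textbf{flat} distributions $P$ uniform on $S$ with $|S| = 2^{H(P)}$: every element of $S$ has $\log 1/P(m') = H(P)$, so for any $m \in S$, $r \in \{H(P)-1, H(P)\}$ (taking the floor) and $A_f = S$, giving $\sz(\A) = 2^{H(P)} \leq T$ trivially, and $\E[\sz(\A)] = 2^{H(P)}$, well within $O(L(P))$. For \textbf{geometric} $P$ with ratio $\alpha$ (say $\alpha < 1$), the probabilities $P(\pi(k)) = P(\pi(1))\alpha^{k-1}$ are a geometric sequence; the set of $m'$ with $|\log 1/P(m') - r| \leq (f+1)\Delta+1$ is an interval of consecutive $\pi(k)$'s whose length is at most $\frac{2((f+1)\Delta + 1)}{\log(1/\alpha)} + 1 = O\big(\frac{\Delta\log^* N}{\log(1/\alpha)}\big)$; and since $H(P) = \frac{1}{\log(1/\alpha)}\cdot(\text{something}\geq 1)$ up to lower-order terms (entropy of a geometric grows like $1/\log(1/\alpha)$ as $\alpha \to 1$, and is $\Theta(1)$ otherwise), one checks $\frac{1}{\log(1/\alpha)} \leq O(2^{H(P)})$ (in fact $\leq O(H(P)+1)$, even better), so $\sz(\A) = O(2^{H(P)}\Delta\log^* N) = O(L(P)) \leq T$ for every message $m$. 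For \textbf{binomial} $P(\pi(k)) = \binom{N}{k}p^k(1-p)^{N-k}$: the probabilities are unimodal in $k$, peaking near $k^* = pN$ with peak value $\Theta(1/\sqrt{Np(1-p)})$, and $H(P) = \frac12\log(2\pi e\, Np(1-p)) + o(1)$, so $2^{H(P)} = \Theta(\sqrt{Np(1-p)})$. The set $A_f$ consists of those $k$ with $\binom Nk p^k(1-p)^{N-k}$ within a multiplicative $2^{O(\Delta\log^* N)}$ of a fixed value $2^{-r}$; by the local limit (Gaussian) approximation $P(\pi(k)) \approx \frac{1}{\sqrt{2\pi Np(1-p)}}e^{-(k-k^*)^2/(2Np(1-p))}$, the set of $k$ with probability in a window spanning $c := O(\Delta\log^* N)$ bits is contained in an annulus $\{k : |(k-k^*)^2 - (k_0 - k^*)^2| \leq 2c\ln 2 \cdot Np(1-p)\}$ for appropriate $k_0$, whose size is $O\big(\min\{\sqrt{c \cdot Np(1-p)},\ c \cdot \sqrt{Np(1-p)}\}\big) = O(\sqrt{c}\cdot 2^{H(P)})$ near the peak and $O(c \cdot 2^{H(P)})$ in the tails — in all cases $O(\Delta\log^*N \cdot 2^{H(P)}) = O(L(P))$, and $\leq T$.

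The \textbf{stability} clause follows by a perturbation argument: if $P$ is $\Delta' := \Delta\log^* N$-close to some $\tilde P \in \calf$, then $\log 1/P(m') = \log 1/\tilde P(m') \pm \Delta'$ for all $m'$, so the set $A_f$ defined using $P$ is sandwiched between the analogous sets defined using $\tilde P$ with the threshold $(f+1)\Delta+1$ replaced by $(f+1)\Delta + 1 + \Delta'$ — i.e., just an $O(1)$-factor larger window — hence $|A_f| \leq O(|\tilde A_{f}|) = O(L(\tilde P))$ by the bounds just established (the constant hidden in the window size is absorbed), and likewise $\E_{m\from_P}[\,|A_f|\,]$: note $H(P) = H(\tilde P) \pm O(\Delta')$ so $2^{H(P)} = \Theta(2^{H(\tilde P)})$ only if $\Delta' = O(1)$, which is false in general; instead I keep the bound in terms of $\tilde P$ directly, observing that each window-count for $P$ is an $O(1)$-enlargement of a window-count for $\tilde P$, and these enlarged counts are still $O(L(\tilde P))$ by re-running the three estimates with $c$ replaced by $c + \Delta'$, which only changes constants since $c = \Theta(\Delta\log^* N) = \Theta(\Delta')$ already.

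\textbf{The main obstacle} I expect is the binomial case: getting a clean, rigorous bound on $|A_f|$ requires controlling $\binom Nk p^k(1-p)^{N-k}$ as a function of $k$ uniformly over all $k$ (not just near the peak), relating the multiplicative window width to the number of lattice points $k$ in it, and matching the answer against $2^{H(P)} = \Theta(\sqrt{Np(1-p)})$; doing this by hand via Stirling bounds rather than citing a local limit theorem is the fiddly part, and one must handle the degenerate regimes $Np(1-p) = O(1)$ (where $2^{H(P)} = \Theta(1)$ and the claim is easy) and the extreme tails separately.
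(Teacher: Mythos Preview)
Your approach is correct and reaches the same conclusion, but the paper organizes the argument differently and more modularly. Rather than bounding $|A_f|$ directly for each family, the paper introduces an intermediate notion: the \emph{capacity} $\U(P)$, defined so that every ``unit set'' (elements whose log-probabilities lie within $1$ of each other) has size at most $2^{\U(P)}$. The argument then splits cleanly into two pieces: (i) a distribution-agnostic lemma showing the encoder's set $A_f$ is always covered by $O(\lceil\Delta\log^* N\rceil)$ unit sets of any distribution $\tilde P$ that is $\Delta\log^* N$-close to $P$, whence $|A_f| = O(2^{\U(\tilde P)}\lceil\Delta\log^* N\rceil)$; and (ii) a family-specific lemma showing $\U(P) \leq H(P) + O(1)$ for flat, geometric, and binomial $P$. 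This buys two things over your direct route. First, the stability clause is automatic from (i) --- you do not need to ``re-run the three estimates with enlarged windows,'' because the unit-set covering already works for any nearby $\tilde P$. Second, the binomial case becomes much cleaner: instead of the Gaussian/local-limit heuristic you correctly flag as the main obstacle, the paper bounds unit-set sizes by an elementary three-region argument using only the ratio $P(k{+}1)/P(k) = \frac{(N-k)p}{(k+1)(1-p)}$, showing it differs from $1$ by at least $\Omega(1/\sqrt{pN})$ outside a central interval of width $O(\sqrt{pN})$; this immediately caps each unit set at $O(\sqrt{pN}) = 2^{H(P)+O(1)}$ without any Stirling or limit-theorem machinery. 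Your direct case analysis would work too, but the capacity abstraction is the tidier path and answers exactly the difficulty you anticipated.
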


We prove the theorem above by identifying a broad condition on
distributions, which we call the {\em capacity}, and showing that
the performance of our scheme is good if the capacity is small.
We define this notion next, show that it is small for the distributions
under consideration in Lemma~\ref{lem:natural dist}
next, and finally bound the 
performance as a function of the capacity in Lemma~\ref{lem:perf-capacity}
afterwards,
thus leading to a proof of Theorem~\ref{thm:low entropy:restricted dist}.

Let $P\in\calp([N])$ be a distribution and
let $S\subseteq[N]$ be its support. We
say that $U\subseteq S$ is a {\it unit set} of $P$ if for any two
elements $m_{1},m_{2}\in U$ the distance $\left|\log P(m_{1})-\log P(m_{2})\right|\leq1$.
We define the {\it capacity}  of $P$, denoted by $\U(P)$, to be
the minimal $c\in \R$ such that the size of
every unit set of $P$ is bounded by $2^{c}$.

\stocnote{The next lemma will show that, for the previously discussed distributions,
the capacity is roughly  the entropy. We omit the proof from this version.}
\fullnote{Later, we will prove the next lemma, showing that for the previously discussed distributions,
the capacity is roughly  the entropy.
}

\begin{lemma}\label{lem:natural dist}
Let $P \in \mathrm{Flat}_N \cup \mathrm{Geo}_N \cup \mathrm{Bin}_N$. Then $\U(P)\leq H(P)+O(1)$.
\end{lemma}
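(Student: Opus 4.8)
The plan is to reduce the lemma to one clean inequality. For a distribution $P$ on $[N]$ write $p_{\max}(P) = \max_{m} P(m)$ and let $u^*(P)$ denote the size of the largest unit set of $P$; by the definition of capacity, $\U(P) = \log_2 u^*(P)$. On the other hand, for \emph{every} distribution $P$ one has the trivial bound $H(P) = \sum_m P(m)\log_2(1/P(m)) \ge \sum_m P(m)\log_2(1/p_{\max}(P)) = \log_2(1/p_{\max}(P))$. Hence the lemma will follow once we show
\[
u^*(P)\cdot p_{\max}(P) = O(1) \qquad\text{for every } P\in\mathrm{Flat}_N\cup\mathrm{Geo}_N\cup\mathrm{Bin}_N,
\]
since this yields $\U(P) = \log_2 u^*(P) \le \log_2(1/p_{\max}(P)) + O(1) \le H(P) + O(1)$. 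For a flat distribution supported on a set $S$ this is in fact an equality, as $p_{\max} = 1/|S|$ and $u^* = |S|$.

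For the geometric case, let $\alpha\in(0,1)$ be the ratio and $\pi$ the witnessing permutation, so $P(\pi(k)) = P(\pi(1))\alpha^{k-1}$ is strictly decreasing in $k$ and $p_{\max}(P) = P(\pi(1)) = (1-\alpha)/(1-\alpha^N)$. Since $\log_2 P(\pi(k))$ is monotone in $k$, the interval of ranks between the two extreme ranks of any unit set is itself a unit set; hence the largest unit set is an interval of consecutive ranks, and writing $L = \log_2(1/\alpha)$ its length is $u^*(P) = \min\{N,\, 1 + \lfloor 1/L\rfloor\}$. It then remains to verify $u^*(P)\cdot(1-\alpha)/(1-\alpha^N) = O(1)$ by elementary calculus. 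I would split into the case $u^* = N$ --- which forces $(N-1)L \le 1$, hence $\alpha^{N-1} \ge 1/2$ and $1-\alpha^N \ge (1-\alpha)N/2$, so the product is at most $2$ --- and the case $u^* = 1 + \lfloor 1/L\rfloor \le N$ --- which, if $L \le 1$, forces $NL > 1$, hence $1-\alpha^N > 1/2$, and then $u^*\cdot p_{\max} < 2(1-\alpha)(1 + 1/L) = O(1)$ using $(1-\alpha)/L \le \ln 2$, which is a restatement of $\ln(1/\alpha) \ge 1-\alpha$ (the case $L > 1$ is trivial, as then $u^* = 1$).

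For the binomial case, the probabilities $b_k = \binom{N}{k}p^k(1-p)^{N-k}$ form a log-concave, and in particular unimodal, sequence in $k$, with mode $m_0 \approx pN$ and standard deviation $\sigma = \sqrt{Np(1-p)}$. By unimodality (log-concavity), for any constant ratio the set $\{k : c \le b_k \le 2c\}$ is a union of at most two intervals, one on each side of the mode, so every unit set has size at most twice the length of the longest such interval. Using standard estimates on binomial coefficients --- near the mode $\ln b_k$ is concave with curvature $\Theta(1/\sigma^2)$, and away from the mode it is steep --- every such interval has length $O(\sigma + 1)$, so $u^*(P) = O(\sigma + 1)$. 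Together with the standard peak estimate $p_{\max}(P) = b_{m_0} = \Theta(1/(\sigma + 1))$, this gives $u^*(P)\cdot p_{\max}(P) = O(1)$, as desired.

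The main obstacle is the binomial case: one needs a fully rigorous, $N$- and $p$-uniform version of both the estimate that every unit set has size $O(\sigma + 1)$ and the estimate that $b_{m_0} = \Omega(1/(\sigma + 1))$, including the extreme regimes where $\sigma = O(1)$ or $p$ is near $0$ or $1$. Log-concavity of $(b_k)$ is the right tool --- it makes the two-intervals structure automatic and controls the ratios $b_{k+1}/b_k$ monotonically --- but one must still separately handle unit sets that meet a $\Theta(\sigma)$-neighborhood of the mode (using the $\Theta(1/\sigma^2)$ curvature there) and those that do not (using that $\ln b_k$ is then steep), and obtain the peak-height bound uniformly. The flat and geometric cases are, by contrast, entirely elementary.
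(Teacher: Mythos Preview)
Your reduction is correct and is a genuinely cleaner route than the paper's. You observe that $H(P)\ge \log_2(1/p_{\max}(P))=H_\infty(P)$ holds for every distribution, so it suffices to prove the stronger statement $\U(P)\le H_\infty(P)+O(1)$, i.e.\ $u^*(P)\cdot p_{\max}(P)=O(1)$, for each of the three families. The paper instead handles the three cases independently and, for each, lower-bounds $H(P)$ by an ad hoc computation: trivially for flat; for geometric by evaluating $H(P)$ explicitly as a function of $\alpha$ and $N$ (after first disposing of the near-uniform regime $\alpha^N\ge 1/2$ via the crude bound $H(P)\ge\log N-1$); and for binomial by invoking the asymptotic formula $H(P)\approx\tfrac12\log(2\pi e Np(1-p))$. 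Your framework replaces all of these entropy calculations by the single universal inequality $H\ge H_\infty$, which turns the geometric case into the short calculus exercise you wrote down. For the binomial case the two arguments converge: the paper also partitions $[N]$ into a middle band of width $O(\sqrt{pN})$ around the mode and two tails where the ratio $b_{k+1}/b_k$ is bounded away from $1$ by $\Theta(1/\sqrt{pN})$ --- exactly your curvature/steepness dichotomy with $\sqrt{pN}$ in place of $\sigma$. Your version, pairing $u^*=O(\sigma+1)$ with $p_{\max}=\Theta(1/(\sigma+1))$, is symmetric in $p\leftrightarrow 1-p$ and avoids the paper's appeal to the asymptotic entropy formula, which is not made uniform in $p$ near $0$ or $1$; you are right that this is the regime where the remaining work lies.
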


Theorem \ref{thm:low entropy:restricted dist} follows immediately
from Lemma~\ref{lem:natural dist} combined with the following lemma.

\begin{lemma}
\label{lem:perf-capacity}
For every $P$
$(\elow,\dlow)$ (with respect to $\epsilon=0$)
is a $\left(\Delta,O\left(\log\left(H(P)\right)+2^{\U(P)}\left\lceil \Delta\log^{*}N\right\rceil \right)\right)$
scheme. Moreover, if $P$ is $\Delta\log^{*}N$ close to a distribution
$\tilde{P}$, then the performance of the scheme on
$P$ is $O\left(\log\left(H(P)\right)+2^{\U(\tilde{P})}\left\lceil \Delta\log^{*}N\right\rceil \right)$.
\end{lemma}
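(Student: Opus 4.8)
The plan is to observe that correctness becomes immediate once $\epsilon=0$, and then to bound the expected encoding length term by term; the only step requiring real work is bounding $\sz(\A)$ by the capacity. \textbf{Correctness:} setting $\epsilon=0$ makes the threshold $2^{H(P)/\epsilon+2\Delta\log^{*}N+1}$ in the definition of $\elow$ effectively infinite, so $\elow(P,m)\neq\bot$ for every $P$ and every $m$. Hence the hypothesis of Lemma~\ref{low entropy: correctness} is met for all $\Delta$-close $P,Q$ and all $m$, giving $\dlow(Q,\elow(P,m))=m$; that lemma's proof also shows the encoder's chain $\A$ is a legal choice of $\A'$ for the decoder, so decoding is well defined. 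Thus $(\elow,\dlow)$ with $\epsilon=0$ is an error-free UCS for distance $\Delta$ over all of $\calp([N])$, and it remains only to bound the performance $\E_{m\from_P[N]}[|\elow(P,m)|]$.

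\textbf{Size of the encoder's chain.} Fix $P$ and a distribution $\tilde P$ with $\delta(P,\tilde P)\le\Delta\log^{*}N$ (for the first assertion of the lemma take $\tilde P=P$). Fix $m$, set $r=\lfloor-\log P(m)\rfloor$, let $\A=\langle A_0,\dots,A_f\rangle$, $f=2\lfloor\log^{*}N\rfloor-1$, be the chain built by $\elow$, and let $s=\sz(\A)=|A_f|$. Every $m'\in A_f$ satisfies $|\log(1/P(m'))-r|\le(f+1)\Delta+1$, and since $|\log P(m')-\log\tilde P(m')|\le\Delta\log^{*}N$, we get $|\log(1/\tilde P(m'))-r|\le(f+1)\Delta+1+\Delta\log^{*}N\le 3\Delta\log^{*}N+1$ (using $f+1\le 2\log^{*}N$). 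So the $\log(1/\tilde P)$-values of the elements of $A_f$ lie in a common interval of length $6\Delta\log^{*}N+2$. Partitioning that interval into $O(\lceil\Delta\log^{*}N\rceil)$ subintervals of length at most $1$ exhibits $A_f$ as a union of $O(\lceil\Delta\log^{*}N\rceil)$ unit sets of $\tilde P$, each of size at most $2^{\U(\tilde P)}$ by the definition of capacity. Hence $s\le O(2^{\U(\tilde P)}\lceil\Delta\log^{*}N\rceil)$, and this bound is uniform in $m$ (the covering argument is insensitive to where the interval is centered).

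\textbf{Expected length.} The encoder outputs $(s,r,\col(s,\A))$ in a fixed prefix-free code, so $|\elow(P,m)|=O(\log(s+2))+O(\log(r+2))+O(\log(\col(s,\A)+2))$. By Lemma~\ref{lem: chain coloring scheme}(1), $\col(s,\A)\le 2^{6(s+1)}\log^{(\Theta(\log^{*}N))}N=2^{O(s)}$, since iterating $\log$ about $\log^{*}N$ times drives the argument to $O(1)$; thus $\log(\col(s,\A)+2)=O(s)$ and also $\log(s+2)=O(s)$, so the first and third terms are $O(s)=O(2^{\U(\tilde P)}\lceil\Delta\log^{*}N\rceil)$ by the previous paragraph. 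For the $r$-term, $r\ge 0$ and $\E_{m\from_P[N]}[r]\le\E_{m\from_P[N]}[-\log P(m)]=H(P)$, so by concavity of $\log$, $\E_{m\from_P[N]}[\log(r+2)]=O(\log H(P))$. Summing over the three parts and using linearity of expectation gives $\E_{m\from_P[N]}[|\elow(P,m)|]=O(\log H(P)+2^{\U(\tilde P)}\lceil\Delta\log^{*}N\rceil)$; with $\tilde P=P$ this is the main claim, and with general $\tilde P$ it is the ``moreover'' clause.

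\textbf{Main obstacle.} The one subtle point is the size bound on $\A$: $A_f$ must be covered \emph{directly} by unit sets of $\tilde P$, rather than first relating $\U(P)$ to $\U(\tilde P)$ (which would multiply $2^{\U(\tilde P)}$ by an extra factor $\lceil\Delta\log^{*}N\rceil$ and make the bound quadratic rather than linear in $\Delta\log^{*}N$); and one must check that this bound on $s$ is uniform in $m$ so that it survives the expectation, which it is. The remaining estimates --- the $2^{O(s)}$ bound on $\col$ from Lemma~\ref{lem: chain coloring scheme}, the fact that $\log$ iterated $\Theta(\log^{*}N)$ times is $O(1)$, and the Jensen step for the $r$-term --- are routine.
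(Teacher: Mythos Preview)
Your proof is correct and follows essentially the same route as the paper: invoke Lemma~\ref{low entropy: correctness} for correctness once $\epsilon=0$ removes the threshold, bound the three components $r$, $s$, $\col(s,\A)$ separately using concavity of $\log$ for the $r$-term and Lemma~\ref{lem: chain coloring scheme} for the color, and handle the key step by covering $A_f$ with $O(\lceil\Delta\log^{*}N\rceil)$ unit sets of $\tilde P$ to get $s=O(2^{\U(\tilde P)}\lceil\Delta\log^{*}N\rceil)$. Your explicit remark that the bound on $s$ is uniform in $m$, and your observation that one must cover by unit sets of $\tilde P$ directly (rather than first comparing $\U(P)$ to $\U(\tilde P)$), are nice clarifications the paper leaves implicit.
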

\begin{proof}
When setting $\epsilon=0$, the encoder never outputs $\bot$.
Lemma~\ref{low entropy: correctness} already implies the correctness of the scheme.
The only remaining task is to analyze the performance of the scheme.

Recall, the output of the encoder has three components: $r$, $s$ and  $C(s,\A)$.
From linearity of expectation it suffices to analyze the expected length
of each component separately.

For a given word $m\in [N]$,
the first component
is $ $$r=\left\lfloor \log\frac{1}{P(m)}\right\rfloor$.
Its length is $|r|=O(\log\log\frac{1}{P(m)})$. Using the concavity
of the function $\log$ we can bound the expectation of $|r|$ as
follows:
\[
\E\left[\log\log\frac{1}{P(m)}\right]\leq\log\left(\E\left[\log\frac{1}{P(m)}\right]\right)=\log\left(H(P)\right)\;.
\]

Now consider the chain $\A$ with size $s$ and length $f=\log^{*}(N)-O(1)$
as define by the encoder.
The second component is the size $s$.
Clearly,
$|s|=O(\log s)$.

The third component is $C(s,\A)$. By Lemma~\ref{lem: chain coloring scheme},
 $C(s,\A) = \exp(s)$ so $|C(s,\A)|=O(s)$.

Hence the expected length of the last two components is bounded by $O(s)$. Let
$\tilde{P}\in\calp([N])$ be a distribution that is $\Delta \log^{*}N$-close to $P$.
To achieve the results it is enough
to show that the size $s$ of
the chain $\A$ associated with $P$ and $m$ is bounded by $O\left(2^{\U(\tilde{P})}\left\lceil \Delta\log^{*}N\right\rceil \right)$.

The size $s=\sz(\A)$ is the size of the following set,
\[
A=\left\{ m'\in[N]\mid\left|\log 1/P(m') - r \right|\leq2\left\lfloor \Delta\log^{*}N\right\rfloor +1\right\}
\;.
\]
We will show that this set can be covered by
$O(\left\lceil \Delta\log^{*}N\right\rceil )$
unit sets of $\tilde{P}$. This will yield an upper bound on $s$
of $O\left(2^{\U(\tilde{P})}\left\lceil \Delta\log^{*}N\right\rceil
\right)$
as required.

Let $k=3\left\lceil \Delta\log^{*}N\right\rceil +1$. Define $U_{-k},....,U_{k-1}$
as
\[
U_{i}=\left\{ m'\mid i\leq r+\log \tilde{P}(m')\leq i+1\right\} \;.
\]
Clearly the $U_{i}$s are unit sets of $\tilde{P}$. Moreover, their union is
the set
\[
\bigcup_{i=-k}^{k-1}U_{i}=\left\{ m'\mid\left|\log 1/\tilde{P}(m') - r \right|\leq3\left\lceil \Delta\log^{*}N\right\rceil +1\right\} \;.
\]
Let $m'\in A$. It remains to verify that $m'\in\bigcup_{i=-k}^{k-1}U_{i}$. Indeed,
\begin{eqnarray*}
\left|\log 1/\tilde{P}(m') - r\right| & \leq & \left|1/\log P(m') - r\right|+\Delta\log^{*}N\leq\\
 & \leq & 3\left\lceil \Delta\log^{*}N\right\rceil +1\;.
\end{eqnarray*}
Therefore, $|A| \leq \sum |U_i | =  O\left(2^{\U(\tilde{P})}\left\lceil \Delta\log^{*}N\right\rceil \right)$ as required.
\end{proof}

\fullnote{
To complete the proof of Theorem~\ref{thm:low entropy:restricted dist}, we will prove Lemma~\ref{lem:natural dist}.
The proof follows immediately from the next three claims.

\begin{claim}
Let $P\in\mathrm{Flat}_{N}$. Then $\U(P)\leq H(P)$.\end{claim}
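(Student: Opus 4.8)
The plan is to observe that flatness collapses the ``logarithmic spread'' of $P$ to zero, so that the entire support is itself a unit set and no unit set can be larger. Concretely, suppose $P\in\mathrm{Flat}_N$, so that there is a set $S\subseteq[N]$ with $P$ uniform on $S$; then $P(m)=1/|S|$ for every $m\in S$, and by the standard formula for the entropy of a uniform distribution $H(P)=\log|S|$.

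First I would recall that, by definition, a unit set $U$ of $P$ is required to be a subset of the support $S$ of $P$. Hence any unit set $U$ satisfies $|U|\le|S|=2^{H(P)}$. Second, I would note that $S$ itself is a legitimate unit set: for any $m_1,m_2\in S$ we have $\bigl|\log P(m_1)-\log P(m_2)\bigr|=0\le 1$, so the condition in the definition of a unit set is (trivially) met. Therefore the largest unit set of $P$ has size exactly $|S|=2^{H(P)}$.

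Finally, recalling that $\U(P)$ is defined as the minimal $c\in\R$ such that every unit set of $P$ has size at most $2^c$, the two bounds above show that $c=H(P)$ works (every unit set has size $\le 2^{H(P)}$) and is optimal (the unit set $S$ forces $c\ge H(P)$); in particular $\U(P)\le H(P)$, which is what the claim asks for. There is no real obstacle here — the only point to be careful about is the bookkeeping in the definitions, namely that unit sets live inside the support and that ``capacity'' is the exponent controlling unit-set size rather than the size itself — so this claim serves mainly as the base case for the subsequent (genuinely more involved) estimates for geometric and binomial distributions in Lemma~\ref{lem:natural dist}.
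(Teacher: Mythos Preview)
Your proof is correct and follows essentially the same argument as the paper: both observe that any unit set is contained in the support $S$, hence has size at most $|S|=2^{H(P)}$. Your additional observation that $S$ itself is a unit set (so that in fact $\U(P)=H(P)$) is true but not needed for the claimed inequality.
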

\begin{proof}
Let $S\subseteq[N]$ be the support of $P$. Clearly, $H(P)=\log|S|$.
For every $U\subseteq S$ that is a unit set of $P$,
\[
|U|\leq|S|=2^{H(P)}\;.
\]
Thus, $\U(P)\leq H(P)$.\end{proof}

\begin{claim}\label{cla:natural dist: Geometric}
Let $P\in\mathrm{Geo}_{N}$. Then $\U(P)\leq H(P)+O(1)$.\end{claim}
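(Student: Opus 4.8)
The plan is to reduce the claim to two elementary facts about a geometric distribution: an exact formula for its largest probability, and the standard bound $H(P)\geq\log(1/p_{\max})$, which holds because $\log(1/P(m))\geq\log(1/p_{\max})$ for every $m$. Fix a permutation $\pi$ and a parameter $\alpha\in(0,1)$ witnessing that $P$ is geometric and set $a=\log(1/\alpha)>0$. Since $P(\pi(k))=\alpha^{k-1}P(\pi(1))$, the values $P(\pi(1))>P(\pi(2))>\cdots$ are strictly decreasing, so $p_{\max}=P(\pi(1))$; summing the geometric series gives $p_{\max}=(1-\alpha)/(1-\alpha^{N})$. Moreover $\log P(\pi(k))=\log p_{\max}-(k-1)a$, so the log-probabilities form an arithmetic progression with common difference $-a$.

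Next I would describe the unit sets explicitly. For indices $j,k$ we have $|\log P(\pi(j))-\log P(\pi(k))|=|j-k|\cdot a$, so a unit set is precisely a set of indices from $[N]$ whose pairwise differences are all at most $1/a$; the largest such set is an initial block of indices, of size $U_{\max}=\min\{N,\ \lfloor 1/a\rfloor+1\}$. Hence $\U(P)=\log U_{\max}$, and it remains to prove $U_{\max}\leq 2\cdot 2^{H(P)}$.

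For this last step I would write $2^{H(P)}\geq 1/p_{\max}=(1-\alpha^{N})/(1-\alpha)=\sum_{k=0}^{N-1}\alpha^{k}$ and lower-bound this partial geometric sum by keeping only its first $U_{\max}$ terms (legitimate because $U_{\max}\leq N$ and all terms are positive): $\sum_{k=0}^{U_{\max}-1}\alpha^{k}\geq U_{\max}\cdot\alpha^{U_{\max}-1}$. Since $U_{\max}-1\leq\lfloor 1/a\rfloor\leq 1/a$ and $0<\alpha<1$, we have $\alpha^{U_{\max}-1}\geq\alpha^{1/a}=2^{-1}$, so $2^{H(P)}\geq U_{\max}/2$ and therefore $\U(P)=\log U_{\max}\leq H(P)+1$. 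The only point needing a little care is the exponent bookkeeping here, namely that $U_{\max}-1\leq 1/a$ whether the cap $\min\{N,\lfloor 1/a\rfloor+1\}$ is attained by $N$ or by $\lfloor 1/a\rfloor+1$; since this bound holds uniformly, no genuine case split is required and the rest is routine.
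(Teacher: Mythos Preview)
Your proof is correct and takes a genuinely different route from the paper's. Both arguments first identify that the largest unit set has size $U_{\max}=\min\{N,\lfloor 1/a\rfloor+1\}$ where $a=\log(1/\alpha)$, but then they diverge. The paper splits into the case $\alpha^{N}\geq 1/2$ (where $H(P)\geq\log N-1$ trivially suffices) and the case $\alpha^{N}<1/2$, in which it computes $H(P)$ via an explicit closed-form formula and compares it to $\log(1/(1-\alpha))$, after passing through the approximation $1/a=O(1/(1-\alpha))$. You instead use only the one-line lower bound $H(P)\geq\log(1/p_{\max})$, recognize $1/p_{\max}=\sum_{k=0}^{N-1}\alpha^{k}$, and then keep the first $U_{\max}$ terms and use $\alpha^{1/a}=1/2$ to get $2^{H(P)}\geq U_{\max}/2$. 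Your argument is shorter, avoids the case split and the explicit entropy formula, and delivers the sharp constant $\U(P)\leq H(P)+1$ rather than just $O(1)$; the paper's approach, on the other hand, makes the dependence on the parameter $1-\alpha$ more visible. One cosmetic remark: your sentence ``the largest such set is an initial block of indices'' should say ``a block of consecutive indices''---any block works, not only the initial one---but this does not affect the size count.
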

\begin{proof}
Let $\alpha\in(0,1)$ be such that for all $k\in[N-1]$, $P(k+1)=\alpha P(k)$.
We will assume that $\alpha^N < \frac{1}{2}$. Otherwise, 
$$H(P) \geq \log(N)-1 \geq \U(P) -1\;,$$
and we are done.

Let $U$ be the maximal unit set of $P$, i.e $|U|=u=2^{\U(P)}$.
Let $k\in U$ be the element with the highest probability in $U$.
From maximality of $U$ we can assume that $U=\{k,k+1,...,k+u-1\}$.
Calculating,
\[
1\geq\left|\log P(k)-\log P(k+u-1)\right|=(u-1)\log\frac{1}{\alpha}
\]
Therefore, $u=\frac{1}{\log\frac{1}{\alpha}}+1=O(\frac{1}{1-\alpha})$.
To achieve the result it is enough to show that $\frac{1}{1-\alpha}\leq2^{H(P)+O(1)}$,
i.e $H(P)\geq\log\frac{1}{1-\alpha}-O(1)$. Calculating the entropy, indeed,

\[
H(P)=\log\left(\frac{1-\alpha^{N}}{1-\alpha}\right)+\left(\frac{1-N\alpha^{N-1}}{1-\alpha^{N}}\right)\alpha\log\frac{1}{\alpha}+\left(\frac{1-\alpha^{N-1}}{1-\alpha^{N}}\right)\alpha^{2}\cdot\frac{\log\frac{1}{\alpha}}{1-\alpha}\geq\log\left(\frac{1}{1-\alpha}\right)-O(1)\;,
\]
as required.\end{proof}

\begin{claim}
Let $P\in\mathrm{Bin}_{N}$. Then $\U(P)\leq H(P)+O(1)$. \end{claim}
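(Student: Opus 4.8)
Relabelling the universe by the permutation $\pi$ changes neither $\U(P)$ nor $H(P)$, so I would assume $P$ is literally the binomial law $b(k)=\binom{N}{k}p^{k}(1-p)^{N-k}$ on $\{0,1,\dots,N\}$ and write $\sigma^{2}=Np(1-p)$ for its variance. The proof would consist of two independent estimates: (A) every unit set of $P$ has size $O(\sigma+1)$, and (B) $H(P)\ge \log_{2}(\sigma+1)-O(1)$. Granting these, $\U(P)=\log_{2}(\max_{U}|U|)\le \log_{2}\bigl(O(\sigma+1)\bigr)=\log_{2}(\sigma+1)+O(1)\le H(P)+O(1)$, which is the claim. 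Both halves rest on the unimodality of $b$: the ratio $\rho(k):=b(k{+}1)/b(k)=\tfrac{(N-k)p}{(k+1)(1-p)}$ is strictly decreasing, so $b$ increases to a peak index $k^{*}$ with $|k^{*}-Np|\le 1$ and then decreases.

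\textbf{Step A.} By the reflection $b_{N,p}(k)=b_{N,1-p}(N-k)$ (which preserves $\sigma$) it suffices to bound a unit set lying entirely on the right arm $\{k\ge k^{*}\}$. If $U\subseteq[a,b]$ is such a set then $b(a)/b(b)\le 2$, and since $-\log_{2}\rho(\cdot)$ is nonnegative and increasing there, $1\ge \log_{2}\tfrac{b(a)}{b(b)}=\sum_{k=a}^{b-1}(-\log_{2}\rho(k))\ge (b-a)\cdot(-\log_{2}\rho(a))$. A direct computation gives $1/\rho(k^{*}+j)\ge 1+\Omega(j)/\sigma^{2}$ for $j\ge 1$ on the arm, so $-\log_{2}\rho(k^{*}+j)\ge c\cdot\min(j/\sigma^{2},\,1)$ for an absolute constant $c$. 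Plugging this in and splitting into the case where $a$ is at least $\sigma$ past the peak (the single term $-\log_{2}\rho(a)$ already forces $b-a=O(\sigma)$) and the case where $a$ is within $\sigma$ of the peak (then $b-a\le(b-k^{*})+O(\sigma)$, the first summand controlled by the previous case at $k^{*}$) yields $b-a=O(\sigma+1)$. For a general unit set $U$, setting $M=\max_{k\in U}b(k)$ gives $M/2\le b(k)\le M$ for all $k\in U$, so $U\cap\{k\le k^{*}\}$ and $U\cap\{k>k^{*}\}$ are each an interval on a single monotone arm over which $\log_{2}b$ varies by at most $1$; hence each has size $O(\sigma+1)$ and $|U|=O(\sigma+1)$.

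\textbf{Step B.} Let $U^{*}=\{k:\ b(k)\ge b(k^{*})/2\}$, a unit set. Running the arm estimate the other way---using $-\log_{2}\rho(k^{*}+j)=O(j/\sigma^{2})$ for $0\le j\le\sigma$---shows $b(k^{*}+j)\ge b(k^{*})/2$ for all $j$ up to a small constant multiple of $\sigma$, so $|U^{*}|\ge\Omega(\sigma+1)$. On the other hand $1=\sum_{k}b(k)\ge\sum_{k\in U^{*}}b(k)\ge |U^{*}|\cdot b(k^{*})/2$, hence $\max_{k}b(k)=b(k^{*})=O\bigl(1/(\sigma+1)\bigr)$ --- this is just the standard bound on the peak of the binomial pmf, and could alternatively be cited. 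Therefore $H(P)=\sum_{k}b(k)\log_{2}\tfrac{1}{b(k)}\ge\log_{2}\tfrac{1}{\max_{k}b(k)}\ge\log_{2}(\sigma+1)-O(1)$, which combined with Step A completes the proof.

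\textbf{Main obstacle.} The delicate point is the uniform $O(\sigma+1)$ bound of Step A: one must rule out a large unit set living deep in a tail, where $b$ is tiny (so the factor-$2$ band is geometrically wide) yet the logarithmic slope of $b$ is large (so the band is in fact narrow). Extracting clean constants from the estimate for $-\log_{2}\rho(k^{*}+j)$ over the whole range $0\le j\le N(1-p)$, and handling the degenerate regimes $p\to0$ or $p\to1$ where $\sigma=O(1)$ and the statement degenerates to $\U(P)=O(1)\le H(P)+O(1)$, is the part requiring care; everything else is bookkeeping.
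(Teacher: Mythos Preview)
Your approach is correct and shares the same skeleton as the paper's: bound every unit set by $O(\sigma)$ where $\sigma=\sqrt{Np(1-p)}$, then show $\sigma\le 2^{H(P)+O(1)}$. The executions differ in two respects worth noting.

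For Step~A, the paper partitions $[N]$ into three regions---a central window $[pN-\sqrt{pN},\,pN+\sqrt{pN}]$ of size $O(\sqrt{pN})$, and two tails---and observes that throughout each tail the ratio $P(k{+}1)/P(k)$ is bounded away from $1$ by a \emph{fixed} margin $\Theta(1/\sqrt{pN})$, which reduces the tail count directly to the geometric case already proved. Your two-arm decomposition via unimodality with the position-dependent estimate $-\log_{2}\rho(k^{*}+j)\ge c\min(j/\sigma^{2},1)$ reaches the same bound but is a little more work; the paper's version trades your case split for a coarser but one-line reduction.

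For Step~B, the paper simply quotes the asymptotic formula $H(P)=\tfrac12\log(2\pi e\,Np(1-p))$ and reads off $\sqrt{pN}=2^{H(P)+O(1)}$ (implicitly using the $p\leftrightarrow 1-p$ symmetry to take $p\le 1/2$). Your derivation of $H(P)\ge\log_{2}(\sigma+1)-O(1)$ via the peak-probability bound $\max_{k}b(k)=O(1/(\sigma{+}1))$ is more self-contained and avoids relying on an asymptotic identity as if it were exact; this is arguably cleaner, especially in the small-$\sigma$ regime you flag as the main obstacle.
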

\begin{proof}

Let $p \in (0,1)$ be such that $P(k) = {{N}\choose{k}} p^{k} {(1-p)}^{n-k}$.
Let $U$ be a unit set of $P$ with size $2^{\U(P)}$. We will partition
the codewords in $[N]$ into three regions and bound the number of
codewords from each region in $U$. The regions are:
\begin{enumerate}
\item $\{k \in [N] \mid k > pN + \sqrt{pN} \}$,
\item $\{k \in [N] \mid k <  pN - \sqrt{pN} -1 \}$
\item and  $\{k \in [N] \mid pN - \sqrt{pN} -1 \leq  k \leq pN + \sqrt{pN} \}$.
\end{enumerate}
We will show that in any region,
the number of elements from the region in $U$ is bounded by $O(\sqrt{pN})$.
This will yield a total bound of $|U|=O(\sqrt{pN})$.
The entropy of $P$ is $H(P)=\frac{1}{2}\log\left(2\pi eNp(1-p)\right)$. Therefore  $\sqrt{pN}=2^{H(P)+O(1)}$ and the result follows.

First we consider elements $k$ from the first region. Let $u_{1}$ be the
number of words in $U$ from this region. In this case
\begin{eqnarray*}
\frac{P(k+1)}{P(k)} & = & \frac{{N \choose k+1}p^{k+1}(1-p)^{N-(k+1)}}{{N \choose k}p^{k}(1-p)^{N-k}}=\frac{(N-k)}{(k+1)}\cdot\frac{p}{1-p}\leq\frac{(N-k)}{k}\cdot\frac{p}{1-p}=\left(\frac{N}{k}-1\right)\cdot\frac{p}{1-p}\\
 & \leq & \left(\frac{N}{pN+\sqrt{pN}}-1\right)\cdot\frac{p}{1-p}\leq1-\frac{1}{\sqrt{pN}+1}\;.
\end{eqnarray*}
In a similar way to the proof of Claim~\ref{cla:natural dist: Geometric}, we can conclude that $u_{1}$
is bounded by $O(\sqrt{pN}+1)=O(\sqrt{pN})$

Now consider element $k$ in the second region, similarly:
\begin{eqnarray*}
\frac{P(k+1)}{P(k)} & = & \frac{(N-k)}{(k+1)}\cdot\frac{p}{1-p}\geq\frac{N-(k+1)}{k+1}\cdot\frac{p}{1-p}=\left(\frac{N}{k+1}-1\right)\cdot\frac{p}{1-p}\\
 & \geq & \left(\frac{N}{pN-\sqrt{pN}}-1\right)\cdot\frac{p}{1-p} \geq 1+\frac{1}{\sqrt{pN}}\;.
\end{eqnarray*}
Therefore $u_{2}$, the number of elements from the second region
in $U$, is bounded by $O(\sqrt{pN})$

Clearly ,$u_3$, the number of elements from $U$ in the last region, is bounded by the size of the region. So $u_{3}=O(\sqrt{pn})$.

Combining the above, we get
\[
2^{\U(P)}=|U|=\sum_{i=1}^3 u_{i}=O\left(\sqrt{pn}\right)=2^{H(P)+O(1)}
\]
as required.\end{proof}

}
\subsection{Dependence of communication on entropy}
\label{ssec:linear}

In the previous sections we gave a scheme with performance that is
exponential in the entropy.
This scheme is error-free for some natural distributions and
had positive error for general
distributions.
The next lemma shows that if we cannot find a scheme with performance that are linear in the entropy,
then any scheme that we will find must have positive error for some distributions.

\begin{lemma}
\label{lem:linear}
For every non-decreasing
function $L:\R^+ \to \R^+$ there exists a constant $c = c_L$
such that the following holds:
If there exists $\left(\Delta,L(H(P))\right)$-UCS for some $\Delta > 0$,
then there exists a $\left(\Delta,c\cdot (1+H(P))\right)$-UCS.
\end{lemma}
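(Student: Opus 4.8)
The plan is to reduce compressing a general message to a number of invocations of the given scheme --- roughly one per ``bit of entropy'' --- each invocation being on a distribution of \emph{constant} entropy. The point of departure is that, since $L$ is non-decreasing, there is a constant $\tau = \tau_L$ such that every distribution $R$ with $H(R) \le \tau$ can be handled by the hypothesized $(\Delta, L(H(\cdot)))$-UCS at expected cost $L(\tau)$, a constant depending only on $L$. So it suffices to write the compression of $m$ under $P$ as a sequence of $O(\log(1/P(m)))$ genuine uncertain-compression sub-instances, each of entropy at most $\tau$: the total cost is then $O(\log(1/P(m)))$ --- with the hidden constant depending on $L$ --- in the worst case over $m$, and $O(\mathbf{E}_{m \sim P}[\log(1/P(m))]) = O(H(P))$ in expectation, which is the desired bound up to the additive $1$.

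Concretely, the new encoder first sends $r = \lceil \log_2(1/P(m)) \rceil$ in a prefix-free code; since $\mathbf{E}_{m \sim P}[\log(1+r)] \le \log(1 + H(P))$ by concavity this contributes only $O(1 + H(P))$. Conditioned on $r$, both parties look at the ``probability slab'' consisting of messages of probability roughly $2^{-r}$ (with $\Delta$-dependent slack); this slab is almost flat, so one may partition it into sub-cells of at most $2^{\tau}$ elements, partition the set of sub-cells into super-cells of at most $2^{\tau}$ sub-cells, and so on, obtaining a laminar family of depth $O(r / \tau)$ in which every node, together with its (at most $2^{\tau}$) children, carries a distribution of entropy at most $\tau + O(1)$. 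The encoder then peels levels from coarsest to finest: having identified the level-$(j+1)$ cell containing $m$, it uses the given scheme once to tell the decoder which of its children is the level-$j$ cell of $m$, at $O(1)$ cost with constant depending on $L$. Summing over the $O(r/\tau)$ levels gives conditional cost $O(r)$, hence expected total length $O(1 + H(P))$ up to a constant depending on $L$; correctness is inherited level by level from the elementary fact that a common coarsening of two $\Delta$-close distributions is again $\Delta$-close.

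The step I expect to be the real obstacle is making the encoder's and decoder's laminar families \emph{compatible}. A partition that adapts to $P$ necessarily depends on $P$ (respectively $Q$), and for $\Delta$-close $P, Q$ these two partitions can disagree on every ``boundary'' element whose probability sits near a threshold, so the literal message ``$m$ lies in my $k$-th child'' need not even be meaningful to the decoder; worse, restricting a distribution to a sub-cell and renormalizing turns $\Delta$-closeness into only $2\Delta$-closeness, so the sub-instances are not immediately legal inputs to the given scheme. These are exactly the difficulties the chain-coloring scheme of Lemma~\ref{lem: chain coloring scheme} --- and the isolating hash families of Section~\ref{ssec:simple-ucs} --- are built to absorb: cut with $\Delta$-sized buffers so the decoder's slab and cells lie within an $S^{1}(\cdot)$-neighborhood of the encoder's, have the decoder search over the $O(1)$ nearby cells, and replace each ``name the child'' step by an isolating-hash encoding whose family size stays independent of $N$ because each level contributes only $O(\tau)$ to the relevant weight. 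Getting this bookkeeping to close --- keeping the per-level overhead $O(1)$ and every hash family independent of $N$ --- is where the work is; the telescoping of the previous paragraph is then routine.
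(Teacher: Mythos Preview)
Your proposal is candid about its own gap, and that gap is genuine, not bookkeeping. To invoke the hypothesized $(\Delta,L(H(\cdot)))$-UCS on a sub-instance you need the encoder's sub-distribution $P'$ and the decoder's $Q'$ to satisfy $\delta(P',Q')\le\Delta$; but your sub-instances are restrictions to cells that are themselves functions of $P$ (resp.\ $Q$), so in general $P'$ and $Q'$ are not even supported on the same set, and when they are, renormalization turns $\Delta$-closeness into $2\Delta$-closeness as you note. Having the decoder ``search over the $O(1)$ nearby cells'' is outside the contract of the given $(E,D)$: the decoder in a UCS receives one distribution and one string, not a menu of candidate distributions. And your hope that the isolating-hash families stay independent of $N$ is not supported by Lemma~\ref{lem:isol}: those families have size $2^\ell\log N$, and the $\log N$ is exactly what the chain-coloring machinery of Section~\ref{ssec:graph-ub} labors (only partially successfully) to suppress. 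So the plan does not close as written, and it is not clear it can be made to close without re-deriving much of Section~\ref{ssec:logstar-ucs}.

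The paper's argument sidesteps all of this with a single trick and a single call to the given scheme. Given $P$, set $M=\lceil H(P)\rceil$ and define the $M$-concentrated distribution $P_M$ by $P_M(1)=1-1/M+P(1)/M$ and $P_M(i)=P(i)/M$ for $i\ne 1$. Encode $m$ as $E'(P,m)=(M,E(P_M,m))$; decode $(M,y)$ as $D(Q_M,y)$. Two easy checks: first, $P_M$ and $Q_M$ are still $\Delta$-close, since for $i\ne 1$ the ratio $P_M(i)/Q_M(i)=P(i)/Q(i)$ is unchanged, and for $i=1$ adding the common constant $1-1/M$ to both numerator and denominator only moves the ratio toward $1$; second, $H(P_M)\le 3$, so the given scheme spends at most $L(3)$ in expectation under $P_M$. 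Since $P(m)\le M\cdot P_M(m)$ for every $m$, the expectation under $P$ is at most $M$ times that, namely $M\cdot L(3)\le (1+H(P))\,L(3)$; adding the prefix-free cost of $M$ gives $c\,(1+H(P))$ with $c=L(3)+2$. No partition, no recursion, no compatibility issue.
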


\begin{proof}
We will prove the lemma for $c = L(3) + 2$.
Let $(E,D)$ be the $(\Delta,L(H(P))$-UCS. We will construct
a UCS $(E',D')$ that has the required performance.

For every distribution $P \in \calp([N])$ and real number $M > 1$,
we introduce a notion of an $M$-concentrated version of $P$, denote
$P_M$, to be: $P_M(1) = 1 - 1/M + (1/M)\cdot P(1)$ and $P_M(i) = (1/M)\cdot P(i)$
for $i > 1$. So $P_M$ is mostly focussed on a single point and so
has small entropy, but it provides enough variability to capture the
variation of $P$. In what follows, we will apply $(E,D)$ to the distributions
$P_M$ and $Q_M$ for an appropriate choice of $M$, chosen to reduce the
entropy of $P_M$ to be a constant and this will give
the schemes $E'$ and $D'$.

\paragraph{The new scheme $(E',D')$:} On input $P \in \calp([N])$ and $m \in [N]$,
$E'(P,m)$ is computed as follows: Let $M = \left\lceil H(P) \right\rceil$.
Then $E'(P,m) = (M,E(P_M,m)).$

On input $Q$ and received string $y' = (M,y)$ the decoding
$D'(Q,y') = D(Q_M,y)$.

In what follows we argue that this is a valid zero-error UCS for
uncertainty parameter $\Delta$, with performance $c \cdot H(P)$.
We start by proving its validity.

\begin{claim}
For every pair $P,Q \in \calp([N])$ such that $\delta(P,Q) \leq \Delta$,
and for every $m \in [N]$ we have $D'(Q,E'(P,m)) = m$.
\end{claim}

\begin{proof}
Fix $M = \lceil H(P) \rceil$. Since $E'(P,m) = E(P_M,m)$
and $D'(Q,(M,y)) = D(Q_M,y)$, it suffices to prove that
$P_M$ and $D_M$ are $\Delta$-close, since then we
can use the correctness of $(E,D)$ on $P_M$ and $Q_M$
to conclude $D(Q_M,E(P_M,m)) = m$. Below we verify that
$P_M$ and $Q_M$ are $\Delta$-close.

First we consider $m \in [N] \setminus \{1\}$.
For such $m$ we have $P_{M}(m)=\frac{1}{M}P(m)$
and $Q_{M}(m)=\frac{1}{M}Q(m)$ and so
$P_{M}(m)/Q_{M}(m) = P(m)/Q(m)$.
So $|\log P_M(m)/Q_M(m)| = |\log P(m)/Q(m)| \leq \Delta$.

Now, consider $m=1$. In this case
$P_{M}(m)=\left(\frac{M-1}{M}\right)+\left(\frac{1}{M}\right)\cdot P(1)$
and $Q_{M}(m)=\left(\frac{M-1}{M}\right)+\left(\frac{1}{M}\right)\cdot Q(1)$.
Assume $P(1) \geq Q(1)$ (the other case is similar)
and so $0 \leq \log P(1)/Q(1) \leq \Delta$.
On the one hand we have $P_M(1) \geq Q_M(1)$
and on the other hand we have $P_M(1)/Q_M(1) \leq P(1)/Q(1)$ (which holds for
every $M > 0$).
It follows that $0 \leq \log P_M(1)/Q_M(1) \leq \log P(1)/Q(1) \leq \Delta$.

It follows that $\delta(P_M,Q_M) \leq \Delta$ and the claim follows.
\end{proof}

It remains to analyze the performance of the scheme.

\begin{claim}
For every distribution $P \in \calp([N])$,
we have
$\E\left[|E'_{m\sim_{P}[N]}(P,m)|\right] \leq c\cdot H(P)$.
\end{claim}

\begin{proof}
Recall that the encoding of $m \in [N]$ is the pair
$(M,E(P_M(m))$ where $M = \lceil H(P) \rceil$. It
follows that the first part the encoding is always
of length at most $2\cdot (1 + \log H(P))$
(allowing for prefix free encodings and rounding up of $H(P)$ to
its ceiling). We crudely bound the above by $2(1 + H(P))$.

We turn to the length of the second part, i.e., $E(P_M(m))$.
We first show that
$\E\left[|E_{m\sim_{P}[N]}(P_M,m)|\right] \leq M \cdot
\E\left[|E_{m\sim_{P_M}[N]}(P_M,m)|\right]$.
We then bound
$\E\left[|E_{m\sim_{P_M}[N]}(P_M,m)|\right]$ by $L(3)$ thus giving us
that total expected length of the encoding $\E_{m \sim_P}[ |E'(P,m) | ] \leq
(L(3) + 2)\cdot (1+H(P)) = c(1+H(P))$.

We start by showing the first step. We have
\begin{eqnarray*}
\lefteqn{\E_{m\sim_{P_{M}}[N]}\left[|E(P_{M},m)|\right]} \\
 & = & \frac{1}{M}\E_{m\sim_{P}[N]}\left[|E(P_{M},m)|\right]+\left(1-\frac{1}{M}\right)\E_{m\sim_{P}[N]}\left[|E(P_{M},1)|\right]\\
 & \geq & \frac{1}{M} \E_{m\sim_{P}[N]}\left[|E(P_{M},m)|\right].
\end{eqnarray*}
It follows that
$\E_{m\sim_{P}[N]}\left[|E(P_M,m)|\right]\leq
M\E_{m\sim_{P_{M}}[N]}\left[|E(P_{M},m)|\right]$ as asserted.

By the performance of $E$ on $P_M$, we have
$\E_{m\sim_{P_{M}}[N]}\left[|E(P_{M},m)|\right] \leq L(H(P_M))$.
So it suffices to show $H(P_M) \leq 3$. This is straightforward from the
definition of $P_M$ and the choice of $M$.
We have
\begin{eqnarray*}
H(P_M) & = & \sum_{m \in [N]} P_M(m) \log 1/P_M(m) \\
& \leq & (1 - 1/M) \log 1/P_M(1) +
(1/M) \cdot \sum_{m \in [N]} P(m) \log M/P(m) \\
& \leq & 1 + 1/M \cdot (H(P) + \log M) \\
& & \mbox{(Using $P_M(1) \geq 1/2$ if $M\geq 2$ and $1-1/M = 0$ otherwise.)} \\
& \leq & 1 + 1 + \log M/M \\
& \leq & 3
\end{eqnarray*}
as required.

The claim follows and so does the lemma.
\end{proof}
\end{proof}

\bibliographystyle{plain}
\bibliography{coding}

\end{document}